\documentclass{mscs}

\usepackage{mathpartir}
\usepackage{xargs}

\usepackage{harvard}
\bibliographystyle{agsm}
\citationmode{abbr}

\usepackage{authblk}

\usepackage{xparse}

\usepackage[a4paper]{geometry}
\usepackage{nameref,hyperref}
\usepackage{url}
\usepackage[english]{babel}
\usepackage[utf8]{inputenc}
\usepackage{amsmath,stmaryrd,amsfonts,amssymb}
\usepackage{thmtools}

\usepackage[usenames,dvipsnames]{xcolor}
\hypersetup{
 linktocpage,
 colorlinks,
 citecolor=BlueViolet,
 filecolor=red,
 linkcolor=Blue,
 urlcolor=BrickRed
}

\usepackage{tikz-cd}

\newenvironment{diagram}{\begin{tikzcd}[sep=large]}{\end{tikzcd}}
\newenvironment{largediagram}{\begin{tikzcd}[sep=huge]}{\end{tikzcd}}

\newenvironment{proofof}[1]
{\begin{proof}[Proof of {#1}]}
{\end{proof}}
\newenvironment{proofsketch}
{\begin{proof}[Proof (sketch)]}
{\end{proof}}

\DeclareSymbolFont{sfoperators}{OT1}{cmss}{m}{n}
\DeclareSymbolFontAlphabet{\mathsf}{sfoperators}

\newtheorem{lemma}{Lemma}[section]
\newtheorem{proposition}{Proposition}[section]
\newtheorem{corollary}{Corollary}[section]
\newtheorem{theorem}{Theorem}[section]

\newtheorem{definition}{Definition}[section]

\makeatletter
\def\operator@font{\mathgroup\symsfoperators}
\makeatother

\DeclareDocumentCommand{\hom}{o m m}{\ensuremath{\operatorname{Hom}_{\IfValueT{#1}{#1}}\left(#2,#3\right)}}

\DeclareDocumentCommand{\triple}{m m o o}{
  \IfValueTF{#3}
  {\ensuremath{#1 \overset{#2}{\longrightarrow} #3 \overset{#4}{\longrightarrow} \NN}}
  {\ensuremath{#1 \overset{#2}{\longrightarrow} \Delta_{#2} \overset{\delta_{#2}}{\longrightarrow} \NN}}}

\newcommand{\isetsep}{\;\ifnum\currentgrouptype=16 \middle\fi|\;}
\newcommand{\id}[1]{\ensuremath{\text{id}_{#1}}}
\newcommand{\stream}{\ensuremath{\operatorname{Str}}}
\newcommand{\gstream}[1]{\ensuremath{\operatorname{Str^{#1}}}}
\newcommand{\ghead}[1]{\ensuremath{\operatorname{head^{#1}}}}
\newcommand{\gtail}[1]{\ensuremath{\operatorname{tail^{#1}}}}
\newcommand{\tail}{\ensuremath{\operatorname{tail}}}
\newcommand{\nxt}[1]{\ensuremath{\operatorname{next}^{#1}}}
\newcommand{\NN}{\ensuremath{\mathbb{N}}}


\newcommand{\CC}{\ensuremath{\mathbb{C}}}

\newcommand{\DD}{\ensuremath{\mathbb{D}}}

\newcommand{\Cl}{\ensuremath{\mathcal{C}}}


\newcommand{\cwffont}[1]{\mathrm{#1}}

\newcommand{\cat}[1]{\mathcal{#1}}
\newcommand{\Fam}[2][\BaseCat]{#1(#2)}
\newcommand{\ElCwF}[3][\BaseCat]{#1(#2 \vdash #3)}
\newcommand{\p}[0]{\cwffont{p}}
\newcommand{\q}[0]{\cwffont{q}}
\newcommand{\compr}[2]{#1.#2}
\newcommand{\cpair}[2]{\pair{#1}{#2}}

\newcommandx{\cpairmore}[3][1={\kappa_1},2={\kappa_n}]{\left\langle{#1}, \dots, {#2}, #3\right\rangle}

\newcommand{\PiSem}[2]{\Pi(#1, #2)}
\newcommand{\SigmaSem}[2]{\Sigma(#1, #2)}

\newcommand{\ev}{\cwffont{ev}}
\newcommand{\clockfam}{\cwffont{clock}}
\newcommand{\laterfam}[1]{\later{#1}}
\newcommand{\nextel}[1]{\cwffont{next}^{#1}}
\newcommand{\fixel}[2]{\cwffont{fix}^{#1}(#2)}
\newcommand{\inel}[2]{\cwffont{in}_{#1,#2}}

\newcommand{\lambdael}[1]{\lambda(#1)}
\newcommand{\idfam}[3]{\cwffont{Id}_{#1}(#2,#3)}

\newcommand{\delayedEl}[3]{\BaseCat(#1 \rightarrowtriangle^{#3}#2)}

\newcommand{\inv}[1]{#1^{-1}}

\newcommand{\Clproj}[2]{\pi_{#1, #2}}

\newcommand{\Clin}[2]{\cwffont{in}_{#1, #2}}

\newcommand{\Piel}[3]{\code{\Pi^{#1}}(#2, #3)}
\newcommand{\Sigmael}[3]{\code{\Sigma^{#1}}(#2, #3)}

\newcommand{\forallel}[2]{\code{\forall^{#1}}(\lambdael{#2})}

\newcommand{\laterel}[1]{\code{\later{#1}}}


\newcommand{\gr}[1]{\ensuremath{\mathsf{GR}(#1)}}
\newcommand{\grold}[1]{\ensuremath{\mathfrak{G}\mathfrak{R}\left(#1\right)}}


\newcommand{\alwaysapp}[2]{\ensuremath{#1\!\left[#2\right]}} 

\newcommand{\laterbare}{\ensuremath{\operatorname{\blacktriangleright}}}
\DeclareDocumentCommand{\later}{m o m}{
  \IfNoValueTF{#2}
  {\operatorname{\overset{#1}\laterbare}#3}
  {\operatorname{\overset{#1}\laterbare}#2 . #3}}

\newcommand{\dsubst}[4]{\ensuremath{#2 : #3\rightarrowtriangle^{#1} #4}}
\newcommand{\hrt}[1]{\left[ #1 \right]}

\newcommand{\term}[1]{\ensuremath{\operatorname{\mathsf{#1}}}}

\DeclareDocumentCommand{\latercode}{m m}{
  {\operatorname{\overline\laterbare^{#1}} #2}}
\newcommand{\forallcode}{\ensuremath{\overline{\forall}}}
\newcommand{\delayapp}[1]{\ensuremath{\mathbin{\circledast^{#1}}}}

\DeclareDocumentCommand{\next}{ m o m }{
  \IfNoValueTF{#2}
  {\term{next}^{#1}#3}
  {\term{next}^{#1} #2 . #3}}

\newcommand{\prev}[1]{\ensuremath{\term{prev}#1}}
\newcommand{\force}{\ensuremath{\term{force}}}
\newcommand{\depprod}[3]{\ensuremath{{\textstyle\prod\left(#1 : #2\right) . #3}}}
\newcommand{\depsum}[3]{\ensuremath{{\textstyle\sum\left(#1 : #2\right) . #3}}}
\newcommand{\idty}[3]{\ensuremath{{\mathsf{Id}_{#1}\left(#2, #3\right)}}}
\newcommand{\refl}[2]{\ensuremath{{\term{refl}_{#1}#2}}}
\newcommand{\depprodcode}[4]{\ensuremath{{\textstyle\overline{\prod}_{#1}\left(#2 : #3\right) . #4}}}
\newcommand{\depsumcode}[4]{\ensuremath{{\textstyle\overline{\sum}_{#1}\left(#2 : #3\right) . #4}}}

\newcommand{\Nat}{\ensuremath{\mathbf{N}}}

\newcommand{\timescode}{\overline\times}
\newcommand{\gstreamcode}[2]{\ensuremath{\overline{\operatorname{Str}}^{#1}}(#2)}

\newcommand{\fv}{\mathsf{fv}}


\newcommand{\univ}[1]{\ensuremath{\operatorname{U}_{#1}}}
\newcommand{\elems}[1]{\ensuremath{\operatorname{El}_{#1}}}
\newcommand{\univin}[2]{\term{in}_{#1,#2}}

\newcommand{\clocktype}{\ensuremath{\mathsf{clock}}}


\newcommand{\pair}[2]{\ensuremath{\left\langle #1, #2 \right\rangle}}

\newcommand{\proj}[2]{\ensuremath{\pi_{#1} #2}}
\newcommand{\fix}[1]{\ensuremath{\term{fix}^{#1}}}
\newcommand{\adv}[1]{\ensuremath{\term{adv}^{#1}}}

\newcommand{\subst}[2]{\ensuremath{\left[#2/#1\right]}}

\newcommand{\wfctx}[1]{\ensuremath{#1 \vdash \,}} 
\newcommand{\wfclock}[2]{\ensuremath{#1 \vdash #2 : \clocktype}}
\newcommand{\wftype}[2]{\ensuremath{#1 \vdash #2 \, \operatorname{type}}}
\newcommand{\hastype}[3]{\ensuremath{#1 \vdash #2 : #3}}
\newcommand{\judgeeq}[3]{\ensuremath{#1 \vdash #2 \jeq #3}}
\DeclareDocumentCommand{\eqjudg}{ m m m o }{
    \IfNoValueTF{#4}
    {\ensuremath{#1 \vdash #2 \jeq #3}}
    {\ensuremath{#1 \vdash #2 \jeq #3 : #4}}
}

\newcommand{\defeq}{\eqdef}
\newcommand{\empctx}{\cdot}

\newcommand{\comp}{\mathrel{\circ}}


\newcommand{\jeq}{=}

\newcommand{\iso}{\cong}


\newcommand{\gdtt}{\textsf{GDTT}}
\newcommand{\coregdtt}{\textsf{Core-GDTT}}

\newcommand{\clocks}{\ensuremath{\mathcal{C}}}
\newcommand{\pindex}[1]{\mathfrak{I}(#1)}

\newcommand{\TimeCat}{\ensuremath{\mathbb{T}}}



\newcommand{\sets}{\ensuremath{\mathrm{Set}}}

\newcommand{\minkappa}[2]{\ensuremath{#1^{-#2}}}
\newcommand{\eps}{\varepsilon}
\newcommand{\colim}{\ensuremath{\varinjlim}}


\newcommand{\code}[1]{\ulcorner #1 \urcorner}


\newcommand{\Elsem}[1]{\mathcal{E}l^{#1}}
\newcommand{\Usem}[1]{\mathcal{U}^{#1}}

\newcommand{\ElfamOne}[1]{\Elsem{#1}_1}
\newcommand{\ElfamTwo}[1]{\Elsem{#1}_2}

\newcommand{\Ufam}[1]{\Usem{#1}_1}

\newcommand{\pullbacktip}{\arrow[dr, phantom, "\scalebox{2}{\ensuremath{\lrcorner}}", very near start]}

\newcommand{\eqdef}{\overset{\textrm{def}}{=}}

\newcommand{\catfont}{\mathbb}

\newcommand{\ccat}{{\catfont{C}}}

\newcommand{\opcat}[1]{{{#1}^{\mathrm{op}}}}

\newcommand{\Fin}{\mathbf{Fin}}

\newcommand{\BaseCat}{\sets^{\TimeCat}}

\newcommand{\GAP}{\hspace{1cm}}

\newcommand{\Hom}{\ensuremath{\operatorname{Hom}}}

\newcommand{\den}[1]{\ensuremath{\left\llbracket #1 \right\rrbracket}}

\newcommand{\uniquemap}{!}

\newcommand{\restrict}[2]{{#1}|_{#2}}
\newcommand{\cores}[2]{\overline{#1}^{#2}}

\newcommand{\FSA}{\mathcal{E}}
\newcommand{\FSB}{\mathcal{E'}}
\newcommand{\FSC}{\mathcal{E''}}

\newcommand{\tick}[2]{\mathrm{tick}^{#1}}
\newcommand{\fresh}[1]{\lambda_{#1}}

\newcommand{\clocksetA}{\chi}
\newcommand{\clocksetB}{\chi'}
\newcommand{\clocksetC}{\chi''}

\newcommand{\clocksetAmap}{\langle\chi\rangle}
\newcommand{\clocksetBmap}{\langle\chi'\rangle}

\newcommand{\nattype}{\term{Nat}}

\title{Denotational semantics for guarded dependent type theory}
\author[Ale\v{s} Bizjak and Rasmus Ejlers M{\o}gelberg]
{A\ls L\ls E\ls \v{S}\ns B\ls I\ls Z\ls J\ls A\ls K$^1$%
\ns and\ns%
R\ls A\ls S\ls M\ls U\ls S\ns E\ls J\ls L\ls E\ls R\ls S\ns M\ls{\O}\ls G\ls E\ls L\ls B\ls E\ls R\ls G$^2$%
 \thanks{Corresponding author. Full address: IT University of Copenhagen, Department of Computer Science, Rued Langgaards Vej 7,
 2300 Copenhagen, Denmark. Email:  \href{mailto:mogel@itu.dk}{mogel@itu.dk}}
\addressbreak
$^1$ Aarhus University%
\addressbreak%
$^2$ IT University of Copenhagen%
}

\begin{document}

\maketitle

\begin{abstract}
  We present a new model of Guarded Dependent Type Theory ($\gdtt$), a type theory with guarded recursion and multiple 
  clocks in which one can program with, and reason about coinductive types. Productivity of recursively defined coinductive 
  programs and proofs is encoded in types using guarded recursion, and can therefore be checked modularly, unlike
  the syntactic checks implemented in modern proof assistants. 
  
  The model is based on a category of covariant presheaves over a category of time objects, and quantification over clocks
  is modelled using a presheaf of clocks. To model the \emph{clock irrelevance axiom}, crucial for programming with
  coinductive types, types 
  must be interpreted as presheaves internally right orthogonal to the object of clocks. In the case of dependent types, this translates to 
  a lifting condition similar to the one found in homotopy theoretic models of type theory, but here with
  an additional requirement of uniqueness of lifts. Since the universes 
  defined by the standard Hofmann-Streicher construction in this model do not satisfy this property, the universes in 
  $\gdtt$ must be indexed by contexts of clock variables. We show how to model these universes in such a way that
  inclusions of clock contexts give rise to inclusions of universes commuting with type operations on the nose.
\end{abstract}

\section{Introduction}
\label{sec:introduction}

Type theories with dependent types such as \possessivecite{MartinLof:73} Type Theory 
or the Extended Calculus of Constructions~\cite{Luo:94} 
are systems that can be simultaneously 
thought of as programming languages and logical systems. One reason why this is useful is that programs, their
specification and the proof that a program satisfies this specification, can be expressed in the same language. 
In these systems, the logical interpretation of terms forces a totality requirement on the programming language, 
i.e., rules out general
recursion, since nonterminating programs can inhabit any type, and thus be interpreted as proofs of false statements.

The lack of general recursion is a limitation both from a programming and a logical perspective. 
For example, when programming with coinductive types, the natural way to program and reason about these is by recursion. 
For example, the constant stream of zeros can be naturally described as the solution to the equation
$\term{zeros} = 0 :: \term{zeros}$. 
To ensure logical consistency, such recursive definitions must be productive, in the sense that any finite segment of 
the stream can be computed in finite time. 
Modern proof assistants such as Coq~\citeyear{Coq:manual} and Agda~\cite{Norell:thesis} do support coinductive types and recursive definitions such as the 
above but the productivity checks are based on a syntactical analysis of terms, and are not modular. This means that
using these in larger applications requires sophisticated tricks~\cite{NAD:beat}. This paper is concerned with a new technique
using guarded recursion to express productivity in types. 

Guarded recursion in the sense of \citeasnoun{Nakano:modality} 
is a safe way of adding recursion to type theory without breaking logical 
consistency. The idea is to guard all unfoldings of recursive equations by time steps in the form of a modal type 
constructor $\later{}$. 
The type $\later{}A$ should be thought of as a type of elements of $A$ available one time step from now. Values can be
preserved by time steps using an operator $\nxt{}$
satisfying $\nxt{} t : \later{} A$ whenever $t : A$. The fixed point operator has type
$\fix{} : (\later{} A \to A) \to A$ and computes, for any $f$, a fixed point for $f \circ \nxt{}$. 
This is particularly useful when programming
with \emph{guarded recursive types}, i.e., recursive types where all occurrences of the type parameter appears guarded
by a $\later{}$. For example, a guarded recursive type of streams would satisfy 
$\gstream{} =  \nattype \times \later{}\gstream{}$ and the stream of zeros can be defined as 
$\fix{}(\lambda xs. \pair{0}{xs})$.  The type 
$\later{} \gstream{} \to \gstream{}$ in fact exactly captures productive recursive stream definitions. 
Using universes, the type $\gstream{}$ can itself be computed as a guarded recursive fixed
point. In this paper we use universes \`a la Tarski, i.e., for any term $A : \univ{}$ there is a type $\elems{}(A)$. If 
we assume an operation $\latercode{} : \later{}U \to U$ satisfying $\elems{}(\latercode{}(\nxt{}(A))) = \later{}\elems{}(A)$,
then the type of guarded streams can be encoded as 
$\gstream{} \eqdef\elems{}(\fix{}(\lambda X. \nattype \times \latercode{}(X)))$. 

The guarded recursive type of streams above is not the usual type of streams. 
In particular, a term of type $\gstream{} \to \gstream{}$ must always
be \emph{causal} in the sense that the $n$ first element of output only depend on the $n$ first elements of input. Indeed, 
causality of maps is crucial for the encoding of productivity in types. 
On the other hand, a \emph{closed} term of type $\gstream{}$ does denote a full stream of numbers,
and likewise a term of type $\gstream{}$ in a context consisting solely of a variable $x : \nattype$ gives rise to an assignment
of numbers to full streams of numbers. In general, this holds if the context is stable, i.e., consists entirely of time-independent 
types. 

\subsection{Guarded recursion with multiple clocks}

\citeasnoun{Atkey:Productive} proposed a way to program with 
coinductive types using this idea, expressing time-independence
by indexing all $\later{}$ operators, $\nxt{}$ and $\fix{}$ by clocks. For example, if $t : A$ and $\kappa$ is a clock, 
$\nxt{\kappa} t : \later\kappa A$ and the type $\later\kappa A$ is to be thought of as elements of type $A$ available
one $\kappa$-time step from now. Likewise the guarded recursive type of streams must be indexed with a clock and 
assumed to satisfy $\gstream{\kappa} =  \nattype \times \later{\kappa}\gstream{\kappa}$. 
There are no operations on clocks, only clock variables, although we will
see that a single clock constant can be useful. We refer to this as \emph{guarded recursion with multiple clocks}, and 
the case of a single operator $\later{}$ as \emph{guarded recursion with a single clock} or sometimes simply 
\emph{the single clock case}. 

In turn, \emph{clock quantification} of guarded dependent type theory allows us to define the coinductive type of streams from the guarded recursive type of streams as $\forall\kappa .  \gstream\kappa$.
Clock quantification behaves similarly to the dependent product type in the sense it has analogous introduction and elimination rules; terms of this type are introduced by clock abstraction $\Lambda \kappa . t$, and eliminated using clock application $t[\kappa']$, provided $\kappa'$ is a valid clock.
However, clock quantification additionally satisfies the \emph{clock irrelevance} property, which is crucial for showing that types such as $\forall\kappa .  \gstream\kappa$ satisfy the properties expected of coinductive types, i.e., that they are final coalgebras.
Using these constructs and properties we can program with streams using guarded recursion, ensuring productivity of definitions using types.


This paper presents a model of $\gdtt$~\cite{Bizjak-et-al:GDTT}, an extensional type theory with
guarded recursion and clocks, in which one can program with, and reason about guarded recursive and coinductive
types. To motivate some of the constructions of $\gdtt$, we now take a closer look at the encoding of coinductive streams
as  $\forall\kappa .  \gstream\kappa$. As a minimal requirement for this to work, we need an isomorphism of types  
$\forall\kappa .  \gstream\kappa \iso \nattype \times \forall\kappa . \gstream\kappa$. 
This isomorphism is a composition of three isomorphisms
\begin{align*}
 \forall\kappa .  \gstream\kappa & = \forall\kappa .  \nattype \times \later\kappa\gstream\kappa \\
 & \iso (\forall\kappa .  \nattype) \times \forall\kappa . \later\kappa\gstream\kappa \\
 & \iso \nattype \times \forall\kappa . \later\kappa\gstream\kappa \\
 & \iso  \nattype \times \forall\kappa . \gstream\kappa 
\end{align*}
The first isomorphism follows from the fact that $\forall\kappa.A$ behaves essentially as the dependent product type $\depprod{\kappa}{\clocktype}A$, and thus distributes over binary products.
For the second isomorphism, we need $\nattype\iso \forall\kappa.\nattype$.
One direction of this isomorphism maps $x: \nattype$ to $\Lambda\kappa .
x$, and the opposite way evaluates an element in $\depprod{\kappa}{\clocktype}\nattype$ at a \emph{clock constant} $\kappa_0$.
The composition on $\nattype$ is obviously the identity, but for the other composition to be the identity, we need to assume the $\eta$-axiom for $\forall\kappa .
A$, and the \emph{clock irrelevance axiom}, which states that whenever $t : \forall\kappa .
A$ and $\kappa$ is not in $A$, then evaluating $t$ at different clocks give the same result.
One of the main contributions of this paper is that this axiom can be modelled using a notion of orthogonality.
The last isomorphism requires an inverse $\force : \forall\kappa.
\later\kappa A \to \forall\kappa .
A$ to the map induced by $\nxt\kappa$.

In this paper we focus on modelling $\gdtt$, and refer the reader to~\cite{Mogelberg:tt-productive-coprogramming} 
for a proof of correctness of the coinductive type encodings. 

\subsection{A model of guarded recursion with multiple clocks}

In the single clock case guarded recursion can be modelled in the topos of trees, i.e., the category $\sets^{\opcat{\omega}}$ 
of presheaves over the ordered natural numbers $\omega$. In this model, a closed type is modelled as a sequence of sets
$(X_n)_{n \in \NN}$ together with restriction maps $X_{n+1} \to X_n$. We think of $X_n$ as the type as
it looks if we have $n$ steps to reason about it. For example, in the guarded recursive type of streams, since the tail takes
one computation step to compute, one can compute the $n+1$ first elements of the stream in $n$ steps. We can represent
this by the object defined as $\gstream{}_n = \NN^{n+1}$ with restriction maps as projections. 

In this model $\later{} X$ is the object given by $(\later{}X)_0 = 1$ and $(\later{}X)_{n+1} = X_n$. Redefining 
$\gstream{}_n$ to be  $\NN^{n+1} \times 1$ (and associating products to the right) one gets 
$\gstream{} = \NN \times \later{}\gstream{}$. In the empty context a term $t : \later\kappa A \to A$ is modelled as a 
family of maps $t_{n+1} : A_n \to A_{n+1}$ and $t_0 : 1 \to A_0$. The fixed point operator maps such a family to the 
global element $\fix{}(t) : 1 \to A$ defined as $\fix{}(t)_n = t_n\circ \dots \circ t_0$. We refer 
to~\cite{Birkedal+:topos-of-trees} for further details. 

In this paper we extend this to a model of guarded recursion with multiple clocks. 
The model is a presheaf category over a category $\TimeCat$ 
of time
objects. In the single clock case, a time object was simply a number indicating the number of ticks left on the unique clock.
In the case of multiple clocks, a time object consists of a finite set of clocks $\FSA$, together with a map $\delta : \FSA \to \NN$
indicating the number of ticks left on each clock. A morphism of time objects $\sigma : (\FSA, \delta) \to (\FSB, \delta')$ is 
a map $\sigma : \FSA \to \FSB$ such that $\delta'(\sigma(\lambda)) \leq \delta(\lambda)$ for each $\lambda\in \FSA$. 
Such a morphism can rename clocks, introduce new clocks (elements of $\FSB$ outside the image of $\sigma$) 
and even synchronise clocks (by mapping them to the same clock). The inequality requirement corresponds to the inequalities
between numbers in the topos of trees. 

We consider \emph{covariant} presheaves on $\TimeCat$, i.e., the category of functors $\TimeCat \to \sets$. In this category
there is an object of clocks given by $ \Cl(\FSA,\delta) = \FSA$, which we use to model clock variables. Clock quantification is
modelled as a dependent product over $\Cl$. With this interpretation, for a type $A$ in which $\kappa$ does not appear free,
the type $\forall\kappa . A$ is modelled as a simple function type $\Cl\to A$. The clock irrelevance axiom mentioned above 
then states that the map $A \to (\Cl\to A)$ mapping an element $x$ in $A$ to the constant map to $x$ is an isomorphism. Of 
course, this does not hold for all presheaves $A$, and so we must show that this holds for the interpretation of any type. Note
that it does not hold for $A = \Cl$, and so, although $\forall\kappa . A$ is modelled as a dependent product over the 
presheaf of clocks, there is no type of clocks in the type theory. This is similar to the status of the interval in cubical 
type theory~\cite{Cubical},
which is not itself a type, but still the set of types is closed under dependent products over the interval (these are path types). 

For dependent types the condition becomes a unique lifting property. In a presheaf model of type theory
a type depending on a context is modelled as a family $A$ over a presheaf $\Gamma$. To this can be associated
a projection $\p : \compr\Gamma A \to \Gamma$ corresponding to syntactic projection between contexts. 
This must satisfy the condition that for all $Y$, and for all
commutative squares as in the outer square below (where $\pi_Y$ is the projection), 
there exists a unique $h$ such that the two triangles commute. 
\[
  \begin{diagram}
    Y \times \Cl \ar{r}{f} \ar{d}{\pi_Y} & \compr\Gamma A\ar{d}{p}\\
    Y \ar{r}{g} \ar[dotted]{ur}[description]{h} & \Gamma
  \end{diagram}
\]
We say that such a map $p$ is \emph{internally right orthogonal} to $\Cl$. 
This condition is similar to the notion of fibration used in models of homotopy theoretic models of type 
theory~\cite{awodey2009homotopy,Simplicial:model} and cubical type theory~\cite{BezemCH13}, 
except that here the liftings are unique. This means that it can be considered a property
that must be proved for each type, rather than structure that is part of the interpretation of a type.

\subsection{Universes}

Since our model is a presheaf category, one would hope that modelling universes would follow the standard 
Hofmann-Streicher construction~\cite{Hofmann-Streicher:lifting}, 
restricting to the elements internally right orthogonal to $\Cl$. Unfortunately, this universe $\Usem{}$ 
is not itself internally right orthogonal to $\Cl$. The reason is that there is a map $\Usem{}\times \Cl \to \Usem{}$ mapping a type $A$ and 
a clock $\kappa$ to $\later\kappa A$, and this map is not constant in the $\Cl$ component. This is a new semantic 
manifestation of a known problem, and we follow the solution used in $\gdtt$, which is to have a family of universes 
$(\univ{\Delta})_\Delta$ in the syntax, indexed by  finite sets of clock variables. 
Each universe $\univ\Delta$ is to be thought of as
the universe of types independent of the clocks outside of $\Delta$, and the type operation $\later\kappa$ is 
restricted on the universe $\univ\Delta$ to the $\kappa$ in $\Delta$. 

This means that universes are indexed by a new dimension, similar to the indexing of universes by natural numbers
used to avoid Russell's paradox~\cite{MartinLof:73}.
Fortunately, there are inclusions $\univ\Delta \to \univ{\Delta'}$ for $\Delta\subseteq\Delta'$, and we prove 
universe polymorphism in this dimension. This means that 
operations on types such as dependent product can be defined on the universes in such a way that they commute with
the inclusions mentioned above, not just up to isomorphism, but indeed up to identity. 
We hope that, as a consequence of this result, 
the indexing of universes by clock contexts can be suppressed in practical applications, just like
the indexing by natural numbers is often suppressed.

\subsection{Related work}

The notion of guarded recursion studied in this paper originates with \citeasnoun{Nakano:modality}. Much of 
the recent interest in guarded recursion is due to the guarded recursive types, which can even have negative occurences
and thus, by adding $\later{}$ operators in appropriate places, provide approximations to solutions 
to equations that can not be solved in set theory. These have 
been used to construct 
syntactic models and operational reasoning principles for (also
combinations of) advanced programming language features including
general references, recursive types, countable non-determinism and
concurrency~\cite{Birkedal+:topos-of-trees,Bizjak-et-al:countable-nondet-internal,BirkedalL:icap}. 
This technique can be understood as an 
abstract form of step-indexing~\cite{Appel:M01}, the connection to which was first discovered by 
\citeasnoun{AppelMRV07}. Most of these applications
have been constructed using logics with guarded recursion, such as the internal language of the topos of 
trees~\cite{Birkedal+:topos-of-trees}, 
but recently $\gdtt$ has been used to construct denotational models of programming languages like 
FPC~\cite{Paviotti:LICS:2016}, 
modelling the recursive types of these as guarded recursive types. 

Most type theories with guarded recursion considered until
now have been extensional, with the exception of \emph{guarded cubical type theory}~\cite{gctt}. This has, however, only 
been developed in the single clock case, although there exists an experimental version with multiple clocks.

Guarded recursion with multiple clocks was first developed in the simply typed setting by \citeasnoun{Atkey:Productive}. 
The second named author~\cite{Mogelberg:tt-productive-coprogramming}
extended these results to a model of dependent type theory and proved correctness of the coinductive type encodings 
inside a type theory with guarded recursion.  These two early works used a restricted version of clock application, allowing
$\alwaysapp t{\kappa'} : A\subst\kappa{\kappa'}$ for $t : \forall\kappa . A$ only if $\kappa'$ does not appear free in 
$\forall\kappa . A$. This condition can be thought of as disallowing the clocks $\kappa$ and $\kappa'$ to be synchronised
in $A$, and was motivated by the models considered at the time. This restriction has unfortunate consequences for 
the syntactic metatheory. In particular, the present authors do not know how to prove type preservation for 
clock $\beta$-reductions in these systems. 

This led us to suggest a different model~\cite{Bizjak-Moegelberg:clocks-model} given by a family of presheaf categories $\grold{\Delta}$ indexed by clock contexts
(finite sets of clock variables) $\Delta$. This model should in principle lead to a model of $\gdtt$, but this was never done in
detail, due to a problem with modelling substitution of clock variables. Such substitutions are given by maps 
$\sigma : \Delta \to \Delta'$ and must correspond semantically to functors $\grold{\Delta} \to \grold{\Delta'}$. 
While these functors can be defined in a natural 
way, they do not commute with dependent function types up to identity, only up to isomorphism. This problem can be thought
of as a coherence problem, similar to the one arising when modelling type theory in locally cartesian closed 
categories~\cite{Hofmann:lccc-strictification}. 
It is very likely that \possessivecite{Hofmann:lccc-strictification} solution to the latter problem can be adapted to 
construct an equivalent family of categories for which the functors preserve construction on the nose,
but we prefer the solution presented here, 
which organises all these categories inside one big presheaf category, thereby reducing
the model construction to the known construction of modelling type theory in a presheaf category. The precise relation to the
categories $\grold{\Delta}$ is discussed in Section~\ref{sec:discussion}.

Recently, $\gdtt$ has been refined to \emph{clocked type theory} (CloTT)~\cite{clott}, which has better operational
properties, and indeed strong normalisation has been proved for clocked type theory in the setting without identity types. 
The principal novel feature of CloTT is the notion of ticks on a clock introduced in contexts as 
assumptions of the form $\alpha  : \kappa$, for $\kappa$ a clock. Ticks can be used to 
encode the delayed substitutions (see Section~\ref{sec:delayed:subst}) of $\gdtt$, 
and reduce most of the equalities between these to  
$\beta$ and $\eta$ equalities. Since the initial development of the research reported here, 
\citeasnoun{CloTTmodel} have developed a model CloTT based on the model presented here. Their
paper however, does not describe how to model the clock irrelevance axiom, nor universes
as presented here. Also, the presence of ticks makes the model construction for CloTT rather complicated
and so we have chosen to present the model in the simpler setting of $\gdtt$ first.


In recent work on guarded computational type theory, \citeasnoun{sterling-harper:2018}
propose a clock intersection connective to be used as a special `irrelevant' quantification over clocks.
Using this they encode coinductive
types, while avoiding 
the indexing of universes by clock contexts as done here. Irrelevant clock quantification is interpreted using intersection of sets
in a syntactic model, in which types are essentially indexed sets of values. This is similar to the original
interpretation of clock quantification in the work of \citeasnoun{Atkey:Productive}. A related irrelevant quantification over sizes 
appears in the work of \citeasnoun{Abel:NBE:sized:types}. However, it is unclear how to give \emph{denotational} semantics
of such a constructor. We remark that the model used by \citeasnoun{sterling-harper:2018}
is based on a category very similar to the presheaf category used in this paper and that these models were discovered independently.

One way of understanding the need for multiple clocks for encoding coinductive types is that they provide a controlled
way of eliminating the $\later{}$ modality as in the term $\force : \forall\kappa. \later\kappa A \to \forall\kappa . A$ mentioned above. 
As an alternative solution to this problem, \citeasnoun{birkedal2017guarded} have 
suggested to use an \emph{always} modality $\blacksquare$ satisfying $\blacksquare \later{}A \iso \blacksquare A$. It is yet
unclear how far this idea can be extended, in particular if it can be used for encoding nested inductive and coinductive types. 

Sized types \cite{HughesPS96} offer a different approach to the problem of encoding productivity
in types. The idea is to annotate approximations of a coinductive type with the number of unfoldings that can be applied
to it. The real coinductive type is then the approximation associated with an infinite ordinal. 
When programming with sized types, the sizes sometimes get in the way, 
motivating the concept of irrelevant quantification over sizes mentioned above. The syntactic theory of
sized types is further developed than that of guarded recursion \cite{Abel:Wellfounded,Abel:NBE:sized:types,Sacchini13}, 
and sized types are also available in an experimental extension of Agda. Sized types have not been used as abstract
step-indexing in the sense described above for guarded recursion, and the authors are not aware of any work on 
denotational semantics for sized types. 

Our view is that guarded recursion should be thought of as an abstraction of sized types, providing similar benefits
as the abstraction of step-indexing, in particular by hiding Kripke structure present in the model. 
This view is supported by work by \citeasnoun{Veltri:19} in which a model of guarded recursion is constructed 
in Agda using sized types to model recursion. In this work, the model is restricted to a simply typed language 
language specialised to the case of just  $0$ or $1$ clocks, thus avoiding the issue of clock synchronisation 
treated in this paper.

\subsection{Overview}

Section~\ref{sec:basic:syntax} presents a basic type theory $\coregdtt$ for guarded recursion with multiple clocks. This 
can be thought of as the core of $\gdtt$~\cite{Bizjak-et-al:GDTT} although we use a slightly different presentation. 
Section~\ref{sec:basics}
then presents a basic model of $\coregdtt$ in the presheaf category $\sets^\TimeCat$, and Section~\ref{sec:orthogonality} 
shows how to 
model the clock irrelevance axiom. The following sections~\ref{sec:identity:types} and~\ref{sec:delayed:subst} 
then extend $\coregdtt$ with extensional identity
types and delayed substitutions, a construction needed for reasoning about guarded recursive and coinductive types. 
Section~\ref{sec:universes} is devoted to universes and modelling universe polymorphism in the 
clock context dimension and Section~\ref{sec:interp:syntax}
sketches how to extend Hofmann's interpretation of dependent type theory syntax~\cite{Hofmann:syntax-and-semantics}
to interpreting \gdtt\ into the model presented in this paper.
Finally the relations to the categories $\grold\Delta$ constructed in previous 
work~\cite{Bizjak-Moegelberg:clocks-model}
by the authors is discussed in Section~\ref{sec:discussion}. 

\section{A basic type theory for guarded recursion}
\label{sec:basic:syntax}

This section introduces $\coregdtt$ a presentational variant of a fragment of the type theory $\gdtt$~\cite{Bizjak-et-al:GDTT}.
The fragment is the one not mentioning universes, delayed substitutions and identity types.
All these will be treated in Sections~\ref{sec:identity:types}--\ref{sec:universes}.
The variation referred to above is in the treatment of clocks, which in previous work~\cite{Bizjak-et-al:GDTT,Mogelberg:tt-productive-coprogramming,Bizjak-Moegelberg:clocks-model} had a separate context.
Here we simply include them in the context as if they were ordinary variables to simplify the presentation of the 
denotational semantics. 
Section~\ref{sec:variation:syntax} sketches an equivalence between $\coregdtt$ and the corresponding 
fragment of $\gdtt$. 

\begin{figure}[tbp]
\paragraph{Wellformed contexts}
\begin{mathpar}
\inferrule*{\,}{\wfctx{\empctx}{}} \and
\inferrule*{\wftype{\Gamma}{A} \\ x\notin\Gamma}{\wfctx{\Gamma, x : A}{}} \and
\inferrule*{\wfctx{\Gamma} \\ \kappa\notin\Gamma}{\wfctx{\Gamma, \kappa : \clocktype}{}} 
\end{mathpar}

\paragraph{Wellformed clocks}
\[
\inferrule*{ \kappa : \clocktype \in\Gamma}{\wfclock{\Gamma}{\kappa}}
\]

\paragraph{Type formation}

\begin{mathpar}
\inferrule*{\wftype{\Gamma, x : A}{B}}{\wftype{\Gamma}{\depprod xAB}} \and
\inferrule*{\wftype{\Gamma, x : A}{B}}{\wftype{\Gamma}{\depsum xAB}} \and
\inferrule*{\wftype{\Gamma}{A} \\ \wfclock{\Gamma}\kappa}{\wftype{\Gamma}{\later\kappa A}} \and
\inferrule*{\wftype{\Gamma, \kappa : \clocktype}{A}}{\wftype{\Gamma}{\forall \kappa . A}} 
\end{mathpar}
\paragraph{Typing judgements}
\begin{mathpar}
\inferrule*{\,}{\hastype{\Gamma, x : A, \Gamma'}x{A}} \and
\inferrule*{\hastype{\Gamma, x : A}t{B}}{\hastype{\Gamma}{\lambda x . t}{\depprod xAB}} \and
\inferrule*{\hastype{\Gamma}{t}{\depprod xAB} \\ \hastype{\Gamma}{u}{A}}{\hastype{\Gamma}{t\, u}{B\subst xu}} \and
\inferrule*{\hastype{\Gamma}t{A} \\ \hastype{\Gamma}u{B\subst xt}}{\hastype{\Gamma}{\pair tu}{\depsum xAB}} \and
\inferrule*{\hastype{\Gamma}{t}{\depsum xAB}}{\hastype{\Gamma}{\proj 1t}{A}} \and
\inferrule*{\hastype{\Gamma}{t}{\depsum xAB}}{\hastype{\Gamma}{\proj 2t}{B\subst x{\proj 1t}}} \and
\inferrule*{\hastype{\Gamma}t{A} \\ \wfclock{\Gamma}{\kappa}}{\hastype{\Gamma}{\nxt\kappa t}{\later\kappa A}} \and  
\inferrule*{\hastype{\Gamma, \kappa : \clocktype}t{\later\kappa A}}
{\hastype{\Gamma}{\prev{\kappa} . t}{\forall\kappa . A}}   \and
\inferrule*{\hastype{\Gamma, x : \later\kappa A}t{A}}{\hastype{\Gamma}{\fix\kappa x . t}{A}} 
 \and
\inferrule*{\hastype{\Gamma, \kappa : \clocktype}t{A}}{\hastype{\Gamma}{\Lambda \kappa . t}{\forall \kappa . A}} \and
\inferrule*{\hastype{\Gamma}t{\forall \kappa . A} \\ \wfclock{\Gamma}{\kappa'}}{\hastype{\Gamma}{\alwaysapp t{\kappa'}}{A \subst \kappa{\kappa'}}}  \and
\inferrule*{
\hastype{\Gamma}{t}{A} \\ \judgeeq{\Gamma}{A}{B}}
{\hastype{\Gamma}{t}{B}}
\end{mathpar}
\paragraph{Equalities}
\begin{align*}
 (\lambda x . t) u & = t\subst xu &
 \lambda x . t x & = t & \text{(if $x\notin t$)} \\
 \proj i{\pair{t_1}{t_2}} & = t_i &
 \pair{\proj 1t}{\proj 2t} & = t \\
 (\Lambda \kappa . t) \kappa' & = t\subst \kappa{\kappa'} &
 \Lambda \kappa . \alwaysapp t\kappa & = t & \text{(if $\kappa \notin t$)} \\ 
 \prev{\kappa} . \left(\nxt{\kappa}t\right) & = \Lambda\kappa . t &
 \nxt{\kappa}\left(\alwaysapp{\left(\prev{\kappa} . t\right)}{\kappa}\right) & = t \\ 
 \fix\kappa x . t & = t\subst x{\nxt{\kappa}(\fix\kappa x . t)} 
\end{align*}
\paragraph{Clock irrelevance axiom}
\[
\inferrule*{\hastype{\Gamma}t{\forall\kappa . A} \\ \kappa\notin \fv(A) \\ \wfclock{\Gamma}{\kappa'} \\ \wfclock{\Gamma}{\kappa''}}
{\eqjudg{\Gamma}{\alwaysapp t{\kappa'}}{\alwaysapp t{\kappa''}}[A]}
\]
\begin{center}

\caption{Syntax of $\coregdtt$, a fragment of $\gdtt$.}
\label{fig:basic:syntax}
\end{center}
\end{figure}

The rules for context formation, type judgements and equalities can be found in Figure~\ref{fig:basic:syntax}.
Note that $\clocktype$ has a special status.
In particular, it is not a type.
Its status is similar to that of the interval type in cubical type theory~\cite{Cubical}.
Ignoring $\later{}$ and the clock irrelevance axiom, the type theory $\coregdtt$ is in fact just a fragment of a type theory 
with a base type $\clocktype$ in which types like $\depprod xA\clocktype$ or $\depsum \kappa\clocktype A$ are not allowed.
Under this view, the type $\forall \kappa.A$ can be thought of as a dependent product type $\depprod \kappa\clocktype A$, in fact its basic behaviour is exactly like a dependent product, as can be seen from the equality rules.
We make use of this view to establish soundness of the model given in Section~\ref{sec:basics}.
What distinguishes it from an ordinary dependent product is the clock irrelevance axiom stated at the bottom 
of Figure~\ref{fig:basic:syntax}. The set $\fv(A)$ is the set of free variables of $A$ defined in the usual way, 
and so the assumption $\kappa \notin\fv(A)$ implies that $\forall\kappa . A$ reduces to a simple function 
space $\clocktype \to A$. The axiom states that all maps of this type are constant.
In Section~\ref{sec:orthogonality} we explain how to model the type theory with this additional axiom.

In Figure~\ref{fig:basic:syntax} the equalities should be understood as equalities of terms in a context.
For brevity we have omitted the context in most statements except the clock irrelevance axiom, which, unlike the other rules, is type directed.

The term constructor $\prev{\kappa}$ is a restricted elimination form for $\later{\kappa}$, and binds $\kappa$. 
An unrestricted eliminator 
$\prev{\kappa}$ would be unsafe, because terms of the form $\fix{\kappa} x . \prev{\kappa} . x$ would inhabit any type. As the 
model presented in this paper shows, however, it is safe to eliminate a $\later\kappa$, as long as $\kappa$ does not appear
in the ordinary (non-clock) variables of the context. This is ensured in the rule for $\prev{\kappa}$ by requiring that 
$\kappa$ is at the end of the context. One might have expected a simpler rule of the form
\begin{equation*}
\inferrule*{\hastype{\Gamma, \kappa : \clocktype}t{\later\kappa A}}
{\hastype{\Gamma,\kappa : \clocktype}{\prev{\kappa} . t}{A}}  
\end{equation*}
but this rule is not closed under substitution of clock variables. This problem is solved by binding $\kappa$. 

\subsubsection*{Some example terms.}
We refer to \citeasnoun{Bizjak-et-al:GDTT} for more extensive and detailed motivation and explanation of the usage of the type theory.
We briefly show here some example terms on streams.
The type $\gstream{\kappa}$ of \emph{guarded streams} of natural numbers is the unique type satisfying $\gstream{\kappa} \jeq \Nat \times \later{\kappa}{\gstream{\kappa}}$.
To understand this example it is not important how this type can be defined, only that it satisfies the stated judgemental equality.
For readers familiar with guarded dependent type theory we remark that it can be defined as using the guarded fixed point on the universe $\univ{\{\kappa\}}$ as outlined in the introduction of this paper.
Using the mentioned judgemental equality we can type
\begin{align*}
  \begin{split}
    \ghead{\kappa} &: \gstream{\kappa} \to \Nat\\
    \ghead{\kappa} &\defeq \lambda xs . \proj 1 (xs)
  \end{split}
  \begin{split}
    \gtail{\kappa} &: \gstream{\kappa} \to \later{\kappa}\gstream{\kappa}\\
    \gtail{\kappa} &\defeq \lambda xs . \proj 2 (xs)
  \end{split}
\end{align*}
Notice that the $\gtail{\kappa}$ introduces a $\later{\kappa}$ modality:
The tail of a guarded stream is only available later.
This prevents non-productive stream definitions.
However once the streams are defined we wish to use them without introducing later modalities.
This can be achieved by the type $\stream{}$ of \emph{streams} of natural numbers.
It is defined from the type of guarded streams as $\stream{} \defeq \forall\kappa.\gstream{\kappa}$.
Thus, the tail function on \emph{streams} is defined as
\begin{align*}
  \tail &: \stream{} \to \stream{}\\
  \tail &\defeq \lambda xs . \prev{\kappa} . \gtail{\kappa} (xs[\kappa])
\end{align*}

\subsection{Relation to previous presentations}
\label{sec:variation:syntax}

Judgements of $\gdtt$ as presented in \cite{Bizjak-et-al:GDTT} have a separate context for clock variables.
For example, typing judgements have the form $\Gamma \vdash_\Delta t : A$ where $\Delta$ is a clock context of the form 
$\kappa_1, \dots \kappa_n$, and $\Gamma$ consists exclusively of ordinary variable declarations.
The two presentations are equivalent in the sense that $\Gamma \vdash_\Delta t : A$ is a valid judgement in the presentation 
of \cite{Bizjak-et-al:GDTT} iff $\hastype{\kappa_1 : \clocktype, \dots, \kappa_n : \clocktype, \Gamma}{t}A$ is valid in the  
presentation used here.

Another minor difference is that $\gdtt$ as presented in \cite{Bizjak-et-al:GDTT} has a clock constant $\kappa_0$. The 
clock constant can be easily added to $\coregdtt$ by a precompilation adding a fresh clock variable to the left of the context
in each judgement. 

\section{A presheaf model}
\label{sec:basics}

This section defines the category $\BaseCat$ as that of covariant presheaves on the category of \emph{time objects} 
$\TimeCat$. 
As any presheaf category, $\BaseCat$ has enough structure to model 
dependent type theory.
The category $\BaseCat$ contains an object $\Cl$ of \emph{clocks} which can be used to model clock 
quantification and guarded recursion. We show that $\BaseCat$ validates almost all the rules of $\coregdtt$, 
apart from the clock irrelevance axiom, which is the topic of Section~\ref{sec:orthogonality}. The focus in this section, as
in most of the paper, will be to construct the semantic structure needed for modelling the type theory, leaving the question
of how to interpret syntax to Section~\ref{sec:interp:syntax}.

We write $\Fin$ for the category of finite sets and functions whose objects are finite subsets $\FSA$ of some given, countably infinite, set of clocks.%
\footnote{The assumption that the objects are subset of a fixed set, as opposed to arbitrary finite sets keeps the category $\Fin$, and thus also $\TimeCat$, small, thus simplifying definitions of, e.g., dependent products.}
\begin{definition}
  \label{def:time-cat-defn}
  Let $\TimeCat$ be the category with objects pairs $(\FSA, \delta)$ with $\FSA \in \Fin$ and $\delta : \FSA \to \NN$ a function.
  A morphism $(\FSA, \delta) \to (\FSB, \delta')$ in $\TimeCat$ is a \emph{function} $\tau : \FSA \to \FSB$ in $\Fin$ such that $\delta' \comp \tau \leq \delta$ in the pointwise ordering.
\end{definition}

We use $\lambda$ to range over elements of $\FSA$ and write 
$\FSA, \lambda$ for the union of $\FSA$ with $\{\lambda\}$ assuming $\lambda \notin \FSA$. Likewise, when $\FSA$ and
$\FSB$ are disjoint, we write $\FSA,\FSB$ for their union. We use the notation 
$\delta[\lambda\mapsto n]$ for both the update of $\delta$ (when $\lambda \in \FSA$) and the extension of $\delta$ 
(when $\lambda \notin \FSA$). 

The indexing category $\TimeCat$ should be thought of as a category of \emph{time objects}.
A \emph{time object} is a finite set of semantic clocks $\FSA$ which each have a finite number of 
ticks left on them as indicated by $\delta$.
During a computation, three things can happen: time can pass on the existing clocks, as captured by a map 
$\id\FSA : (\FSA,\delta) \to (\FSA, \delta')$ where $\delta'\leq \delta$, new clocks can be introduced as 
captured by set inclusions $i : (\FSA,\delta) \to ((\FSA,\lambda), \delta[\lambda\mapsto n])$, and clocks can 
be synchronised as captured by a map
\begin{align*}
  \id\FSA[\lambda \mapsto \lambda'', \lambda'\mapsto \lambda''] : ((\FSA, \lambda, \lambda'),\delta[\lambda\mapsto n, \lambda'\mapsto m]) \to ((\FSA, \lambda''),\delta[\lambda''\mapsto \min(n, m)]).
\end{align*}
Finally, clocks can be renamed, e.g., via an isomorphism $\sigma : \FSA \iso \FSB$ inducing an isomorphism
$\sigma : (\FSA, \delta\circ \sigma) \to (\FSB, \delta)$. 
Any map in the indexing category $\TimeCat$ can be written as a composition of these three kinds of maps.

Variables of the form $\kappa : \clocktype$ will be modelled as the \emph{object of clocks} $\Cl$, 
which is simply the first projection
\begin{align*}
  \Cl(\FSA,\delta) = \FSA.
\end{align*}

\begin{lemma}
  \label{lem:object-of-clocks-is-colimit}
  Let $\lambda$ be a clock.
  There is an isomorphism of objects of $\BaseCat$
  \begin{align*}
    \Cl \cong \colim_{n \in \NN} y\left(\{\lambda\},n\right)
  \end{align*}
  where $y : \TimeCat^{\text{op}} \to \BaseCat$ is the (co)Yoneda embedding, and we write
  $\left(\{\lambda\},n\right)$ for the $\TimeCat$ object $\left(\{\lambda\}, [\lambda \mapsto n]\right)$, i.e., the first component is the singleton containing $\lambda$, and the second component is the map which maps $\lambda$ to $n$.
\end{lemma}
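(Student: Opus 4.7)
The plan is to compute both sides pointwise and check they agree naturally. Since colimits in the presheaf category $\BaseCat = \sets^{\TimeCat}$ are computed pointwise, it suffices to exhibit, for each time object $(\FSB,\delta')$, a natural bijection between the colimit and $\Cl(\FSB,\delta') = \FSB$.

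First I would unfold the Yoneda embedding. A morphism $(\{\lambda\},n) \to (\FSB,\delta')$ in $\TimeCat$ is a function $\tau : \{\lambda\} \to \FSB$ with $\delta'(\tau(\lambda)) \leq n$, i.e.\ the datum of an element $\mu \in \FSB$ subject to $\delta'(\mu) \leq n$. Hence
\[
y(\{\lambda\},n)(\FSB,\delta') \;\cong\; \{\mu \in \FSB \isetsep \delta'(\mu) \leq n\}.
\]
The transition map of the diagram, from $n$ to $n+1$, is induced (contravariantly, via $y$) by the morphism $(\{\lambda\},n+1) \to (\{\lambda\},n)$ of $\TimeCat$ given by $\id{\{\lambda\}}$, and corresponds pointwise to the evident subset inclusion $\{\mu : \delta'(\mu) \leq n\} \hookrightarrow \{\mu : \delta'(\mu) \leq n+1\}$.

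Next I would compute the pointwise colimit. Because $\FSB$ is \emph{finite}, the function $\delta'$ is bounded, so for $n$ large enough the set $\{\mu \in \FSB \isetsep \delta'(\mu) \leq n\}$ equals $\FSB$. The sequential colimit along a chain of inclusions that eventually stabilises at $\FSB$ is $\FSB$ itself, so
\[
\bigl(\colim_{n\in\NN} y(\{\lambda\},n)\bigr)(\FSB,\delta') \;\cong\; \FSB \;=\; \Cl(\FSB,\delta').
\]

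Finally I would check naturality in $(\FSB,\delta')$. Given $\tau : (\FSB,\delta') \to (\FSC,\delta'')$ in $\TimeCat$, the action of $y(\{\lambda\},n)$ sends $\mu \in \FSB$ with $\delta'(\mu) \leq n$ to $\tau(\mu) \in \FSC$ (and $\delta''(\tau(\mu)) \leq \delta'(\mu) \leq n$ is automatic from the definition of a morphism in $\TimeCat$), which is exactly the action of $\Cl$ on $\tau$; this compatibility passes to the colimit, giving an isomorphism in $\BaseCat$. The only subtle point — and the one I would take care to justify — is the boundedness argument that makes the pointwise colimit stabilise, and for this I would explicitly invoke finiteness of the objects of $\Fin$ as recorded in Definition~\ref{def:time-cat-defn}; everything else is a routine Yoneda unfolding.
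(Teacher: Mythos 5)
Your proof is correct and follows the same route as the paper's: compute the diagram pointwise via Yoneda to get $y(\{\lambda\},n)(\FSB,\delta') \cong \{\mu \in \FSB \mid \delta'(\mu)\le n\}$ with inclusions as transition maps, then use pointwise computation of colimits in presheaf categories. One small remark: the finiteness of $\FSB$ (and resulting stabilisation) is not actually needed to identify the colimit — a sequential colimit of inclusions is simply the union, and the union is all of $\FSB$ because each $\mu\in\FSB$ enters the chain at stage $n=\delta'(\mu)$.
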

\begin{proof}
 The objects of the diagram are 
\begin{align*}
 y\left(\{\lambda\},n\right)(\FSA,\delta) & = \Hom_{\TimeCat}(\left(\{\lambda\},n\right), (\FSA,\delta)) \\
 & \iso \{\lambda'\in \FSA \mid \delta(\lambda') \leq n \}
\end{align*}
and up to this isomorphism, the arrows are inclusions of sets. Since colimits are computed pointwise in presheaf 
categories, the isomorphism follows. 
\end{proof}

When describing objects and morphisms of $\BaseCat$ we will use the following notation: An 
object $\Gamma$ is a family of sets $\Gamma_{(\FSA,\delta)}$ indexed 
by $(\FSA,\delta) \in \TimeCat$ together with maps 
\begin{align*}
  \sigma \cdot - : \Gamma_{(\FSA,\delta)} \to \Gamma_{(\FSB,\delta')}
\end{align*}
for each $\sigma : (\FSA,\delta) \to (\FSB,\delta')$ in $\TimeCat$, satisfying the following two functoriality properties
\begin{align}
  \id{} \cdot x &= x \label{eq:id:pres} \\
  \left(\sigma \comp \tau\right) \cdot x &= \sigma \cdot (\tau \cdot x). \label{eq:comp:pres}
\end{align}
A morphism $\rho : \Gamma \to \Gamma'$ is a family of maps $\rho_{(\FSA,\delta)} : \Gamma_{(\FSA,\delta)}
\to \Gamma'_{(\FSA,\delta)}$ such that $\sigma \cdot(\rho_{(\FSA,\delta)}(\gamma)) = \rho_{(\FSB,\delta')}(\sigma\cdot \gamma)$ 
for any $\sigma : (\FSA,\delta) \to (\FSB,\delta')$ in $\TimeCat$ and any
$\gamma \in \Gamma_{(\FSA,\delta)}$. 


\subsection{Interpreting type theory in categories of presheaves}
\label{sec:interpreting-tt-in-presheaves}

We now recall the notion of category with families (CwF)~\cite{dybjer1996}, 
which is a standard notion of model of dependent type theory.
We also recall how $\BaseCat$ gives rise to a CwF modelling $\Pi$-, and $\Sigma$-types through a 
standard construction~\cite{Hofmann:syntax-and-semantics} that works for all presheaf categories. 

\begin{definition}
 A category with families comprises
 \begin{itemize}
\item A category $\cat C$ with a distinguished terminal object
\item For each object $\Gamma$ of $\cat C$ a set $\Fam[\cat C]{\Gamma}$ of \emph{families} over $\Gamma$. 
\item For each $\Gamma$ in $\cat C$ and each family $A$ in $\Fam[\cat C]{\Gamma}$ a set $\ElCwF[\cat C]{\Gamma}{A}$
of \emph{elements} of $A$. 
\item For each morphism $\gamma : \Delta \to \Gamma$ in $\cat C$ \emph{reindexing} operations
mapping $A$ in $\Fam[\cat C]\Gamma$ to $A[\gamma]$ in $\Fam[\cat C]\Delta$ and $t$ in $\ElCwF[\cat C]{\Gamma}A$ to 
$t[\gamma]$ in $\ElCwF[\cat C]\Delta{A[\gamma]}$. 
These must
satisfy the equations $A[\id{}] = A$, $t[\id{}] = t$,  
$A[\gamma\circ \delta] = A[\gamma][\delta]$ and $t[\gamma\circ \delta] = t[\gamma][\delta]$
for all morphisms $\delta$ with codomain $\Delta$.  
\item A \emph{comprehension} operation associating to each family $A$ in $\Fam[\cat C]{\Gamma}$ 
the following: An object $\compr \Gamma A$ in $\cat C$, a morphism 
$\p_A : \compr \Gamma A \to \Gamma$ and an element ${\q_A}$ in $\ElCwF[\cat C]{\compr\Gamma A}{A[\p_A]}$, 
such that for every
$\gamma : \Delta \to \Gamma$, and $t$ in $\ElCwF[\cat C]\Delta{A[\gamma]}$ there exists a unique morphism 
$\cpair \gamma t : \Delta \to \compr \Gamma A$ such that  $\p_A \circ\cpair \gamma t = \gamma$ and 
$\q_A[\cpair \gamma t] = t$. 
\end{itemize}
\end{definition}

Note that uniqueness implies that $\cpair\gamma t \circ \rho = \cpair{\gamma \circ \rho}{t[\rho]}$.

We will often refer to a CwF simply by its underlying category $\cat C$ leaving the rest of the structure
implicit. Categories with families provide models of dependent type theories in which contexts are 
interpreted as objects in the underlying category, types are interpreted as families and terms
as elements. The category $\BaseCat$ is the underlying category of a CwF whose families above an
object $\Gamma$ are families of sets $A_{(\FSA,\delta)}(\gamma)$ indexed over $(\FSA,\delta)$ 
in $\TimeCat$ and $\gamma \in \Gamma_{(\FSA,\delta)}$, together with restriction maps 
\begin{align*}
  \sigma \cdot (-) : A_{(\FSA,\delta)}(\gamma) \to A_{(\FSB,\delta')}(\sigma \cdot \gamma)
\end{align*}
indexed by $\sigma : (\FSA,\delta) \to (\FSB,\delta')$ in $\TimeCat$ and $\gamma \in \Gamma_{(\FSA,\delta)}$ and 
satisfying the functoriality properties (\ref{eq:id:pres}) and (\ref{eq:comp:pres}).
Note that the notation $\sigma \cdot x$ is overloaded both for a restriction of objects as well as families.

An element $t \in \ElCwF{\Gamma}A$ is a family of elements $t_{(\FSA,\delta)}(\gamma) \in A_{(\FSA,\delta)}(\gamma)$ 
indexed over $(\FSA,\delta)$ in $\TimeCat$ and $\gamma \in \Gamma_{(\FSA,\delta)}$ satisfying 
$\sigma\cdot (t_{(\FSA,\delta)}(\gamma)) = t_{(\FSB,\delta')}(\sigma \cdot \gamma)$ for every 
$\sigma : (\FSA,\delta) \to (\FSB,\delta')$. Reindexing of terms and types along morphisms $\rho : \Gamma' \to \Gamma$
is defined as ${A[\rho]}_{(\FSA,\delta)}(\gamma) = A_{(\FSA,\delta)}(\rho_{(\FSA,\delta)}(\gamma))$ 
and  ${t[\rho]}_{(\FSA,\delta)}(\gamma) = t_{(\FSA,\delta)}(\rho_{(\FSA,\delta)}(\gamma))$. We will often omit the subscripts
$(\FSA,\delta)$ when they can be inferred from the context. 

Comprehension is defined as 
\[(\compr\Gamma A)_{(\FSA,\delta)} = \{(\gamma, a) \mid \gamma \in \Gamma_{(\FSA,\delta)}, a \in A_{(\FSA,\delta)}(\gamma)\}
\]
with presheaf action defined as $\sigma \cdot (\gamma,x) = (\sigma \cdot \gamma, \sigma \cdot x)$.

Recall the following standard result~\cite{Hofmann:syntax-and-semantics}.

\begin{lemma}
 The CwF structure on $\BaseCat$ models $\Pi$- and $\Sigma$-types. 
\end{lemma}

These are constructed as follows, for $A \in \Fam\Gamma$, $B \in \Fam{\compr\Gamma A}$ and 
$\gamma\in \Gamma_{(\FSA,\delta)}$
\begin{align}
 \PiSem AB_{(\FSA,\delta)}(\gamma) & =     \left\{ \left(f_{\sigma}\right)_{\sigma : (\FSA,\delta) \to (\FSB, \delta')} \isetsep
      \begin{array}{l}
        \forall \FSB, \delta', \sigma : (\FSA,\delta) \to (\FSB,\delta'), \forall a \in A_{(\FSB,\delta')}(\sigma \cdot \gamma),\\
        f_\sigma(a) \in B_{(\FSB,\delta')}(\sigma \cdot \gamma, a) \text{ such that}\\
        \tau \cdot f_{\sigma}(a) = f_{\tau \comp \sigma}(\tau \cdot a) \text{ for composable } \tau, \sigma
      \end{array}
    \right\}  \label{eq:Pi:Sem} \\
  \SigmaSem AB_{(\FSA,\delta)}(\gamma) & =  \{ (a,b) \mid a \in A_{(\FSA,\delta)}(\gamma), b \in B_{(\FSA,\delta)}(\gamma, a)\} 
   \label{eq:Sigma:Sem} 
\end{align}
with presheaf action on $\PiSem AB$ defined by precomposition, i.e., if $\tau : (\FSA,\delta) \to (\FSB,\delta')$ then
\begin{align*}
  \tau \cdot (\left(f_{\sigma}\right)_{\sigma : (\FSA,\delta) \to (\FSC,\delta'')}) =
  \left(f_{\sigma \tau}\right)_{\sigma : (\FSB,\delta') \to (\FSC,\delta'')}
\end{align*}
Recall also that evaluation mapping an element $f \in \ElCwF\Gamma{\PiSem AB}$ and $t \in \ElCwF\Gamma A$ 
to $\ev(f,t) \in \ElCwF\Gamma{B[\cpair{\id\Gamma}{t}]}$ is defined as 
\[
\ev(f,t)_{(\FSA,\delta)}(\gamma) = (f_{(\FSA,\delta)}(\gamma))_{\id{(\FSA, \delta)}}(t_{(\FSA,\delta)}(\gamma))
\]
When $A,B \in \Fam\Gamma$ we write $A \to B$ for $\PiSem{A}{B[\p]}$.  When $t \in \ElCwF{\compr\Gamma A}B$
we write $\lambda(t)$ for the corresponding abstracted element in $\ElCwF\Gamma{\PiSem AB}$. The semantic
$\beta$-rule states that $\ev(\lambda t, u) = t[\cpair{\id{}}u]$. Finally, recall the substitution property 
$\PiSem AB[\rho] = \PiSem{A[\rho]}{B[\cpair{\p \rho}{\q}]}$, and similarly for $\Sigma$-types. 

\subsection{Modelling \texorpdfstring{$\later{}$}{later} and guarded recursion}
\label{sec:modelling-later-and-guarded-rec}

We now explain how to model the $\later{}$-modality and fixed points. 
First note that there is a family $\clockfam \in \Fam 1$ defined as the object $\Cl$, since families in context $1$ correspond to
objects of $\BaseCat$, and so, for any $\Gamma$, there is a family $\clockfam[\uniquemap_\Gamma]\in \Fam\Gamma$, 
where $\uniquemap_\Gamma : \Gamma \to 1$ is the unique map. 

\begin{lemma} \label{lem:latercwf}
 If $A\in \Fam\Gamma$ and $\kappa \in \ElCwF\Gamma{\clockfam[\uniquemap_\Gamma]}$ 
 there is a family $\laterfam\kappa A \in \Fam\Gamma$ and a mapping associating to each element
 $t \in \ElCwF\Gamma A$ an element $\nextel\kappa (t) \in \ElCwF\Gamma{\laterfam\kappa A}$ both commuting
 with reindexing, such that for every $f \in \ElCwF{\Gamma}{\laterfam\kappa A\to A}$ there is a unique
 $\fixel\kappa f \in \ElCwF{\Gamma}A$ satisfying $\ev(f, \nextel\kappa(\fixel\kappa f)) = \fixel\kappa f$. 
\end{lemma}
Note that the uniqueness here implies that the construction $\fixel\kappa f$ commutes with reindexing: Since
\begin{align*}
 \ev(f[\rho], \nextel{\kappa[\rho]}((\fixel\kappa f)[\rho])) & = \ev(f[\rho], (\nextel\kappa(\fixel\kappa f))[\rho]) \\
 & =  \ev(f, \nextel\kappa(\fixel\kappa f))[\rho] \\
 & =  (\fixel\kappa f)[\rho] 
\end{align*}
uniqueness implies $(\fixel\kappa f)[\rho] = \fixel{\kappa[\rho]}{f[\rho]}$. 

\begin{proof}
If $(\FSA,\delta)$ is an object of $\TimeCat$ and $\lambda \in \FSA$ such that $\delta(\lambda) > 0$ 
we write $\minkappa{\delta}{\lambda}$ for the function which agrees with $\delta$ 
everywhere except on $\lambda$ where 
$\minkappa{\delta}{\lambda}(\lambda) = \delta(\lambda) - 1$.
It is elementary that the identity function defines a morphism
\begin{align*}
  \tick{\lambda}{\FSA} : (\FSA,\delta) \to \left(\FSA,\minkappa{\delta}{\lambda}\right)
\end{align*}
in $\TimeCat$.

With this notation we can define $\laterfam\kappa A$ as follows, omitting the subscript on $\kappa$
\begin{align*}
  (\laterfam\kappa A)_{(\FSA,\delta)}(\gamma) =
  \begin{cases}
    \{\star\} & \text{ if } \delta(\kappa(\gamma)) = 0\\
    A_{(\FSA,\minkappa{\delta}{\kappa(\gamma)})}(\tick{\kappa(\gamma)}{\FSA} \cdot \gamma) & \text{ otherwise }
  \end{cases}
\end{align*}
Let $\sigma : (\FSA,\delta) \to (\FSB,\delta')$. The map $\sigma \cdot (-) : (\laterfam\kappa A)_{(\FSA,\delta)}(\gamma)
\to (\laterfam\kappa A)_{(\FSB,\delta')}(\sigma\cdot \gamma)$, can be defined in the case that 
$\delta'(\kappa(\sigma\cdot \gamma)) = 0$ as $\sigma\cdot x = \star$. If $\delta'(\kappa(\sigma\cdot \gamma)) > 0$ 
also $\delta(\kappa(\gamma)) > 0$ because
\[
\delta'(\kappa(\sigma\cdot \gamma)) = \delta'(\sigma\cdot \kappa(\gamma)) = \delta'(\sigma(\kappa(\gamma))) 
\leq \delta(\kappa(\gamma)) 
\]
and so $\sigma$ induces a map $\minkappa\sigma{\kappa(\gamma)} : (\FSA,\minkappa{\delta}{\kappa(\gamma)})
\to (\FSB,\minkappa{\delta'}{\sigma(\kappa(\gamma))})$, satisfying 
$\minkappa\sigma{\kappa(\gamma)}\circ \tick{\kappa(\gamma)}{\FSA} = \tick{\sigma(\kappa(\gamma))}{\FSA} \circ \sigma$.
In this case, we can thus define $\sigma\cdot(-)$ to be the map
\[
\minkappa\sigma{\kappa(\gamma)} \cdot (-) : 
A_{(\FSA,\minkappa{\delta}{\kappa(\gamma)})}(\tick{\kappa(\gamma)}{\FSA} \cdot \gamma)
\to A_{(\FSA,\minkappa{\delta'}{\kappa(\sigma\cdot \gamma)})}(\tick{\kappa(\sigma\cdot \gamma)}{\FSB} \cdot \sigma\cdot\gamma)
\]
 
The construction $\laterfam\kappa$ commutes with reindexing, since
\begin{align*}
  (\laterfam{\kappa[\rho]} A[\rho]) & =
  \begin{cases}
    \{\star\} & \text{ if } \delta(\kappa[\rho](\gamma)) = 0\\
    A_{(\FSA,\minkappa{\delta}{\kappa[\rho](\gamma)})}(\rho(\tick{\kappa[\rho](\gamma)}{\FSA} \cdot \gamma)) & \text{ otherwise }
  \end{cases} \\
  & =
  \begin{cases}
    \{\star\} & \text{ if } \delta(\kappa(\rho(\gamma))) = 0\\
    A_{(\FSA,\minkappa{\delta}{\kappa(\rho(\gamma))})}(\tick{\kappa(\rho(\gamma))}{\FSA} \cdot \rho(\gamma)) & \text{ otherwise }
  \end{cases}
\end{align*}
and writing out $((\laterfam\kappa A)[\rho])(\gamma) = (\laterfam\kappa A)(\rho(\gamma))$ gives the exact same expression. 

Analogously, the element $\nextel{\kappa}(t)$ is defined as
\begin{align*}
  (\nextel{\kappa}{(t)})_{(\FSA,\delta)}(\gamma) =
  \begin{cases}
    \star & \text{ if }  \delta(\kappa(\gamma)) = 0\\
    t_{(\FSA,\minkappa{\delta}{\kappa(\gamma)})}(\tick{\kappa(\gamma)}{\FSA} \cdot \gamma) & \text{ otherwise }
  \end{cases}
\end{align*}
To define $\fixel\kappa f$, note that by the above definitions
\begin{align*}
 \ev(f, \nextel\kappa(\fixel\kappa f))(\gamma)
   & =
  \begin{cases}
    (f(\rho))_{\id{}}(\star) & \text{ if } \delta(\kappa(\rho(\gamma))) = 0\\
    (f(\rho))_{\id{}}((\fixel\kappa f)(\tick{\kappa(\gamma)}{\FSA} \cdot \gamma)) & \text{ otherwise }
  \end{cases}
\end{align*}
and thus the $\fixel\kappa f$ can be defined by induction on $\kappa(\rho(\gamma))$. 
\end{proof}

\subsection{Modelling previous}
\label{sec:prev}

As noted in Section~\ref{sec:basic:syntax}, universal quantification over clocks is simply a special 
case of a dependent function space, and so can be modelled in the CwF $\BaseCat$ using $\Pi$-types. 
No special construction is needed for this. However, in order to model $\prev{}$ we now give an 
alternative description of $\Pi$ types with domain $\clockfam$ in the model as a limit over a 
family of objects indexed by natural numbers. Universal quantification over clocks is modelled 
similarly in the models of~\cite{Atkey:Productive,Mogelberg:tt-productive-coprogramming,Bizjak-Moegelberg:clocks-model}.

In the following, we will assume a choice of fresh clock names $\FSA \mapsto \fresh{\FSA}$, such that 
$\fresh{\FSA} \notin \FSA$ and write 
$\iota^n : (\FSA,\delta) \to \left((\FSA,\fresh\FSA), \delta[\fresh\FSA\mapsto n]\right)$ for the inclusion for $n \in \NN$. Note that 
$\tick\lambda{(\FSA,\fresh\FSA)} \circ \iota^{n+1} = \iota^n$.

\begin{lemma} \label{lem:forall:interp}
 Let $A \in \Fam{\compr\Gamma{\clockfam[\uniquemap_\Gamma]}}$, 
 $(\FSA,\delta) \in \TimeCat$ and $\gamma \in \Gamma_{(\FSA,\delta)}$. 
 The set
   $\PiSem {\clockfam[\uniquemap_\Gamma]} A_{(\FSA,\delta)}(\gamma)$
 is the limit of the diagram
 \begin{displaymath}
   \begin{diagram}
     A(\iota^0\cdot \gamma, \fresh\FSA) & \ar{l}[swap]{\tick{\fresh\FSA}{} \cdot (-)} A(\iota^1\cdot \gamma, \fresh\FSA) & \ar{l}[swap]{\tick{\fresh\FSA}{} \cdot (-)} A(\iota^2\cdot \gamma, \fresh\FSA) & \ar{l}[swap]{\tick{\fresh\FSA}{} \cdot (-)} \dots
   \end{diagram}
 \end{displaymath}
\end{lemma}

\begin{proof}
 Let $y : \TimeCat^{\text{op}} \to \BaseCat$ be the (co)Yoneda embedding. Uncurrying the definition in (\ref{eq:Pi:Sem})
 we see that elements of 
 $\PiSem{\clockfam[\uniquemap_\Gamma]} A_{(\FSA,\delta)}(\gamma)$ correspond to maps mapping objects 
 $(\FSB,\delta')$ of $\TimeCat$ and elements $(\sigma,\lambda) \in (y(\FSA,\delta)\times \Cl)(\FSB,\delta')$ 
 to elements in $A(\sigma\cdot \gamma, \lambda)$ naturally in $(\FSB,\delta')$. By 
 Lemma~\ref{lem:object-of-clocks-is-colimit}, the presheaf $y(\FSA,\delta)\times \Cl$
 is isomorphic to the colimit over $n$ of the diagram given by objects $y(\FSA,\delta)\times y(\{\fresh\FSA\}, n)$.
 Thus $\PiSem{\clockfam[\uniquemap_\Gamma]} A_{(\FSA,\delta)}(\gamma)$
 is isomorphic to the limit of a diagram of the form
 \begin{displaymath}
   \begin{diagram}
     X_0 &  \ar{l} X_1 &  \ar{l} X_2 &  \ar{l} X_3 &  \ar{l} \dots
   \end{diagram}
 \end{displaymath}
 where $X_n$ is the set of maps as above defined just for 
 $(\sigma,\lambda) \in (y(\FSA,\delta)\times y(\{\fresh\FSA\}, n))(\FSB,\delta')$, and the maps are given by restriction.
 It remains to show the isomorphism of the above diagram with that of the lemma.
 
 The object $((\FSA, \fresh\FSA), \delta[\fresh\FSA\mapsto n])$ is the coproduct in $\TimeCat$ of $(\FSA,\delta)$
 and $(\{\fresh\FSA\}, n)$ with inclusions given by inclusions of sets. Since the yoneda embedding preserves
 products, ${y(\FSA,\delta)\times y(\{\fresh\FSA\}, n)} \iso y((\FSA, \fresh\FSA), \delta[\fresh\FSA\mapsto n])$. Up 
 to this correspondence, the restriction of an element in the family
 $\PiSem{\clockfam[\uniquemap_\Gamma]} A_{(\FSA,\delta)}(\gamma)$
 to $y(\FSA,\delta)\times y(\{\fresh\FSA\}, n)$ corresponds to a mapping associating to 
 $(\FSB,\delta')$ of $\TimeCat$ and elements 
 $(\sigma,\lambda) \in y((\FSA, \fresh\FSA), \delta[\fresh\FSA\mapsto n])(\FSB,\delta')$ 
 elements in $A(\sigma\cdot (\iota^n\cdot \gamma), \sigma(\fresh\FSA))$ naturally in $(\FSB,\delta')$.
 By a yoneda style argument such mappings are determined by their action on the identity on 
 $((\FSA, \fresh\FSA), \delta[\fresh\FSA\mapsto n])$ and thus we arrive at the diagram of the lemma. 
 \end{proof}

Rather than modeling $\prev{}$ directly, we model the construct 
\[
\inferrule*
{\hastype{\Gamma}{t}{\forall\kappa . \later\kappa A}}
{\hastype{\Gamma}{\force{t}}{\forall\kappa . A}}
\]
Using this, one can define $\prev{}$ as
\begin{align}
 \prev{\kappa} . t &\eqdef \force(\Lambda \kappa . t) \label{def:prev}
\end{align}
To satisfy the equalities of Figure~\ref{fig:basic:syntax}, the term $\lambda x .\force(x)$ should be an inverse
to $\lambda x . \Lambda\kappa . \nxt\kappa (\alwaysapp{x}\kappa)$. Using this, one can prove the first equality
for $\force$ in Figure~\ref{fig:basic:syntax} as follows
\begin{align*}
\prev{\kappa} . \left(\nxt{\kappa}t\right) & = \force (\Lambda\kappa . \left(\nxt{\kappa}t\right)) \\
& = \force (\Lambda\kappa . \left(\nxt{\kappa}\alwaysapp{(\Lambda\kappa .t)}{\kappa}\right)) \\
& = \Lambda\kappa . t
\end{align*}
The other equality is proved similarly.

We now show that the semantic correspondent to $\lambda x . \Lambda\kappa . \nxt\kappa (\alwaysapp{x}\kappa)$
is an isomorphism. 

\begin{lemma}
 Suppose $A\in \Fam{\compr{\Gamma}{\clockfam[\uniquemap_\Gamma]}}$.
 The mapping of elements $t \in \ElCwF{\Gamma}{\PiSem {\clockfam[\uniquemap_\Gamma]}A}$
 to $\lambdael{\nextel{\q}(\ev(t[\p],\q))}$ in 
 $\ElCwF{\Gamma}{\PiSem {\clockfam[\uniquemap_\Gamma]}{\laterfam{\q}A}}$ is an isomorphism.
\end{lemma}

Before proving this, we argue that the mapping referred to is welltyped. By the assumption on $t$,
$t[\p] \in \ElCwF{\compr{\Gamma}{\clockfam[\uniquemap_\Gamma]}}{(\PiSem {\clockfam[\uniquemap_\Gamma]}A)[\p]}$ 
and since $(\PiSem {\clockfam[\uniquemap_\Gamma]}A)[\p]$ equals 
$\PiSem{\clockfam[\uniquemap_{\compr{\Gamma}{\clockfam[\uniquemap_\Gamma]}}]}{A[\cpair{\p\circ\p}{\q}]}$ 
also $\ev(t[\p],\q)$ is an element in $A[\cpair{\p\circ\p}{\q}][\cpair{\id{}}{\q}] = A[\cpair{\p}{\q}] = A$. Since
$\q \in \ElCwF{\compr{\Gamma}{{\clockfam[\uniquemap_\Gamma]}}}$ also $\nextel{\q}(\ev(t[\p],\q))$
is an element in $\laterfam{\q}A$, and therefore $\lambdael{\nextel{\q}(\ev(t[\p],\q))}$ is in 
$\ElCwF{\Gamma}{\PiSem {\clockfam[\uniquemap_\Gamma]}{\laterfam{\q}A}}$. Note that the inverse
of this map necessary must commute with reindexing, since the construction of the map does. 

\begin{proof}
 Unfolding definitions, we see that the construction of lemma at $\gamma \in \Gamma_{(\FSA, \delta)}$ is the 
 map induced by the map of diagrams below. 
 \begin{displaymath}
   \begin{diagram}
     A(\iota^0\cdot \gamma, \fresh\FSA) \ar{d}[swap]{\uniquemap} 
     & \ar{l}[swap]{\tick{\fresh\FSA}{} \cdot (-)} A(\iota^1\cdot \gamma, \fresh\FSA) \ar{d}{\tick{\fresh\FSA}{} \cdot (-)}
     & \ar{l}[swap]{\tick{\fresh\FSA}{} \cdot (-)} A(\iota^2\cdot \gamma, \fresh\FSA) \ar{d}{\tick{\fresh\FSA}{} \cdot (-)}
     & \ar{l}[swap]{\tick{\fresh\FSA}{} \cdot (-)} \dots \\  
     1 & \ar{l} A(\iota^0\cdot \gamma, \fresh\FSA) & \ar{l}[swap]{\tick{\fresh\FSA}{} \cdot (-)} A(\iota^1\cdot \gamma, \fresh\FSA) & \ar{l}[swap]{\tick{\fresh\FSA}{} \cdot (-)}  \dots 
   \end{diagram}
 \end{displaymath}
 The map induced between the limits is therefore an isomorphism. 
\end{proof}

With these definitions we can extend the interpretation to the whole of $\coregdtt$.
However the interpretation only validates the basic axioms, i.e., $\beta$ and $\eta$ laws.
It does not validate the clock irrelevance axiom.
To soundly interpret $\coregdtt$ we need to require that the families are suitably constant.
This is the subject of the next section. 

\section{Modelling clock irrelevance using orthogonality}
\label{sec:orthogonality}

In the interpretation above $\forall\kappa.A$ is interpreted as an ordinary dependent product $\depprod{\kappa}{\clocktype}A$.
Under this interpretation, the clock irrelevance axiom concerns functions $f$ of type $\clocks \to A$ and states
that each such function must be constant.
To model this, we restrict attention in the model to those families satisfying this property, and show that the collection of
these is closed under the type constructions of $\coregdtt$.
To capture clock irrelevance semantically, we start by recalling the category theoretic concept of orthogonality. 

A morphism $e : A\to B$ is \emph{left-orthogonal} to $m : C\to D$
(and $m$ is right-orthogonal to $e$) if all commutative squares as below have
a \emph{unique} filler $h$. 
\begin{displaymath}
  \begin{diagram}
    A\ar{r}{f} \ar{d}{e} & C\ar{d}{m}\\
    B \ar[dotted]{ur}[description]{h} \ar{r}{g} & D
  \end{diagram}
\end{displaymath}
Often we will simply refer to this as $e$ being orthogonal to $m$. If $B$ is the terminal object, we may 
also refer to this as the object $A$ being left-orthogonal to $m$ and similarly for the case of $D$ being terminal. 
We shall need the slightly stronger notion of \emph{internal orthogonality}~\cite{anel2017generalized}, 
which can be understood by rephrasing the above lifting property as the requirement that the following diagram of
hom-sets is a pullback
\begin{displaymath}
  \begin{diagram}
    \Hom_{\ccat}(B,C)\ar{r}{m\circ(-)} \ar{d}{(-)\circ e} & \Hom_{\ccat}(B,D)\ar{d}{(-)\circ e}\\
    \Hom_{\ccat}(A,C) \ar{r}{m\circ(-)} & \Hom_{\ccat}(A,D)
  \end{diagram}
\end{displaymath}
The idea of internal orthogonality is to replace the external hom-sets above with exponentials in a cartesian
closed category. The resulting condition is equivalent to the following, which can be stated also in categories
that are not cartesian closed.

\begin{definition}
  \label{def:invariant-under-clock-introduction}
  Let $\ccat$ be a category with finite products. Say a morphism $p : A \to B$ is \emph{internally right orthogonal to an object $X$} if
  for any $Y$ and any $f,g$ making the outer square below commute,
  there exists a \emph{unique} $h : Y \to A$ such that the diagram
  \begin{displaymath}
    \begin{diagram}
      Y \times X \ar{r}{f} \ar{d}{\pi_Y} & A\ar{d}{p}\\
      Y \ar{r}{g} \ar[dotted]{ur}[description]{h} & B
    \end{diagram}
  \end{displaymath}
  commutes.
  
  A map $p : A \to B$ in $\BaseCat$ is \emph{invariant under clock introduction} if it is internally right orthogonal to any object of the form $y\left(\{\lambda\},n\right)$. 
\end{definition}

\begin{definition}
  \label{def:types-invariant-under-clock-intro}
A family $A \in \Fam\Gamma$ is \emph{invariant under clock introduction} if  
$\p : \compr\Gamma A \to \Gamma$ is invariant under clock introduction in the sense of Definition~\ref{def:invariant-under-clock-introduction}.
\end{definition}

The terminology of being invariant under clock introduction is justified by the following lemma, the proof of which is on page~\pageref{proof:lem:fibrations-iff-restrictions-pullback} after preliminary Lemmas~\ref{lem:characterization-of-fibrations} and~\ref{lem:A:to:yoneda}.

\begin{lemma}
  \label{lem:fibrations-iff-restrictions-pullback}
  A morphism $p : A \to B$ in $\BaseCat$ is invariant under clock introduction if and only if for all $(\FSA,\delta) \in \TimeCat$, and any (equivalently all) $\lambda \not\in\FSA$ and any $n$ the square
  \begin{displaymath}
    \begin{diagram}
      \pullbacktip A(\FSA,\delta) \ar{d}{p_{(\FSA,\delta)}}\ar{r}{A(\iota^n)} & A\left((\FSA,\lambda), \delta[\lambda\mapsto n]\right) \ar{d}{p_{\left((\FSA,\lambda), \delta[\lambda\mapsto n]\right)}}\\
      B(\FSA,\delta) \ar{r}{B(\iota^n)} & B\left((\FSA,\lambda), \delta[\lambda\mapsto n]\right)
    \end{diagram}
  \end{displaymath}
  is a pullback, where 
  $\iota^n : (\FSA, \delta) \to \left((\FSA,\lambda), \delta[\lambda\mapsto n]\right)$ is the inclusion.
\end{lemma}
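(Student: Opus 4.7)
The plan is to use the Yoneda lemma to translate orthogonality into a pointwise pullback condition. First I would observe that $\TimeCat$ has coproducts for pairs of objects with disjoint clock sets, given by disjoint union of the clocks together with combined $\delta$ functions. Consequently, in $\BaseCat$ we have a natural isomorphism $y(\FSA,\delta) \times y(\{\lambda\},n) \cong y((\FSA,\lambda), \delta[\lambda \mapsto n])$ whenever $\lambda \notin \FSA$, and under this isomorphism the first projection coincides with $y(\iota)$, where $\iota : (\FSA,\delta) \to ((\FSA,\lambda), \delta[\lambda \mapsto n])$ is the inclusion. This is likely the content of one of the preliminary lemmas mentioned in the text.

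For the forward direction, I would instantiate the orthogonality condition with the representable test object $Y = y(\FSA,\delta)$ and read off the resulting square via Yoneda. A map $Y \times y(\{\lambda\},n) \to A$ corresponds to an element $a \in A((\FSA,\lambda), \delta[\lambda \mapsto n])$, a map $Y \to B$ to an element $b \in B(\FSA,\delta)$, and commutativity becomes $p(a) = B(\iota)(b)$. A unique diagonal filler then corresponds precisely to a unique $\tilde a \in A(\FSA,\delta)$ satisfying both $A(\iota)(\tilde a) = a$ and $p(\tilde a) = b$, which is the universal property of the naturality square at $\iota$ being a pullback at $(\FSA,\delta)$.

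For the reverse direction, one obtains orthogonality for representable $Y$ by reversing the translation above. To extend to an arbitrary presheaf $Y$, I would construct the filler $h : Y \to A$ pointwise: for $(\FSC,\delta'') \in \TimeCat$ and $y \in Y(\FSC,\delta'')$, choose $\lambda' \notin \FSC$ and form the inclusion $\iota' : (\FSC,\delta'') \to ((\FSC,\lambda'), \delta''[\lambda' \mapsto n])$. The element $f(Y(\iota')(y), \lambda') \in A((\FSC,\lambda'), \delta''[\lambda' \mapsto n])$ together with $g(y) \in B(\FSC,\delta'')$ forms a compatible pair for the assumed pullback square at $\iota'$, producing a unique $h(y) \in A(\FSC,\delta'')$. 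Naturality of $h$ in $(\FSC,\delta'')$ and the two triangle identities then follow from uniqueness of the pullback lifts at the various stages, combined with the naturality of $f$, $g$, and $p$.

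The main obstacle is the reverse direction: although each step is formal, one must coherently handle the naturality of $h$ under arbitrary morphisms in $\TimeCat$ and verify independence of the choice of fresh clock name $\lambda'$. I expect Lemmas~\ref{lem:characterization-of-fibrations} and~\ref{lem:A:to:yoneda} to package this bookkeeping: the first plausibly reformulates the pullback condition in a more convenient form, perhaps as an orthogonality against representables $y(\FSA,\delta)$ for all $(\FSA,\delta)$; the second likely expresses morphisms out of a general $Y$ in terms of their representable stages, reducing the reverse direction to the representable case already handled.
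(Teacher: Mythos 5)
Your proposal is correct and would work, but it takes a more hands-on route than the paper, and your guess about what the two preliminary lemmas do is off in a way worth noting. The forward direction (instantiate at $Y = y(\FSA,\delta)$, observe $y(\FSA,\delta)\times y(\{\lambda\},n) \cong y((\FSA,\lambda),\delta[\lambda\mapsto n])$ and translate via Yoneda) is exactly right. For the reverse direction you propose to build the filler $h$ pointwise and verify naturality and the triangle laws by hand; that does go through — independence of the fresh name, naturality under $\tau : (\FSC,\delta'')\to(\FSC',\delta''')$, and $h\circ\pi_Y = f$ all reduce to applying the uniqueness clause of the pullback together with appropriate $\TimeCat$-morphisms such as $\tau[\lambda'\mapsto\lambda'']$ and $\id{\FSC}[\lambda'\mapsto\mu]$ — but it is a fair amount of diagram-chasing. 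The paper avoids all of this: Lemma~\ref{lem:characterization-of-fibrations} is not a restriction of the lifting problem to representable $Y$ but the standard \emph{exponential} characterization of orthogonality, namely that $p$ is orthogonal to $X$ iff the square formed by $c_A : A \to A^X$, $c_B : B \to B^X$, $p$ and $p^X$ is a pullback in $\BaseCat$; this replaces the quantification over all $Y$ with a single square in the ambient category. Lemma~\ref{lem:A:to:yoneda} then does contain your coproduct observation, but its payoff is the identification $A^{y(\{\lambda\},n)}(\FSA,\delta) \cong A((\FSA,\fresh\FSA),\delta[\fresh\FSA\mapsto n])$ with $c_A$ becoming $A(\iota)$. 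Since pullbacks in a presheaf category are computed pointwise, the exponential square is a pullback iff each of the pointwise squares in the statement is, and the proof is done in one line. In short: same key Yoneda computation, but the exponential reformulation replaces your explicit pointwise construction of the filler (and its naturality check) with the general fact that pullbacks in $\BaseCat$ are pointwise.
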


In particular, for any presheaf $A$, the unique map $A \to 1$ is invariant under clock introduction iff $A$ is a constant presheaf.
It will be an invariant of the interpretation defined here that the interpretation of any type is
invariant under clock introduction. 

  Lemma~\ref{lem:fibrations-iff-restrictions-pullback} can be restated in the following way for interpretations of types.

\begin{corollary}
  \label{cor:invariance-under-clock-intro-for-types}
  A family $A \in \Fam\Gamma$ is invariant under clock introduction if and only if for any $\FSA$, any $\lambda \not \in \FSA$, any inclusion $\iota^n : (\FSA, \delta) \to \left((\FSA,\lambda), \delta[\lambda\mapsto n]\right)$, and any $\gamma \in \Gamma_{(\FSA,\delta)}$, the action 
  \begin{equation}
    \label{eq:invariance:under:clock:intro}
    \iota^n\cdot (-) : A_{(\FSA,\delta)}(\gamma) \to
    A_{\left((\FSA,\lambda), \delta[\lambda\mapsto n]\right)}(\iota^n\cdot\gamma)
  \end{equation}
  is an isomorphism.
\end{corollary}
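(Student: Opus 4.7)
The plan is to apply Lemma~\ref{lem:fibrations-iff-restrictions-pullback} to the context projection map
\[
p : \den{\wfctx{\Gamma, x : A}} \to \den{\wfctx{\Gamma}},
\]
which by Definition~\ref{def:types-invariant-under-clock-intro} is invariant under clock introduction precisely when the type $\wftype{\Gamma}{A}$ is. The lemma rephrases invariance as the naturality squares with respect to each inclusion $\iota : \FSA \to \FSA, \lambda$ being pullbacks in $\sets$. The task is therefore to translate ``this square is a pullback'' into the fiberwise statement about $\den{A}$ claimed in the corollary.

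First I would unfold the fibers. By the definition of context extension in Figure~\ref{fig:interpretation-of-types-core}, an element of $\den{\wfctx{\Gamma, x : A}}_{(\FSA,\delta)}$ is a pair $(\gamma, a)$ with $\gamma \in \den{\Gamma}_{(\FSA,\delta)}$ and $a \in \den{A}_{(\FSA,\delta)}(\gamma)$, and $p_{(\FSA,\delta)}$ sends $(\gamma,a) \mapsto \gamma$. Hence the fiber of $p_{(\FSA,\delta)}$ over $\gamma$ is exactly $\den{A}_{(\FSA,\delta)}(\gamma)$. Moreover, the restriction map $\iota \cdot (-)$ on the context extension is $(\gamma, a) \mapsto (\iota\cdot\gamma,\iota\cdot a)$, so it restricts on fibers over $\gamma$ to the map~(\ref{eq:invariance:under:clock:intro}) whose domain is $\den{A}_{(\FSA,\delta)}(\gamma)$ and whose codomain is $\den{A}_{((\FSA,\lambda), \delta[\lambda\mapsto n])}(\iota\cdot\gamma)$.

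Next I would invoke the standard fact that a commutative square in $\sets$ is a pullback if and only if for every element of the bottom-left set, the induced map between the fibers over that element and its image in the bottom-right is a bijection. Applied to the square in Lemma~\ref{lem:fibrations-iff-restrictions-pullback}, this reads: the square is a pullback iff for every $\gamma \in \den{\Gamma}_{(\FSA,\delta)}$ the map $\iota \cdot (-)$ of~(\ref{eq:invariance:under:clock:intro}) is a bijection. Combining this with Lemma~\ref{lem:fibrations-iff-restrictions-pullback} yields both implications of the corollary.

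There is no serious obstacle; the only point to check carefully is that the ``any (equivalently all)'' clause in Lemma~\ref{lem:fibrations-iff-restrictions-pullback} transfers to the analogous quantification in the corollary, but this is immediate since the quantifiers over $\lambda$, $n$ and $\gamma$ match directly.
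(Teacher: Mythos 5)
Your proof is correct and takes the same route the paper intends: the paper gives no explicit proof, simply noting that the corollary is a restatement of Lemma~\ref{lem:fibrations-iff-restrictions-pullback} for context projections, and your argument supplies exactly the elementary fiberwise unfolding (pullback of sets $\Leftrightarrow$ bijection on fibers) that makes that restatement precise.
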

The proof of Lemma~\ref{lem:fibrations-iff-restrictions-pullback} uses the characterisation of internal orthogonality in Lemma~\ref{lem:characterization-of-fibrations} together with the characterisation of exponentiation with certain representable functors in Lemma~\ref{lem:A:to:yoneda}.

The following lemma is proved by a straightforward diagram chase.
\begin{lemma}
  \label{lem:characterization-of-fibrations}
  Suppose $\CC$ is cartesian closed, $X$ is an object of $\CC$ and $p : A \to B$ a morphism.
  Then $p$ is internally right orthogonal to $X$ if and only if
  \begin{displaymath}
    \begin{diagram}
      A \pullbacktip \ar{d}{p} \ar{r}{c_A} & A^X \ar{d}{p^X}\\
      B \ar{r}{c_B} & B^X
    \end{diagram}
  \end{displaymath}
  is a pullback.
  Here $c_A$ and $c_B$ are exponential transposes of projections $A \times X \to A$ and $B \times X \to B$ and
  $p^X$ is postcomposition with $p$.
\end{lemma}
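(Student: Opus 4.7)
The plan is to exploit the exponential adjunction $(- \times X) \dashv (-)^X$ to translate the orthogonality condition into the pullback condition, matching up the data of commutative squares on each side via transposition.

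First I would record two naturality identities used throughout. For any $g : Y \to B$, the exponential transpose of $g \comp \pi_Y : Y \times X \to B$ equals $c_B \comp g : Y \to B^X$; this is because $c_B$ is by definition the transpose of $\pi_B$, so $\mathrm{ev} \comp ((c_B \comp g) \times \id{X}) = \pi_B \comp (g \times \id{X}) = g \comp \pi_Y$. An analogous identity holds for $c_A$. Moreover, for any $f : Y \times X \to A$, the transpose of $p \comp f$ is $p^X \comp \tilde f$ by naturality of the adjunction in the codomain.

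Combining these, the commutativity equation $p \comp f = g \comp \pi_Y$ of the outer square in Definition~\ref{def:invariant-under-clock-introduction} transposes to $p^X \comp \tilde f = c_B \comp g$. Hence commutative squares over $p$ with left edge $\pi_Y$ correspond bijectively, via transposition of $f$, to cones $(\tilde f, g)$ over the cospan $A^X \xrightarrow{p^X} B^X \xleftarrow{c_B} B$. Similarly, a diagonal filler $h : Y \to A$ satisfying $h \comp \pi_Y = f$ and $p \comp h = g$ transposes to a morphism $h : Y \to A$ satisfying $c_A \comp h = \tilde f$ and $p \comp h = g$, i.e., a factorization of the cone $(\tilde f, g)$ through the candidate apex $(A, c_A, p)$.

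The orthogonality of $p$ and $X$ therefore asserts exactly that every such cone factors uniquely through $(A, c_A, p)$, which is precisely the universal property exhibiting the displayed square as a pullback. Both directions of the equivalence then follow at once. I do not anticipate any serious obstacle; the only care needed is to keep track of the direction of the two exponential transpositions and to verify the defining identity for $c_B$ (and symmetrically $c_A$) as the transpose of the relevant projection.
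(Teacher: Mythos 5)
Your proof is correct and is exactly the ``straightforward diagram chase'' the paper alludes to without spelling out: the exponential adjunction sets up a bijection between filler problems for $\pi_Y$ against $p$ and cone-factorization problems for the displayed square, under which the two universal properties coincide. The only tiny thing you left implicit is that the square itself commutes unconditionally (both $p^X \comp c_A$ and $c_B \comp p$ transpose to $p \comp \pi_A$), which is needed before one can speak of its universal property as a pullback; worth a one-line remark, but not a gap.
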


By the pullback lemma~\cite[Exercise~III.4.8]{MacLane:CWM}, we derive the following corollary. 

\begin{corollary}
  \label{cor:closed-under-factoring}
  If the morphisms $p \comp q$ and $p$ are internally right orthogonal to $X$ then so is $q$.
\end{corollary}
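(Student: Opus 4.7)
My plan is to translate everything through Lemma~\ref{lem:characterization-of-fibrations} and then apply the two-pullback lemma. Write $q : C \to A$ and $p : A \to B$, so $p\comp q : C \to B$. By Lemma~\ref{lem:characterization-of-fibrations}, orthogonality to $X$ for $p$, $q$, and $p\comp q$ respectively correspond to the three squares
\[
\begin{tikzcd}
C \ar{r}{c_C} \ar{d}[swap]{q} & C^X \ar{d}{q^X}\\
A \ar{r}{c_A} & A^X
\end{tikzcd}
\qquad
\begin{tikzcd}
A \ar{r}{c_A} \ar{d}[swap]{p} & A^X \ar{d}{p^X}\\
B \ar{r}{c_B} & B^X
\end{tikzcd}
\qquad
\begin{tikzcd}
C \ar{r}{c_C} \ar{d}[swap]{p\comp q} & C^X \ar{d}{(p\comp q)^X}\\
B \ar{r}{c_B} & B^X
\end{tikzcd}
\]
being pullbacks. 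Of these, the middle one and the right one are pullbacks by hypothesis, and I want to conclude that the left one is.

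The key observation is that the third square is precisely the vertical pasting of the first on top of the second. Concretely, $c_C, c_A, c_B$ are the exponential transposes of the respective projections, so $c_A \comp q = q^X \comp c_C$ holds by naturality of the transpose (both sides are the transpose of $q \comp \pi_C : C\times X \to A$), which matches the top square of the paste with the bottom square along $A \overset{c_A}\to A^X$. Similarly $(p\comp q)^X = p^X \comp q^X$, so the outer rectangle of the paste is exactly the square for $p\comp q$.

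Now I apply the pullback lemma~\citep[Exercise~III.4.8]{MacLane:CWM}: given two stacked squares whose bottom is a pullback, the outer rectangle is a pullback if and only if the top square is a pullback. Since the bottom square (for $p$) and the outer rectangle (for $p\comp q$) are pullbacks, the top square (for $q$) must also be a pullback. Appealing to Lemma~\ref{lem:characterization-of-fibrations} in the reverse direction then yields that $q$ is orthogonal to $X$.

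The only nontrivial step is the naturality identity $c_A \comp q = q^X \comp c_C$ used to identify the pasted diagram with the square for $p\comp q$; this is a one-line calculation with exponential transposes and presents no real obstacle. The argument is formally dual to the standard fact that the right class of an orthogonal factorisation system is closed under the ``two-out-of-three'' property for composition on the domain side.
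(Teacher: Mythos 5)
Your proof is correct and takes exactly the same route as the paper: translate orthogonality to the pullback characterisation of Lemma~\ref{lem:characterization-of-fibrations}, paste the two squares vertically along $A \to A^X$ (using naturality of the transpose to identify the paste with the square for $p\comp q$), and invoke the pullback pasting lemma. The paper's own proof is essentially a one-line citation of the pullback lemma; you have simply spelled out the same argument in full.
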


\begin{lemma}\label{lem:A:to:yoneda}
  Let $A$ be an object of $\BaseCat$. Let $\lambda$ be a clock and $n \in \NN$. As in Lemma~\ref{lem:object-of-clocks-is-colimit}
  we write simply $n$ for the map $\{\lambda\} \to \NN$ mapping $\lambda$ to $n$. 
  Then
  \[
    A^{y\left(\{\lambda\},n\right)} (\FSA, \delta) \iso A((\FSA,\fresh\FSA), \delta[\fresh\FSA\mapsto n])
  \]
  and up to this isomorphism, $c_A = A(\iota)$, where $\iota : (\FSA, \delta) \to ((\FSA,\fresh\FSA), \delta[\fresh\FSA\mapsto n])$
  is the inclusion.
\end{lemma}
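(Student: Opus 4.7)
The plan is a direct Yoneda calculation. First, I would recall that in the presheaf category $\BaseCat$ the exponential is given pointwise by
\[
  A^X(\FSA,\delta) \iso \Hom_{\BaseCat}(y(\FSA,\delta) \times X, A),
\]
an immediate consequence of combining the defining adjunction $\Hom(- \times X, A) \iso \Hom(-, A^X)$ with the Yoneda lemma. Specialising to $X = y(\{\lambda\},n)$ reduces the problem to identifying the set $\Hom_{\BaseCat}(y(\FSA,\delta) \times y(\{\lambda\},n), A)$.

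The second step is to identify this product of representables as itself representable. Since $y : \TimeCat^{\text{op}} \to \BaseCat$ takes coproducts in $\TimeCat$ to products in $\BaseCat$, it suffices to compute the coproduct of $(\FSA,\delta)$ and $(\{\lambda\},n)$ in $\TimeCat$. Picking any $\lambda' \notin \FSA$, I would check the universal property of $((\FSA,\lambda'), \delta[\lambda' \mapsto n])$ equipped with the inclusion of $\FSA$ and the map $\lambda \mapsto \lambda'$ as injections: given $\TimeCat$-morphisms $f : (\FSA,\delta) \to (\FSC,\delta_C)$ and $g : (\{\lambda\},n) \to (\FSC,\delta_C)$, the copaired function respects the $\delta_C$-bound because $\delta_C \comp f \leq \delta$ and $\delta_C(g(\lambda)) \leq n$. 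Choosing $\lambda' = \fresh\FSA$ yields $y(\FSA,\delta) \times y(\{\lambda\},n) \iso y((\FSA,\fresh\FSA), \delta[\fresh\FSA \mapsto n])$. A final use of Yoneda then identifies $\Hom_{\BaseCat}(y((\FSA,\fresh\FSA), \delta[\fresh\FSA \mapsto n]), A)$ with $A((\FSA,\fresh\FSA), \delta[\fresh\FSA \mapsto n])$, and composing the three isomorphisms yields the stated bijection.

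For the identification $c_A = A(\iota)$, I would trace $c_A$ through these isomorphisms. By definition $c_A(a)$, for $a \in A(\FSA,\delta)$, is the exponential transpose of the projection $A \times y(\{\lambda\},n) \to A$, which under the first iso above corresponds to the natural transformation $y(\FSA,\delta) \times y(\{\lambda\},n) \to A$ obtained by composing the first projection with the Yoneda representative $\hat a : y(\FSA,\delta) \to A$ of $a$. Under the coproduct identification of the previous step, that first projection is precisely $y(\iota)$, where $\iota$ is the first coproduct injection in $\TimeCat$, which is exactly the inclusion in the statement. Chasing through the final Yoneda isomorphism, the element of $A((\FSA,\fresh\FSA), \delta[\fresh\FSA \mapsto n])$ attached to $\hat a \comp y(\iota)$ is $(\hat a \comp y(\iota))(\operatorname{id}) = \hat a(\iota) = A(\iota)(a)$, as required.

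The argument is almost entirely bookkeeping with no substantive obstacle; the only mild subtlety is the need to pass from the ambient clock $\lambda$ to a clock $\fresh\FSA$ disjoint from $\FSA$ before computing the coproduct, so that the resulting object can be written literally as $((\FSA,\fresh\FSA), \delta[\fresh\FSA \mapsto n])$ rather than only up to a renaming of singletons.
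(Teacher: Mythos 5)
Your proof is correct and follows essentially the same route as the paper's: identify the product of representables $y(\FSA,\delta) \times y(\{\lambda\},n)$ with the representable on the coproduct in $\TimeCat$, apply the Yoneda lemma, and then trace $c_A$ through these isomorphisms to see it becomes $A(\iota)$. The only difference in presentation is that you spell out the verification that $((\FSA,\fresh\FSA), \delta[\fresh\FSA\mapsto n])$ satisfies the coproduct universal property in $\TimeCat$, a step the paper asserts in a single line.
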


\begin{proof}
In $\TimeCat$, the object $((\FSA,\fresh\FSA), \delta[\fresh\FSA\mapsto n])$ is a coproduct of $(\FSA, \delta)$ and 
$(\{\lambda\},n)$ with coproduct inclusions given by set inclusions (mapping $\lambda$ to $\fresh\FSA$). 
Since $y : \opcat \TimeCat \to \BaseCat$ 
preserves products, we get the following series of isomorphisms using the Yoneda lemma and standard definitions
of exponentials in presheaf categories:
\begin{align*}
 A^{y\left(\{\lambda\},n\right)} (\FSA, \delta) & = \Hom(y(\FSA,\delta)\times y\left(\{\lambda\},n\right),A) \\
 & \iso \Hom(y((\FSA,\fresh\FSA), \delta[\fresh\FSA\mapsto n]),A) \\
 & \iso A((\FSA,\fresh\FSA), \delta[\fresh\FSA\mapsto n])
\end{align*}

The morphism $c_A$ maps $x\in A(\FSA, \delta)$ to the natural transformation given by the composition of
the first projection $y(\FSA,\delta)\times y\left(\{\lambda\},n\right) \to y(\FSA,\delta)$ and the morphism 
$y(\FSA,\delta) \to A$ corresponding to $x$ under the Yoneda lemma. Since the projection corresponds 
to composition with $\iota$, the second statement of the lemma follows. 
\end{proof}

\begin{proofof}{Lemma~\ref{lem:fibrations-iff-restrictions-pullback}}
  \label{proof:lem:fibrations-iff-restrictions-pullback}
  Follows from Lemma~\ref{lem:characterization-of-fibrations} and Lemma~\ref{lem:A:to:yoneda}.
\end{proofof}

When interpreting syntax dependent types will be interpreted as families invariant under clock 
introduction. This will be used to prove soundness of the clock irrelevance axiom.
In fact, just to prove that, it would be enough that the interpretation of every type is 
internally right orthogonal to $\Cl$.
This is a slightly weaker statement than being invariant under clock introduction, as the next lemma states.
We have chosen to work with the latter because of the natural characterisation of Lemma~\ref{lem:fibrations-iff-restrictions-pullback}.

\begin{lemma} \label{lem:orthogonality-colimits}
Let $\CC$ be a cartesian closed category $\CC$ and let $X = \colim_i X_i$ be a connected colimit. 
If $p : A \to B$ is internally right orthogonal to all $X_i$, then it is also internally right orthogonal to $X$. 
As a consequence, if $p$ is invariant under clock introduction, it is also internally right orthogonal to $\Cl$. 
\end{lemma}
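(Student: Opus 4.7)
The plan is to exploit that $\BaseCat$, being a presheaf category, is cartesian closed, so the functor $Y \times (-)$ is a left adjoint (to the exponential $(-)^Y$) and therefore preserves arbitrary colimits. Hence $Y \times X \cong \colim_i (Y \times X_i)$, with coprojections $\iota_i : Y \times X_i \to Y \times X$ obtained as $\id{Y} \times \mu_i$, where $\mu_i : X_i \to X$ is the colimit coprojection of the given diagram; in particular $\pi_Y \circ \iota_i$ is the projection $Y \times X_i \to Y$.

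Given any commutative square as in Definition~\ref{def:invariant-under-clock-introduction} with top map $f : Y \times X \to A$ and bottom map $g : Y \to B$, I would precompose with each $\iota_i$ to obtain a square with top map $f \circ \iota_i$ and unchanged bottom map $g$. By orthogonality of $p$ to $X_i$, each such square admits a unique filler $h_i : Y \to A$ satisfying $p \circ h_i = g$ and $h_i \circ \pi_Y = f \circ \iota_i$.

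The crucial step, and the only place where connectedness is used, is to show that all the $h_i$ coincide. For any morphism $u : X_i \to X_j$ in the diagram, precomposing the $j$-th square with $\id{Y} \times u$ recovers the $i$-th square, so $h_j$ is itself a filler of the $i$-th square, and uniqueness gives $h_i = h_j$. Connectedness of the index category means any two indices are linked by a zigzag of such morphisms, so all the $h_i$ are equal to a common map $h : Y \to A$. The universal property of the colimit $Y \times X = \colim_i (Y \times X_i)$ then upgrades the pointwise identities $h \circ \pi_Y \circ \iota_i = f \circ \iota_i$ to the single identity $h \circ \pi_Y = f$, while uniqueness of $h$ is inherited from uniqueness of any one of the $h_i$.

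For the stated consequence, Lemma~\ref{lem:object-of-clocks-is-colimit} exhibits $\Cl$ as $\colim_{n \in \NN} y(\{\lambda\},n)$, indexed by the totally ordered set $\NN$; this diagram is directed and hence connected, so invariance under clock introduction (orthogonality to each $y(\{\lambda\},n)$) implies orthogonality to $\Cl$ by the first part.
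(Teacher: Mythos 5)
Your proof is correct. Note, though, that the paper itself does not give a proof of the first statement of this lemma --- it only remarks that ``the second statement of the lemma follows from the first by Lemma~\ref{lem:object-of-clocks-is-colimit}'' and leaves the first statement to the reader as a routine fact about orthogonality and colimits. Your argument is exactly the standard one that fills this gap: since $\CC$ is cartesian closed, $Y\times(-)$ is a left adjoint and so preserves the connected colimit, which lets you reduce the lifting problem against $Y\times X$ to the lifting problems against the $Y\times X_i$; connectedness is then used precisely where you use it, to identify the fillers $h_i$ along a zigzag of maps in the index category, and the joint epi property of the colimit coprojections upgrades the pointwise agreement $h\circ\pi_Y\circ\iota_i = f\circ\iota_i$ to $h\circ\pi_Y = f$. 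One small slip, which does not affect the substance: you open by invoking $\BaseCat$ ``being a presheaf category,'' but the lemma is stated for an arbitrary cartesian closed $\CC$, and indeed all your argument needs is that $Y\times(-)$ is a left adjoint, which holds in any cartesian closed category; you should phrase the opening sentence in those terms. The deduction of the consequence from Lemma~\ref{lem:object-of-clocks-is-colimit} via the connected (sequential) diagram over $\NN$ matches the paper's remark.
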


The second statement of the lemma follows from the first by Lemma~\ref{lem:object-of-clocks-is-colimit}.

The notion of internal orthogonality can be shown to be equivalent to the one used by 
\citeasnoun{Hyland:discrete-objects}, and the next lemma follows from \cite[Proposition~$2.1$]{Hyland:discrete-objects}.
Rather than proving this equivalence, we give here a direct proof.

\begin{proposition}
  \label{prop:orthogonality:closed}
  Suppose $\CC$ is a locally cartesian closed category and $X$ is an object in $\CC$.
  The notion of being internally right orthogonal to $X$ is then closed under composition, pullback (along \emph{arbitrary} maps), dependent products (along \emph{arbitrary} maps) and all isomorphisms are internally right orthogonal to $X$.
\end{proposition}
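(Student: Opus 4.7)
The plan is to handle each closure property by a direct application of the universal property of orthogonality, leveraging where needed the characterisation of orthogonality as a pullback square (Lemma~\ref{lem:characterization-of-fibrations}). The two easy cases come first. If $p$ is an \emph{isomorphism}, then for any commutative square with base $g : Y \to B$ and top $f : Y \times X \to A$, the map $p^{-1} \circ g : Y \to A$ is the unique filler: it satisfies $p \circ h = g$ immediately, and $h \circ \pi_Y = p^{-1} \circ g \circ \pi_Y = p^{-1} \circ p \circ f = f$, while uniqueness is forced by $p \circ h = g$. For \emph{composition}, given $p : A \to B$ and $q : B \to C$ both orthogonal to $X$ and a square for $q \circ p$, I would first apply orthogonality of $q$ to the square obtained by postcomposing the top map with $p$, extracting a unique $h : Y \to B$; the square with top $f$, right $p$, and base $h$ then commutes, so orthogonality of $p$ produces a unique filler $\tilde h : Y \to A$ of the original. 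Uniqueness of $\tilde h$ propagates because any other filler would induce the same intermediate $h$ by uniqueness for $q$, hence must equal $\tilde h$ by uniqueness for $p$.

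For \emph{pullback stability} along $f : C \to B$, I would argue directly from the universal property of the pullback $f^*A = A \times_B C$: given a square for $f^*p$ with top $u : Y \times X \to f^*A$ and base $v : Y \to C$, the composite $\pi_A \circ u : Y \times X \to A$ together with $f \circ v : Y \to B$ forms a commutative square over $p$, whose unique filler $h : Y \to A$ satisfies $p \circ h = f \circ v$ and therefore pairs with $v$ into a map $\langle h, v\rangle : Y \to f^*A$. Checking the two filler equations reduces to the defining equations of the two pullback projections, and uniqueness follows because the $A$-component of any rival filler would itself solve the already uniquely-solved lifting problem against $p$.

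The serious case is closure under \emph{dependent product} along $f : B \to C$, and for this I would use the adjunction $f^* \dashv \Pi_f$ between the slices $\CC/B$ and $\CC/C$. Given a lifting problem for $\Pi_f p$ against $X$ with test object $Y \to C$, I would view the square in $\CC/C$ and transpose along the adjunction; since pullback along $f$ commutes with binary products taken in $\CC$, the transposed diagram becomes a lifting problem for $p$ in $\CC/B$ against $X$, set over the object $f^*Y \times X \to B$. Orthogonality of $p$ then produces a unique filler there, which transposes back to a unique filler of the original square by the adjunction bijection. The main delicate point is the identification $f^*(Y \times X) \cong f^*Y \times X$, which is what makes the transposed problem literally an orthogonality problem for $p$ against the object $X$; once this identification is in place, both existence and uniqueness propagate across the natural bijection with no further work.
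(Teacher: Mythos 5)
Your proof is correct, and it takes a genuinely different route from the paper's for the first three closure properties. You argue directly from the universal property: for isomorphisms you exhibit $p^{-1}\circ g$ as the filler, for composition you factor the lift through the intermediate object, and for pullback stability you build the filler from the universal property of $A\times_B C$. The paper instead derives all three from Lemma~\ref{lem:characterization-of-fibrations}, which repackages orthogonality to $X$ as the statement that the square with corners $A$, $A^X$, $B$, $B^X$ is a pullback; composition and isomorphisms then follow from the pullback lemma, and pullback stability from a pasting argument combined with the observation that $(-)^X$, having a left adjoint, preserves pullbacks. The paper's route is more uniform and makes the closure properties manifest as closure properties of pullback squares, at the cost of relying on cartesian closedness throughout; your direct arguments for composition and pullback need no exponentials at all. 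For dependent products along $f$, you and the paper do essentially the same thing: transpose along $f^*\dashv\Pi_f$ and use the identification $(B\times_C Y)\times X\cong B\times_C(Y\times X)$ — which the paper establishes via the pullback pasting lemma — to recognise the transposed problem as an orthogonality problem for $p$ against $X$. One small imprecision in your write-up: the phrase ``pullback along $f$ commutes with binary products taken in $\CC$'' is not literally correct as stated (the right adjoint $f^*$ lands in $\CC/B$, whose products are fibred products over $B$, not cartesian products inherited from $\CC$). But since you single out the specific isomorphism $f^*(Y\times X)\cong f^*Y\times X$ as the delicate point, and that isomorphism does hold by the pasting argument, this is a matter of phrasing rather than a genuine gap.
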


\begin{proof}
 Closure under composition and the fact that isomorphisms are internally right orthogonal to $X$ follow straightforwardly from Lemma~\ref{lem:characterization-of-fibrations}.
 
 To prove the statement for pullbacks, suppose $p : B \to D$ is internally right orthogonal to $X$ and $q : A\to C$ is the pullback of $p$ along
 some map $g$ not assumed to be internally right orthogonal to $X$. By the pullback pasting lemma then the outer square below is a pullback.
  \begin{displaymath}
    \begin{diagram}
      \pullbacktip
      A \ar{r}{f} \ar{d}[swap]{q} & \pullbacktip B \ar{d}{p} \ar{r}{c_B} & B^X \ar{d}{p^X}\\
      C \ar{r}{g} & D \ar{r}{c_D} & D^X
    \end{diagram}
  \end{displaymath}
  By naturality of $c$, the below outer square is equal to the one above, and thus also a pullback.
  \begin{equation}
    \label{eq:fibrations-pullback-two-pullbacks}
    \begin{diagram}
      A \ar{r}{c_A} \ar{d}[swap]{q} & A^X \ar{d}{q^X} \ar{r}{f^X} & B^X \ar{d}{p^X}\\
      C \ar{r}{c_C} & C^X \ar{r}{g^X} & D^X
    \end{diagram}
  \end{equation}
  Since $-^X$ has a left adjoint it preserves pullbacks and so right square of (\ref{eq:fibrations-pullback-two-pullbacks})
  is a pullback. By the pullback lemma, also the left square is a pullback, and thus $q$ is internally right orthogonal to $X$. 

For dependent products, suppose 
$p : A \to B$ is internally right orthogonal to $X$, and $f : B \to C$. We must show that $\Pi_f(p)$ is internally right orthogonal to $X$, where $\Pi_f : \CC/B \to \CC/C$ is the right adjoint to pullback along $f$. We write $f^*(h) : B \times_C Y \to B$ for the result of applying the 
pullback functor to an object $h : Y \to C$ of $\CC/C$ and use the notation
\[
 \widehat{(-)} : \Hom_{\ccat/C}(h,\Pi_f(p)) \to \Hom_{\ccat/B}(f^*(h),p)
\]
for the isomorphism of hom-sets, given $h : Y \to C$. 

Given $Y, h, k$ as in the outer square on the left below, by naturality, the isomorphism $\widehat{(-)}$ extends to 
a bijective correspondence of diagonal fillers in the following two diagrams. 
\begin{equation} \label{eq:two:filler:diagrams}
    \begin{diagram}
      Y \times X \ar{r}{k} \ar{d}[swap]{\pi_Y} & \Pi_BA\ar{d}{\Pi_f(p)}\\
      Y \ar{r}{h} \ar[dotted]{ur} & C
    \end{diagram} \GAP
    \begin{diagram}
      B\times_C (Y \times X)  \ar{r}{\widehat k} \ar{d}[swap]{B\times_C \pi_Y} & A\ar{d}{p}\\
      B\times_C Y \ar{r}{f^*(h)} \ar[dotted]{ur} & B
    \end{diagram}
\end{equation}
where $B\times_C \pi_Y$ is the pullback functor applied to the morphism $\pi_Y : (h\comp\pi_Y) \to h$ in $\CC/C$. 

By the pullback pasting lemma, the following outer diagram is a pullback
  \begin{displaymath}
    \begin{largediagram}
      \pullbacktip
      (B \times_C Y) \times X \ar{d}[swap]{\pi_{(B \times_C Y)}} \ar{r}{\pi_Y^C \times \id{X}} & Y \times X \ar{d}{\pi_Y}\\
      \pullbacktip
      B \times_C Y \ar{r}{\pi^C_Y} \ar{d}[swap]{f^*(h)} & Y\ar{d}{h}\\
      B \ar{r}{f} & C
    \end{largediagram}
  \end{displaymath}
From this we conclude that there is an isomorphism $\phi: (B \times_C Y) \times X \iso B\times_C(Y \times X)$. 
An easy diagram chase verifies (using the universal property of the lower diagram above) that 
\[(B\times_C\pi_{Y})\comp \phi = \pi_{B \times_C Y} : (B \times_C Y) \times X \to (B \times_C Y). \]
Thus, the fillers of (\ref{eq:two:filler:diagrams}) are in bijective correspondence with the fillers of 
\[
    \begin{diagram}
      (B\times_C Y)\times X \ar{r}{\widehat k\circ \phi} \ar{d}[swap]{\pi_{B\times_C Y}} & A\ar{d}{p}\\
      B\times_C Y \ar{r}{f^*(h)} \ar[dotted]{ur} & B
    \end{diagram}
\]
Since $p$ is assumed to be internally right orthogonal to $X$, there is a unique filler of the diagram above, and thus a
unique filler of the left diagram of (\ref{eq:two:filler:diagrams}). This proves that $\Pi_f(p)$ is internally right orthogonal to $X$ as desired. 
\end{proof}

\begin{corollary} \label{cor:orthogonality:closed}
 In the CwF structure of $\BaseCat$, the collection of families invariant under clock introduction is closed under 
 the operations for taking $\Pi$-, and $\Sigma$-types as well as reindexing. 
\end{corollary}

\begin{lemma}
  \label{lem:later-invariant-under-clock-introduction}
  If $\kappa \in \ElCwF{\Gamma}{\clockfam[\uniquemap_{\Gamma}]}$ and 
  the family $A \in \Fam\Gamma$ is invariant under clock introduction then 
  $\laterfam{\kappa}A$ is invariant under clock introduction.
\end{lemma}
\begin{proof}
 The map $\iota\cdot (-)$ is defined to be the identity on $\{\star\}$ in the case of $\delta(\kappa(\gamma)) = 0$. In the 
 case of $\delta(\kappa(\gamma)) > 0$ it is defined as the action 
 $\minkappa\sigma{\kappa(\gamma)}(-)$ on $A$. By the assumption the latter is always an 
 isomorphism and thus so is $\iota\cdot(-)$. 
\end{proof}

We now show that invariance under clock introduction implies the soundness of the clock irrelevance axiom.
\begin{lemma}
  \label{lem:clock-irrelevance-sound}
  Suppose $A$ in $\Fam{\Gamma}$ is invariant under clock introduction and that $t$ is in 
  $\ElCwF{\Gamma}{\PiSem{\clockfam[\uniquemap_\Gamma]}{A[\p]}}$ 
  and $\kappa, \kappa' \in \ElCwF{\Gamma}{\clockfam[\uniquemap_{\Gamma}]}$. Then $\ev(t, \kappa) = \ev(t, \kappa')$.
\end{lemma}
\begin{proof}
  Note first that $\ev(t[\p], \q) \in \ElCwF{\compr{\Gamma}{\clockfam[\uniquemap_{\Gamma}]}}{A[\p]}$. Since
  $\compr{\Gamma}{\clockfam[\uniquemap_{\Gamma}]} = \Gamma\times \clocks$, this gives us the commutative
  outer diagram below.
  \begin{displaymath}
    \begin{diagram}
      \Gamma\times \clocks \ar{rr}{\cpair{\p}{\ev(t[\p], \q)}} \ar{d}{\p} && \compr\Gamma A \ar{d}{\p}\\
      \Gamma \ar[dotted]{urr}{u}  \ar{rr}{\id{}} && \Gamma
    \end{diagram}
  \end{displaymath}
  Since  $\p : \compr\Gamma A \to \Gamma$ is internally right orthogonal to $\clocks$ by Lemma~\ref{lem:orthogonality-colimits}, 
  there is a unique lifting $u$ as indicated in the diagram.
  
  Now, 
\begin{align*}
 \cpair{\p}{\ev(t[\p], \q)} \circ \cpair{\id{}}\kappa & = \cpair{\p \circ\cpair{\id{}}\kappa}{\ev(t[\p], \q)[\cpair{\id{}}\kappa]} \\
 & =   \cpair{\id{}}{\ev(t[\p][\cpair{\id{}}\kappa], \q[\cpair{\id{}}\kappa])} \\
 & =   \cpair{\id{}}{\ev(t, \kappa)} 
\end{align*}
  Since $\cpair{\p}{\ev(t[\p], \q)} = u\circ\p$ this implies
  \[
  \cpair{\id{}}{\ev(t, \kappa)}  = u\circ \p \circ \cpair{\id{}}\kappa = u
  \]
  Likewise we can prove that $\cpair{\id{}}{\ev(t, \kappa')} = u$ and so $\ev(t, \kappa) = \ev(t, \kappa')$.
\end{proof}

\section{Identity types}
\label{sec:identity:types}

Since $\BaseCat$ is a presheaf category it models \emph{extensional} identity types, i.e., identity types with the identity 
reflection axiom. Recall that the rules for these are
\begin{mathpar}
 \inferrule*{
  \hastype{\Gamma}tA \and \hastype{\Gamma}uA
 }{\wftype{\Gamma}{\idty Atu}
 } \and
 \inferrule*{\hastype\Gamma tA
 }{\hastype{\Gamma}{\refl A(t)}{\idty Att}
 }\and
 \inferrule*{\hastype{\Gamma}p{\idty Atu}
 }{\judgeeq \Gamma tu
 }
\end{mathpar}
In the CwF structure, this structure is defined, for $t,u \in \ElCwF{\Gamma}A$ as 
\[
\idfam{A}tu_{(\FSA,\delta)}(\gamma) = \{\star \mid t(\gamma) = u(\gamma)  \}
\]

\begin{lemma}
 Let $A\in \Fam\Gamma$ and $t,u\in \ElCwF\Gamma A$. If $A$ is invariant under clock introduction, so is $\idfam{A}tu$.
\end{lemma}

\begin{proof}
 We must show that if  $\iota : (\FSA, \delta) \to \left((\FSA,\lambda), \delta[\lambda\mapsto n]\right)$
 is given by the inclusion, then 
 \[
 \iota\cdot (-) : \{\star \mid t(\gamma) = u(\gamma)  \} \to \{\star \mid t(\iota\cdot \gamma) = u(\iota\cdot \gamma)  \}
 \]
 is an isomorphism. 
 First recall that since $t$ and $u$ are elements, $t(\iota\cdot \gamma) = \iota\cdot t(\gamma)$ and likewise for $u$. 
 Since $A$ is invariant under clock introduction, $\iota\cdot(-)$ is an isomorphism on $A$, and so $ t(\gamma) = u(\gamma)$
 if and only if $\iota\cdot t(\gamma) = \iota\cdot u(\gamma)$. This implies that $\iota\cdot(-)$ on $\idfam{A}tu$
 is also an isomorphism as required. 
\end{proof}

\section{Delayed substitutions}
\label{sec:delayed:subst}

In the simply typed setting the applicative functor~\cite{McBride:Applicative} structure of the later modality is essential.
For instance, it allows us to apply a term $f$ of type $\later{\kappa}(A \to B)$ to a term $t$ of type $\later{\kappa}{A}$ to get a term $f \delayapp{\kappa} t$ of type $\later{\kappa}{B}$; that is, if we have a function after one $\kappa$-step and if after one $\kappa$-step we have an argument, we can apply the function at the time, and get the result after one $\kappa$-step.

In $\gdtt$ the function types can be dependent, and thus to be able to use the later modality to its fullest, the applicative functor structure needs to be generalised, so that we can apply a term $f$ of type $\later{\kappa}{\left(\depprod{x}{A}{B}\right)}$ to a term $t$ of type $\later{\kappa}{A}$.
In $\gdtt$ the type of the delayed application $f \delayapp{\kappa} t$ becomes $\later{\kappa}[\hrt{x \gets t}]{B}$ where $\hrt{x \gets t}$ is a \emph{delayed substitution}, and $x$ is bound in $\later{\kappa}[\hrt{x \gets t}]{B}$.
If at some point we learn that $t$ is of the form $\next\kappa{t'}$ for some $t'$ we can actually perform the substitution and get the type $\later{\kappa}{B[t'/x]}$.
This process can be iterated, e.g., if $B$ is also a dependent product $\depprod{y}{C}{D}$ and $s$ is a term of type $\later{\kappa}[\hrt{x \gets t}]C$ then the delayed application $f \delayapp{\kappa} t \delayapp{\kappa} s$ is well-typed with type $\later{\kappa}[\hrt{x \gets t, y \gets s}]D$.

Delayed substitutions satisfy convenient judgemental equalities (listed in Figure~\ref{fig:eq-rules-delayed-subst}) which ensure that delayed substitutions can be manipulated in an intuitive way.
For example, if the type $A$ is well-formed without $x$ then the delayed substitution in $\later{\kappa}{\hrt{x \gets t}}A$ is redundant, and thus $\later{\kappa}{\hrt{x \gets t}}A \jeq \later{\kappa}{A}$.
Further, as explained above, if the term $t$ is of type $\next{\kappa}t'$ then we can perform an actual substitution, and thus $\later{\kappa}[\hrt{x \gets \next{\kappa}{t'}}]{B} \jeq \later{\kappa}{B[t'/x]}$.
Finally, the order of bindings in $\later{\kappa}[\hrt{x \gets t, y \gets s}]D$ matters only in as much as it usually does in dependent type theory.
That is, $x \gets t$ and $y \gets s$ can be exchanged provided $x$ does not appear in the type of $y$.

To conclude this introduction to delayed substitutions we remark that they can be attached to the term former $\next{\kappa}t$ as well and they enjoy analogous rules.
As shown in previous work~\cite{Bizjak-et-al:GDTT} a calculus with just these generalised $\next{\kappa}{}$ and $\later{\kappa}{}$ can express the delayed application construct which was primitive in simply typed calculi with guarded recursion.
We refer to~\cite{Bizjak-et-al:GDTT} for extensive examples of how to use delayed substitutions for reasoning about guarded recursive and coinductive terms.

The typing rules for delayed substitutions and related constructs are recalled in Figure~\ref{fig:typing-rules-for-delayed-subst} and the equality rules are recalled in Figure~\ref{fig:eq-rules-delayed-subst}.
We write $\dsubst{\kappa}{\xi}{\Gamma}{\Gamma'}$ for the delayed substitution $\xi$ from $\Gamma$ to $\Gamma'$.
Note that $\Gamma'$ is not a context, but a telescope such that $\Gamma, \Gamma'$ is a well-formed context.
The delayed substitution $\xi$ is a list of pairs written as $x \gets t$, which are successively well-typed in context $\Gamma$ of types derived from $\Gamma'$, as stated in the formation rule in Figure~\ref{fig:typing-rules-for-delayed-subst}.

\begin{figure}[ht]
  \textbf{Delayed substitutions}
  \begin{mathpar}
    \inferrule{%
      \wfctx{\Gamma}\and \hastype{\Gamma}{\kappa}{\clocktype}}{%
      \dsubst{\kappa}{\cdot}{\Gamma}{\cdot}}
    \and
    \inferrule{%
      \dsubst{\kappa}{\xi}{\Gamma}{\Gamma'} \and
      \wftype{\Gamma, \Gamma'}A \and
      \hastype{\Gamma}{t}{\later{\kappa}[\xi]{A}}}{%
      \dsubst{\kappa}{\xi\hrt{x \gets t}}{\Gamma}{\Gamma', x:A}}
  \end{mathpar}

  \textbf{Well-formed types}
  \begin{mathpar}
    \inferrule{%
      \wftype{\Gamma,\Gamma'}{A} \\
      \dsubst{\kappa}{\xi}{\Gamma}{\Gamma'}}{%
      \wftype{\Gamma}{\later{\kappa}[\xi]{A}}}
  \end{mathpar}

  \textbf{Well-typed terms}
  \begin{mathpar}
    \inferrule{%
      \hastype{\Gamma,\Gamma'}{t}{A} \\
      \dsubst{\kappa}{\xi}{\Gamma}{\Gamma'}}{%
      \hastype{\Gamma}{\next{\kappa}[\xi]{t}}{\later{\kappa}[\xi]{A}}}
  \end{mathpar}
  \caption{Typing rules involving delayed substitutions.}
  \label{fig:typing-rules-for-delayed-subst}
\end{figure}

\begin{figure}[ht]
  \textbf{Type equality}
  \begin{mathpar}
    \inferrule{%
      \dsubst{\kappa}{\xi\hrt{x\gets t}}{\Gamma}{\Gamma',x:B} \\
      \wftype{\Gamma,\Gamma'}{A}}{%
      \eqjudg{\Gamma}{\later{\kappa}[\xi\hrt{x \gets t}]{A}}{\later{\kappa}[\xi]{A}}}
    \and
    \inferrule{%
      \dsubst{\kappa}{\xi\hrt{x\gets t,y\gets u}\xi'}{\Gamma}{\Gamma',x:B,y:C,\Gamma''} \and
      \wftype{\Gamma,\Gamma'}{C} \and
      \wftype{\Gamma,\Gamma',x:B,y:C,\Gamma''}{A}}{%
      \eqjudg{\Gamma}{\later{\kappa}[\xi\hrt{x\gets t,y\gets u}\xi']{A}}
      {\later{\kappa}[\xi\hrt{y\gets u,x\gets t}\xi']{A}}}
    \and
    \inferrule{%
      \dsubst{\kappa}{\xi}{\Gamma}{\Gamma'} \\
      \wftype{\Gamma,\Gamma', x:B}{A} \\
      \hastype{\Gamma,\Gamma'}{t}{B}}{%
      \eqjudg{\Gamma}{\later{\kappa}[\xi\hrt{x\gets \next{\kappa}[\xi]{t}}]{A}}
      {\later{\kappa}[\xi]{A\subst{x}{t}}}}
    \and
    \inferrule{
    \wftype{\Gamma, \Gamma', \Gamma''}{A} \\
    \dsubst{\kappa}{\xi}{\Gamma}{\Gamma'} \\
    \dsubst{\kappa}{\xi'}{\Gamma}{\Gamma''}}
    {
    \eqjudg{\Gamma}{\later{\kappa}[\xi]{\later{\kappa}[\xi'] A}}{\later{\kappa}[\xi']{\later{\kappa}[\xi] A}}
    }
    \and
    \inferrule{
     \dsubst{\kappa}{\xi}{\Gamma}{\Gamma'} \and \hastype{\Gamma,\Gamma'}{t}{A} \and \hastype{\Gamma,\Gamma'}{s}{A}
    }{
    \judgeeq{\Gamma}
  {\idty{\later{\kappa}[\xi]{A}}{\next{\kappa}[\xi]{t}}{\next{\kappa}[\xi]{s}}}
  {\later{\kappa}[\xi]{\idty{A}{t}{s}}}
  }
  \end{mathpar}
  \textbf{Term equality}
  \begin{mathpar}
    \inferrule{%
      \dsubst{\kappa}{\xi\hrt{x\gets t}}{\Gamma}{\Gamma', x:B} \\
      \hastype{\Gamma,\Gamma'}{u}{A} }{%
      \eqjudg{\Gamma}{%
        \next{\kappa}[\xi\hrt{x\gets t}]{u}}{%
        \next{\kappa}[\xi]{u}}[%
      \later{\kappa}[\xi]{A}]}
    \and
    \inferrule{%
      \dsubst{\kappa}{\xi\hrt{x\gets t,y\gets u}\xi'}{\Gamma}{\Gamma',x:B,y:C,\Gamma''} \and
      \wftype{\Gamma,\Gamma'}{C} \and
      \hastype{\Gamma,\Gamma',x:B,y:C,\Gamma''}{v}{A} }{%
      \eqjudg{\Gamma}{%
        \next{\kappa}[\xi\hrt{x\gets t,y\gets u}\xi']{v}}{%
        \next{\kappa}[\xi\hrt{y\gets u,x\gets t}\xi']{v}}[%
      \later{\kappa}[\xi\hrt{x\gets t,y\gets u}\xi']{A}]}
    \and
    \inferrule{%
      \dsubst{\kappa}{\xi}{\Gamma}{\Gamma'} \\
      \hastype{\Gamma,\Gamma',x:B}{u}{A} \\
      \hastype{\Gamma,\Gamma'}{t}{B}}{%
      \eqjudg{\Gamma}%
      {\next{\kappa}[\xi\hrt{x \gets \next{\kappa}[\xi]{t}}]{u}}%
      {\next{\kappa}[\xi]{u\subst{x}{t}}}%
      [\later{\kappa}[\xi]{A\subst{x}{t}}] }
    \and
    \inferrule{%
      \hastype{\Gamma}{t}{\later{\kappa}[\xi]{A}} }{%
      \eqjudg{\Gamma}%
      {\next{\kappa}[\xi\hrt{x\gets t}]{x}}%
      {t}%
      [\later{\kappa}[\xi]{A}]} 
    \and
    \inferrule{
    \wftype{\Gamma, \Gamma', \Gamma''}{A} \\
    \hastype{\Gamma, \Gamma', \Gamma''}{u}{A} \\
    \dsubst{\kappa}{\xi}{\Gamma}{\Gamma'} \\
    \dsubst{\kappa}{\xi'}{\Gamma}{\Gamma''}}
    {
    \eqjudg{\Gamma}{\next{\kappa}[\xi]\next{\kappa}[\xi']u}{\next{\kappa}[\xi']\next{\kappa}[\xi]u}[\later{\kappa}[\xi]{\later{\kappa}[\xi'] A}]
    }
  \end{mathpar}
  \caption{Equality rules involving delayed substitutions.}
  \label{fig:eq-rules-delayed-subst}
\end{figure}

The typing rule for $\prev{}$ is generalised in~\cite{Bizjak-et-al:GDTT} to allow elimination also of $\later{}$ with attached
delayed substitutions. We now recall that rule and show that it is admissible. 

\begin{proposition} \label{prop:generalised:prev}
 For any delayed substitution $\dsubst{\kappa}{\xi}{\Gamma, \kappa}{\Gamma'}$, there is a substitution 
\[\adv{\kappa}\xi : 
\Gamma, \kappa : \clocktype \to \Gamma, \kappa : \clocktype, \Gamma'\] 
defined as 
\begin{align*}
  \adv{\kappa}{(\cdot)} &= \id{\Gamma, \kappa}\\
  \adv{\kappa}{(\xi'[x \mapsto s])} &= (\adv{\kappa}{\xi'})[x \mapsto{\alwaysapp{(\prev\kappa. s)}\kappa}]
\end{align*}
such that whenever  $\wftype{\Gamma, \kappa, \Gamma'}A$ and $\hastype{\Gamma,\kappa : \clocktype}{t}{\later\kappa[\xi] A}$
also {$\hastype{\Gamma}{\prev\kappa . t}{\forall\kappa . A(\adv{\kappa}\xi)}$}.
Moreover, the following equality rule holds
\[
\inferrule*{
\hastype{\Gamma, \kappa, \Gamma'}{u}{A} \\
\dsubst{\kappa}{\xi}{(\Gamma, \kappa : \clocktype)}{\Gamma'}
}{
\eqjudg{\Gamma}{\prev\kappa . \next\kappa[\xi] u}{\Lambda\kappa .u(\adv\kappa(\xi))}[\forall\kappa . A(\adv\kappa(\xi))]
}
\]
\end{proposition}

Note that in the definition of $\adv{\kappa}{(\xi'[x \mapsto s])}$, the typing assumption on $s$ is 
\[
\hastype{\Gamma}{s}{\later\kappa[\xi'] A}
\]
and so the typing of $\adv{}$ relies on the second statement of the proposition. Thus the statements of welltypedness of $\adv\kappa(\xi)$
and of $\prev\kappa . t$ must be proved by simultaneous induction over the length of $\xi$. 

\begin{proof}
 We first define the concept of applying $\nxt\kappa$ to a substitution obtaining a delayed substitution. This should be thought of
 as an inverse operation to advancing a delayed substitution. Let $\sigma : \Gamma \to \Gamma, \Gamma'$ be a substitution
 which restricted to the context $\Gamma$ is the identity. Define
 \[
 \dsubst{\kappa}{\next\kappa(\sigma)}{\Gamma}{\Gamma'}
 \]
 by induction on the size of $\Gamma'$ by 
 \[
 \next\kappa\left(\sigma[x \mapsto u]\right) = \next\kappa(\sigma)[x \gets \next\kappa(u)]
 \]
 This is welltyped, since by assumption $\hastype\Gamma u{A\sigma}$ and so 
 $\hastype\Gamma {\next\kappa(u)}{\later\kappa (A\sigma)}$ 
 and 
 \[\later\kappa (A\sigma) \jeq \later\kappa[\next\kappa(\sigma)] A.\] 
 Since $\next\kappa(\adv\kappa (\xi)) \jeq\xi$ by the $\eta$ rule for $\prev{}$ the assumed type of $t$ in the statement of 
 the proposition is 
\begin{align*}
 \later\kappa[\xi] A & \jeq \later\kappa[\next\kappa(\adv\kappa (\xi))] A \\
 & \jeq \later\kappa A(\adv\kappa (\xi))
\end{align*}
by repeated application of the first and third rule of Figure~\ref{fig:eq-rules-delayed-subst}. 
Thus, $\hastype{\Gamma}{\prev\kappa . t}{\forall\kappa . A(\adv{\kappa}\xi)}$ as desired.
The equality rule stated at the end of the proposition follows analogously as
\begin{align*}
  \prev\kappa . \next\kappa[\xi] u
  &\jeq \prev\kappa . \next\kappa[\hrt{\next\kappa(\adv\kappa(\xi))}] u\\
  &\jeq \prev\kappa . \next\kappa \left(u(\adv\kappa(\xi))\right)\\
  &\jeq \Lambda\kappa .u(\adv\kappa(\xi))
\end{align*}
where the last equality is the $\beta$ rule for $\prev{}$ from Figure~\ref{fig:basic:syntax}.
\end{proof}

\subsection{Semantics of delayed substitutions.}

Let $\Gamma$ be an object of $\BaseCat$. A telescope over $\Gamma$ is a sequence of families $(A_1, \dots A_n)$, 
such that $A_{i+1} \in \Fam{\compr{\compr{\Gamma}{A_1\dots}}{A_i}}$ for each $i$. Let 
$\kappa \in \ElCwF{\Gamma}{\clockfam[\uniquemap_{\Gamma}]}$. We define the sets of
\emph{delayed sequence of elements} $\delayedEl{\Gamma}{\Gamma'}\kappa$ to be the set of mappings $\xi$,
associating to each $(\FSA,\delta)$ and $\gamma \in \Gamma_{(\FSA,\delta)}$ such that 
$\delta(\kappa(\gamma)) > 0$ a sequence $(\xi_1, \dots, \xi_n)$ such that 
\[
\xi_{i+1}(\gamma) \in A_{i+1}(\tick{\kappa(\gamma)}{\FSA}\cdot \gamma, \xi_1(\gamma), \dots, \xi_i(\gamma))
\]
and such that for every $\sigma : (\FSA,\delta) \to (\FSB,\delta')$ such that $\delta'(\kappa(\sigma\cdot \gamma)) > 0$
\[
\minkappa\sigma{\kappa(\gamma)}\cdot (\xi_{i+1}(\gamma)) = \xi_{i+1}(\sigma\cdot\gamma)
\]
Given a telescope $(A_1, \dots A_{n+1})$ over $\Gamma$ and  
$\xi \in \delayedEl{\Gamma}{(A_1, \dots A_{n})}\kappa$, define the family $\laterfam\kappa\xi . A_{n+1}$ over $\Gamma$
as
\begin{align*}
(\laterfam\kappa\xi . A_{n+1})(\gamma) & = 
  \begin{cases}
    \{\star\} & \text{ if } \delta(\kappa(\gamma)) = 0\\
    A_{n+1}(\tick{\kappa(\gamma)}{\FSA}\cdot \gamma, \xi_1(\gamma), \dots, \xi_n(\gamma)) & \text{ otherwise }
  \end{cases}
\end{align*}
with action $\sigma\cdot(-)$ defined using $\minkappa\sigma{\kappa(\gamma)}\cdot(-)$ on $A_{n+1}$. Note that this implies
that if $(\xi_1, \dots, \xi_n) \in \delayedEl{\Gamma}{(A_1, \dots, A_{n})}\kappa$ and $\xi_{n+1}$ is an element in 
$\laterfam\kappa\xi . A_{n+1}$ then  $(\xi_1, \dots, \xi_{n+1})$ is in the set $\delayedEl{\Gamma}{(A_1, \dots, A_{n+1})}\kappa$. 

If $t$ is an element
in $A_{n+1}$ define $\nextel\kappa\xi . t$ as an element of $\laterfam\kappa\xi . A_{n+1}$
by 
\begin{align*}
(\nextel\kappa\xi . t)(\gamma) & = 
  \begin{cases}
    \star & \text{ if } \delta(\kappa(\gamma)) = 0\\
    t(\tick{\kappa(\gamma)}{\FSA}\cdot \gamma, \xi_1(\gamma), \dots, \xi_n(\gamma)) & \text{ otherwise }
  \end{cases}
\end{align*}
If $\Gamma'$ is a telescope over $\Gamma$ and $\rho : \Delta \to \Gamma$, there is a telescope
$\Gamma'[\rho]$ over $\Delta$ and if further $\xi \in \delayedEl{\Gamma}{\Gamma'}\kappa$ 
we can define the reindexing $\xi[\rho] \in \delayedEl{\Delta}{\Gamma'[\rho]}{\kappa[\rho]}$ as 
$(\xi[\rho])_i(\gamma) = \xi_i(\rho(\gamma))$. The two above
constructions commute with reindexing in the sense that $(\laterfam\kappa\xi . A_{n+1})[\rho] =
\laterfam{\kappa[\rho]}\xi[\rho] . (A_{n+1}[\cpair{\rho\p^n}{\q[\p^{n-1}], \cdots, \q}])$ and likewise for $\nextel{}$. 

There are semantic correspondences to all of the syntactic equalities of Figure~\ref{fig:eq-rules-delayed-subst}, but
we only state and prove a few of these. 
We use notation similar to the syntax for delayed substitutions, e.g., if $\Gamma' = (A_1, \dots, A_n)$ is a telescope over 
$\Gamma$ we write $\compr{\Gamma}{\Gamma'}$ for $\compr{\compr{\Gamma}{A_1\dots}}{A_n}$. If 
$(\xi_1, \dots, \xi_n) \in \delayedEl{\Gamma}{(A_1, \dots, A_{n})}\kappa$ and $\xi_{n+1}$ is an element in 
$\laterfam\kappa\xi . A_{n+1}$ we write $\xi[\xi_{n+1}]$ for  $(\xi_1, \dots, \xi_{n+1})$.

\begin{theorem} \label{thm:later:del:subst:rules}
Let $\Gamma'$ be a telescope over $\Gamma$, $\kappa$ an element of $\clockfam[\uniquemap_{\Gamma}]$ 
and $\xi \in \delayedEl{\Gamma}{\Gamma'}\kappa$
\begin{enumerate}
\item \label{item:beta} If $t \in \ElCwF{\compr\Gamma{\Gamma'}}{B}$
 and $A \in \Fam{\compr{\compr\Gamma{\Gamma'}}B}$. Then 
 \[
 \laterfam\kappa\xi[\nextel\kappa\xi . t] . A =  \laterfam\kappa\xi . (A[\cpair{\id{}}t])
 \]
 \item If also $\Gamma''$ is a telescope over $\Gamma$, $\xi' \in \delayedEl{\Gamma}{\Gamma''}\kappa$ 
 and $A$ is in
 $\Fam{\compr{\compr\Gamma{\Gamma'}}{\Gamma''[\p]}}$ where 
 $\p : \compr\Gamma{\Gamma'} \to \Gamma$ then 
 \[
 \laterfam\kappa\xi . (\laterfam\kappa(\xi'[\p]) . A) = \laterfam\kappa\xi' . (\laterfam\kappa(\xi[\p]) . A[\mathsf{swap}])
 \]
 where $\mathsf{swap}: \compr{\compr\Gamma{\Gamma'}}{\Gamma''[\p]}
 \to \compr{\compr\Gamma{\Gamma''}}{\Gamma'[\p]}$ is the obvious map. 
 \item If $A$ is in $\Fam{\compr\Gamma{\Gamma'}}$ and 
 $t, u \in \ElCwF{\compr\Gamma{\Gamma'}}A$. Then
 \[
 \idfam{\laterfam\kappa \xi . A}{\nextel\kappa \xi .t}{\nextel\kappa \xi .u} = \laterfam\kappa \xi .(\idfam Atu)
 \]
 \end{enumerate}
\end{theorem}

\begin{proof}
 Write $\xi = (\xi_1, \dots, \xi_n)$. For the first one in the case of $\delta(\kappa(\gamma))>0$ we get  
\begin{align*}
 \laterfam\kappa\xi[\nextel\kappa\xi . t] . A(\gamma) 
 & = A(\tick{\kappa(\gamma)}{\FSA}\cdot \gamma, \xi_1(\gamma), \dots, \xi_n(\gamma), \nextel\kappa\xi . t(\gamma)) \\
 & = A(\tick{\kappa(\gamma)}{\FSA}\cdot \gamma, \xi_1(\gamma), \dots, \xi_n(\gamma), 
 t(\tick{\kappa(\gamma)}{\FSA}\cdot \gamma, \xi_1(\gamma), \dots, \xi_n(\gamma))) \\
 & = A[\cpair{\id{}}t](\tick{\kappa(\gamma)}{\FSA}\cdot \gamma, \xi_1(\gamma), \dots, \xi_n(\gamma)) \\
 & = \laterfam\kappa\xi . (A[\cpair{\id{}}t])(\gamma) 
\end{align*}
In the second one if $\delta(\kappa(\gamma))<2$ both sides are $\{\star\}$. Otherwise, writing $\xi(\gamma)$ for 
$(\xi_1(\gamma), \dots, \xi_n(\gamma))$ and likewise for $\xi'$ we get
\begin{align*}
 \laterfam\kappa\xi . (\laterfam\kappa(\xi'[\p]) . A)(\gamma) 
 & = (\laterfam\kappa(\xi'[\p]) . A)(\tick{\kappa(\gamma)}{\FSA}\cdot \gamma, \xi(\gamma)) \\
 & = A(\tick{\kappa(\gamma)}{\FSA}\cdot\tick{\kappa(\gamma)}{\FSA}\cdot \gamma, \tick{\kappa(\gamma)}{\FSA}\cdot\xi(\gamma),   \xi'(\tick{\kappa(\gamma)}{\FSA}\cdot\gamma)) \\
     & = A(\tick{\kappa(\gamma)}{\FSA}\cdot\tick{\kappa(\gamma)}{\FSA}\cdot \gamma, \xi(\tick{\kappa(\gamma)}{\FSA}\cdot\gamma),  \tick{\kappa(\gamma)}{\FSA}\cdot\xi'(\gamma)) \\
     & = A[\mathsf{swap}](\tick{\kappa(\gamma)}{\FSA}\cdot\tick{\kappa(\gamma)}{\FSA}\cdot \gamma, \tick{\kappa(\gamma)}{\FSA}\cdot\xi'(\gamma), \xi(\tick{\kappa(\gamma)}{\FSA}\cdot\gamma)) \\
  & =\laterfam\kappa(\xi[\p]) . A[\mathsf{swap}](\tick{\kappa(\gamma)}{\FSA}\cdot \gamma, \xi(\gamma)) \\
  & = \laterfam\kappa\xi' . (\laterfam\kappa(\xi[\p]) . A[\mathsf{swap}])(\gamma)
\end{align*}

In the last statement, if $\delta(\kappa(\gamma)) = 0$ both sides are $\{\star\}$. Otherwise 
\begin{align*}
 \idfam{\laterfam\kappa \xi . A}{\nextel\kappa \xi .t}{\nextel\kappa \xi .u}(\gamma) 
 & = \{\star\mid (\nextel\kappa \xi .t)(\gamma) = (\nextel\kappa \xi .u)(\gamma)  \} \\
 & = \{\star\mid t(\tick{\kappa(\gamma)}{\FSA}\cdot\gamma, \xi(\gamma)) = u(\tick{\kappa(\gamma)}{\FSA}\cdot\gamma, \xi(\gamma)) \} \\
 & = \idfam Atu(\tick{\kappa(\gamma)}{\FSA}\cdot\gamma, \xi(\gamma)) \\
 & = \laterfam\kappa \xi .(\idfam Atu)(\gamma)
\end{align*}
\end{proof}

Finally we note that the collection of families invariant under clock introduction is closed under $\laterfam{}$.

\begin{proposition}\label{prop:laterfam:inv:clock:intro}
 If $A$ is  invariant under clock introduction so is $\laterfam{\kappa}\xi . A$.
\end{proposition}

\begin{proof}
 The conclusion follows directly from the hypothesis since $\iota\cdot(-)$ on $\laterfam{\kappa}\xi . A$
 is defined to be $\minkappa\iota{\kappa(\gamma)}\cdot(-)$ as defined on $A$ when $\delta(\kappa(\gamma)) > 0$
 and the identity when $\delta(\kappa(\gamma)) = 0$. 
\end{proof}

\section{Universes}
\label{sec:universes}

We now assume we are given a set theoretic universe with its induced notion of small sets. Being a presheaf category, 
$\BaseCat$ has a universe object $\Usem{}$ and a dependent type $\Elsem{}$ of elements defined as 
in~\cite{Hofmann-Streicher:lifting}, as we now recall. If $(\FSA,\delta)$ is a 
time object, then the set $\Usem{}(\FSA,\delta)$ is the set of small families over $y(\FSA, \delta)$. Concretely, 
an element $X$ in $\Usem{}(\FSA,\delta)$ assigns to each $\sigma : (\FSA,\delta) \to (\FSB,\delta')$ a small set 
$X_\sigma$ and to each $\tau : (\FSB,\delta') \to (\FSC,\delta'')$ a map $\tau\cdot(-) : X_\sigma\to X_{\tau\sigma}$
in a functorial way. 
The action $\sigma\cdot (-) : \Usem{}(\FSA,\delta) \to \Usem{}(\FSB,\delta')$ maps an $X$ to the family $(X_{\tau\sigma})_\tau$.
The family $\Elsem{}$ is defined as $\Elsem{}_{(\FSA,\delta)}(X) = X_{\id{}}$ with action $\sigma\cdot(-) :
\Elsem{}_{(\FSA,\delta)}(X) \to \Elsem{}_{(\FSB,\delta')}(\sigma\cdot X)$ defined as $\sigma\cdot(-) : X_{\id{}} \to X_\sigma$. 

One might hope that this universe could be used to model an extension of $\coregdtt$ with one universe. 
However, 
$\Usem{}$ is not a constant presheaf, and therefore not invariant under clock introduction. Another way to see this is that
the map
\[
\later{} : \Cl\times \Usem{} \to \Usem{}
\]
defined, at $(\FSA,\delta)\in \BaseCat$, as 
\[
(\later{}(\lambda, X))_{\sigma : (\FSA, \delta) \to (\FSB,\delta')} = 
\begin{cases}
 1 & \text{ if } \delta'(\sigma(\lambda)) = 0 \\
 X_{\tick{\sigma(\lambda)}{} \circ \sigma} & \text{ else }
\end{cases}
\]
does not factor through the second projection. One can restrict the universe $\Usem{}$ 
to the families invariant under clock introduction, i.e., those $X$ such that $\iota \cdot (-) : X_\sigma \to X_{\iota\sigma}$
is an isomorphism for $\iota$ of the relevant form, but this 
does not rule out the problematic map, and so does not eliminate the problem. Note that $\later{}$ above does indeed encode 
the constructor $\laterfam{}$ since if $A \in \ElCwF{\Gamma}{\Usem{}[\uniquemap_{\Gamma}]}$ 
and $\kappa \in  \ElCwF{\Gamma}{\clockfam[\uniquemap_{\Gamma}]}$ then, 
if $\delta(\kappa(\gamma))>0$, 
\begin{align*}
\Elsem{}[(\later{}(\kappa, A))](\gamma) & = (\later{}(\kappa(\gamma), A(\gamma)))_{\id{}} \\
& = (A(\gamma))_{\tick{\kappa(\gamma)}{}} \\
& = (A(\tick{\kappa(\gamma)}{}\cdot \gamma))_{\id{}} \\
& = (\Elsem{}[A])(\tick{\kappa(\gamma)}{}\cdot \gamma) \\
& = \laterfam\kappa(\Elsem{}[A])(\gamma)
\end{align*}

To avoid this problem 
we follow the approach of $\gdtt$ and introduce, for each finite set of clock variables $\Delta$, a universe of types 
depending on the clocks in $\Delta$. An element in this universe is to be thought of as being constant in the dimensions
outside $\Delta$, and the operation $\later{\kappa}$ is only defined on the universe for $\kappa \in \Delta$. This rules
out the more general $\later{}$ operation mentioned above. 
It also means that universes are now indexed over a new dimension
(clock contexts). We show that the operations on types are polymorphic in this dimension. 

\subsection{Universes in \gdtt}

We first describe the syntax of Tarski style universes in \gdtt. The basic rules are listed in Figure~\ref{fig:universes}
and the rules for type operations on the universes are listed in Figure~\ref{fig:universes:dep:prod}. The type $\univ{\Delta}$ 
can be formed in a context $\Gamma$, whenever $\Delta$ is a sequence of clocks in that context, but the equality
rules say that the universes formed by two lists are equal if the lists contain the same elements. Inclusions between sets of
clocks induce inclusions between universes and these commute with taking types of elements as well as with all
type operations. This is the notion of universe polymorphism in the clock dimension referred to above. The universe
$\univ\Delta$ is closed under $\later\kappa$, but only for $\kappa \in \Delta$ thus avoiding the problem 
described above. The choice of domain type $\later\kappa\univ\Delta$
for $\latercode\kappa{}$ ensures that guarded recursive types can be defined by guarded recursion. For example, if
$B : \univ\Delta$ and $\kappa \in \Delta$ we can define a type of guarded recursive streams over $B$ as
\[
\gstreamcode\kappa B \defeq \fix\kappa A. B \timescode \,\latercode\kappa A : \univ\Delta 
\]
where $\timescode$ is encoded using $\Sigma$-types in the usual way. Then
\begin{align*}
 \elems\Delta(\gstreamcode\kappa B) 
 & \jeq \elems\Delta(B \timescode \,\latercode\kappa(\next\kappa(\gstreamcode\kappa B))) \\
 & \jeq \elems\Delta(B) \times \elems\Delta(\latercode\kappa(\next\kappa(\gstreamcode\kappa B))) \\
 & \jeq \elems\Delta(B) \times \later\kappa(\elems\Delta(\gstreamcode\kappa B)) 
\end{align*}
thus satisfying the expected type equality for guarded recursive streams over $\elems\Delta(B))$. 
In the last equality of Figure~\ref{fig:universes:dep:prod}, the typing assumption on $A$ is
$\hastype\Gamma A{\later\kappa\univ\Delta}$ and since 
${\hastype{\Gamma, x : \univ\Delta}{\univin{\Delta}{\Delta'}{(x)}}{\univ{\Delta'}}}$
also $\hastype\Gamma{\next{\kappa}[\hrt{x \gets A}]{\univin{\Delta}{\Delta'}{(x)}}}{\later\kappa \univ{\Delta'}}$

\begin{figure}[tbp]
\paragraph{Formation and typing rules}
\begin{mathpar}
 \inferrule*{
 \wfclock{\Gamma}{\kappa_1} \\ \dots \\ \wfclock{\Gamma}{\kappa_n}
 }
 {\wftype{\Gamma}{\univ{\kappa_1,\dots,\kappa_n}}} \and
 \inferrule*{
 \hastype{\Gamma}{t}{\univ{\Delta}}
 }
 {\wftype{\Gamma}{\elems\Delta(t)}} \and
 \inferrule*{
 \hastype{\Gamma}{t}{\univ{\Delta}} \\ \wftype{\Gamma}{\univ{\Delta'}} \\ \Delta\subseteq \Delta'
 }{
 \hastype{\Gamma}{\univin{\Delta}{\Delta'} (t)}{\univ{\Delta'}}}
\end{mathpar}
\paragraph{Equations}
\begin{align*}
 \univ\Delta & \jeq \univ{\Delta'} \GAP \text{if $\Delta =\Delta'$ as sets}  \\
 \elems{\Delta'}(\univin{\Delta}{\Delta'} (t)) & \jeq \elems{\Delta}(t) \\
  \univin{\Delta'}{\Delta''}(\univin{\Delta}{\Delta'}(t)) & \jeq \univin{\Delta}{\Delta''}(t)
\end{align*}
\begin{center}
\caption{Universes in \gdtt.}
\label{fig:universes}
\end{center}
\end{figure}

\begin{figure}[tbp]
\paragraph{Formation and typing rules}
\begin{mathpar}
 \inferrule*{
 \hastype{\Gamma}{A}{\univ{\Delta}} \\ 
 \hastype{\Gamma, x: \elems\Delta(A)}B{\univ{\Delta}}
 }
 {
  \hastype{\Gamma}{\depprodcode\Delta{x}{A}B}{\univ\Delta}
 }  \and
 \inferrule*{
 \hastype{\Gamma}{A}{\univ{\Delta}} \\ 
 \hastype{\Gamma, x: \elems\Delta(A)}B{\univ{\Delta}}
 }
 {\hastype{\Gamma}{\depsumcode\Delta{x}{A}B}{\univ\Delta}}  
 \and
  \inferrule*{
 \hastype{\Gamma, \kappa: \clocktype}A{\univ{\Delta,\kappa}}\\
 \kappa\notin\Delta
 }
 {
  \hastype{\Gamma}{\forallcode\kappa . A}{\univ\Delta}
 } \and
 \inferrule*
 {%
   \kappa \in \Delta%
   \and%
   \hastype{\Gamma}{A}{\later{\kappa}{\univ{\Delta}}}%
 }%
 {%
   \hastype{\Gamma}{\latercode{\kappa}{A}}{\univ{\Delta}}%
 }
\end{mathpar}
\paragraph{Equations}
\begin{align*}
 \elems\Delta(\depprodcode{\Delta}{x}{A}B) & \jeq \depprod x{\elems\Delta A}{\elems\Delta B} \\
 \elems\Delta(\depsumcode{\Delta}{x}{A}B) & \jeq \depsum x{\elems\Delta A}{\elems\Delta B} \\
 \elems\Delta(\forallcode\kappa . A) & \jeq \forall\kappa . \elems{\Delta,\kappa} (A) \\
 \elems{\Delta}\left(\latercode{\kappa}{\next{\kappa}[\xi]{A}}\right)
 & \jeq \later{\kappa}[\xi]\left(\elems{\Delta}{A}\right)\\
 \depprodcode{\Delta'}x{\univin{\Delta}{\Delta'} (A)}{\univin{\Delta}{\Delta'} (B)} 
 & \jeq \univin{\Delta}{\Delta'} (\depprodcode{\Delta}{x}{A}B) \\
 \depsumcode{\Delta'}x{\univin{\Delta}{\Delta'} (A)}{\univin{\Delta}{\Delta'} (B)} 
 & \jeq \univin{\Delta}{\Delta'} (\depsumcode{\Delta}{x}{A}B) \\
 \univin{\Delta}{\Delta'}{(\forallcode\kappa . A)} & \jeq \forallcode\kappa . \univin{(\Delta,\kappa)}{(\Delta',\kappa)}{(A)} \\
 \univin{\Delta}{\Delta'}{(\latercode{\kappa}{A})}
 & \jeq \latercode{\kappa}{(\next{\kappa}[\hrt{x \gets A}]{\univin{\Delta}{\Delta'}{(x)}})} 
\end{align*}
\begin{center}
\caption{Syntax for codes for basic operations on types}
\label{fig:universes:dep:prod}
\end{center}
\end{figure}

\subsection{A family of semantic universes}

To model the universe $\univ\Delta$, we must be in a context where $\Delta$ is defined, and the smallest
syntactic context where this happens is the one with $|\Delta|$-many clock variables. This is modelled by the $\BaseCat$
object $\Cl^\Delta$ defined as $\Cl^\Delta(\FSA,\delta) = \FSA^{\Delta}$. Note here that we treat $\Delta$ as a set, and so
the exponential $\FSA^{\Delta}$ is the ordinary set-theoretic one. The universe $\univ\Delta$ will be modelled as 
a family $\Usem\Delta$ over $\Cl^\Delta$. Recall that such a type corresponds to a (covariant) presheaf over the category of elements of 
$\Cl^\Delta$, i.e., the category whose elements are triples $(\FSA,\delta, f)$, such that the first two components
constitute an object in $\TimeCat$ and the last is a map $f : \Delta \to \FSA$. A morphism 
$\sigma : (\FSA,\delta, f) \to (\FSB,\delta', g)$ is a morphism $\sigma : (\FSA,\delta) \to (\FSB,\delta')$ in $\TimeCat$ such that 
$g = \sigma f$.
We will write $\gr\Delta$ for the category of covariant presheaves over this category, and use the same notation ($\gr\Delta$) for
the CwF-structure defined similarly to the CwF structure on $\BaseCat$. The semantic universe $\Usem\Delta$
will be an object in $\gr\Delta$ and the type $\elems\Delta$ will be modelled as a family $\Elsem\Delta$ over $\Usem\Delta$.

To avoid the problem described above with the standard universe in $\BaseCat$, the universe $\Usem\Delta$ 
should restrict access at level $(\FSA, \delta, f)$ to the clocks defined in $\Delta$. To do this, we define
$\Usem\Delta_{(\FSA, \delta, f)}$ to be the set of small families in $\BaseCat$ 
over $y(f[\Delta], \restrict{\delta}{f[\Delta]})$ invariant under clock introduction. Here $f[\Delta] \subseteq\FSA$ is the image
of $f$, and the notion of small families should be understood as described above. In other words, an element of 
$\Usem\Delta_{(\FSA, \delta, f)}$ is a family of sets 
$X_\sigma$ indexed over morphisms $\sigma$ in $\TimeCat$ with domain $(f[\Delta], \restrict{\delta}{f[\Delta]})$ 
together with maps $\tau\cdot (-) : X_\sigma \to X_{\tau\sigma}$ satisfying functoriality.
The requirement of invariance under clock introduction means that if 
$\iota : (\FSB,\delta') \to ((\FSB,\lambda),\delta'[\lambda \mapsto n])$ is an inclusion, then $\iota\cdot(-)$ must be an isomorphism.
 
For $\sigma : (\FSA,\delta, f) \to (\FSB,\delta', \sigma f)$ we must define $\sigma\cdot(-) : \Usem{\Delta}_{(\FSA,\delta, f)} \to \Usem{\Delta}_{(\FSB,\delta', \sigma f)}$.
Denote by $\cores\sigma{}$ the restriction and corestriction of $\sigma$:
\[
\cores \sigma{} : (f[\Delta], \restrict{\delta}{f[\Delta]}) 
\to (\sigma f[\Delta], \restrict{\delta'}{\sigma f[\Delta]}) \, .
\]
Using this, we define the family $(\sigma\cdot X)_\tau = X_{\tau\cores\sigma{}}$ 
for $\tau :  (\sigma f[\Delta], \restrict{\delta'}{\sigma f[\Delta]}) \to (\FSB, \delta'')$. 
Note that this is well-defined, i.e., if 
$X$ is invariant under clock introduction, so is $\sigma \cdot X$.

Since $\gr\Delta$ is equivalent to the slice of $\BaseCat$ over $\Cl^{\Delta}$, the notion of 
invariance under clock introduction extends to objects and families in $\gr\Delta$ by requiring 
the same for their corresponding projection maps in $\BaseCat$. By 
Lemma~\ref{lem:fibrations-iff-restrictions-pullback} this can be reformulated as requiring that 
the maps $\iota \cdot(-)$ induced by maps of the form 
$\iota : (\FSA,\delta, f) \to ((\FSA,\lambda),\delta[\lambda\mapsto n], \iota f)$ are
isomorphisms.

\begin{lemma} \label{lem:Usem:invariant:clock}
 The object $\Usem{\Delta}$ is invariant under clock introduction. 
\end{lemma}

\begin{proof}
  If $\iota : (\FSA,\delta, f) \to ((\FSA,\lambda),\delta[\lambda\mapsto n], \iota f)$ 
  is the inclusion then $\iota f[\Delta] = f[\Delta]$, and  $\cores{\iota}{}$ is the identity,
  so $(\iota\cdot X)_\tau = X_{\tau\cores\iota{}} = X_\tau$, i.e., $\iota\cdot(-)$ is the identity
  and therefore an isomorphim.
\end{proof}

If $X$ is an element in $\Usem{\Delta}_{(\FSA,\delta, f)}$, define 
\[
  \Elsem{\Delta}_{(\FSA,\delta, f)}(X) = X_{i : (f[\Delta], \restrict{\delta}{f[\Delta]}) \to (\FSA,\delta)}
\]
where $i$ is the inclusion. If $\sigma : (\FSA,\delta) \to (\FSB,\delta')$ we must define 
\[
  \sigma\cdot(-) : \Elsem{\Delta}_{(\FSA,\delta, f)}(X) 
  \to \Elsem{\Delta}_{(\FSB,\delta', \sigma f)}(\sigma\cdot X)
\]
The codomain of this map is 
\begin{align*}
 \Elsem{\Delta}_{(\FSB,\delta', \sigma f)}(\sigma\cdot X) & = 
(\sigma\cdot X)_{j} 
 = X_{j\circ \cores{\sigma}{}} 
 = X_{\sigma\circ i}
\end{align*}
where $j : \sigma[f[\Delta]] \to \FSB$ is the inclusion. 
We can therefore define $\sigma\cdot(-) : X_{i} \to X_{\sigma\circ i}$ to be the map that is part of the 
structure of $X$. 

\begin{lemma} \label{lem:Elsem:invariant:clock}
 The family $\Elsem{\Delta}$ over $\Usem{\Delta}$ is invariant under clock introduction. 
\end{lemma}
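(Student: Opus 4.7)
The plan is to apply Corollary~\ref{cor:invariance-under-clock-intro-for-types} to reduce the statement to an easy unfolding. Concretely, it suffices to show that for every $(\FSA, \delta) \in \TimeCat$, every $\lambda \notin \FSA$, every $n \in \NN$ with associated inclusion $\iota : (\FSA, \delta) \to ((\FSA, \lambda), \delta[\lambda \mapsto n])$, and every pair $(\gamma, X)$ in the base of $\Elsem{\Delta}$ (so $\gamma \in \Cl^\Delta(\FSA, \delta)$ and $X \in \Usem{\Delta}_{(\FSA, \delta)}(\gamma)$), the restriction map
\[
  \iota \cdot (-) : \Elsem{\Delta}_{(\FSA, \delta)}(\gamma, X) \to \Elsem{\Delta}_{((\FSA, \lambda), \delta[\lambda \mapsto n])}(\iota \cdot \gamma, \iota \cdot X)
\]
is a bijection.

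Next I would unwind the defining formulas. Write $i : (\gamma[\Delta], \restrict{\delta}{\gamma[\Delta]}) \to (\FSA, \delta)$ for the canonical inclusion, so the domain above equals $X_i$. Since $\lambda \notin \FSA$ we have in particular $\lambda \notin \gamma[\Delta]$, hence $(\iota \cdot \gamma)[\Delta] = \gamma[\Delta]$ and $\restrict{\delta[\lambda \mapsto n]}{\gamma[\Delta]} = \restrict{\delta}{\gamma[\Delta]}$. Consequently the corestriction $\cores{\iota}{\gamma[\Delta]}$ is the identity morphism on $(\gamma[\Delta], \restrict{\delta}{\gamma[\Delta]})$, and the codomain simplifies as
\[
  (\iota \cdot X)_{\iota \circ i} \;=\; X_{\iota \circ i \circ \cores{\iota}{\gamma[\Delta]}} \;=\; X_{\iota \circ i}.
\]
Tracing the definition of the action on $\Elsem{\Delta}$, the map in question is then exactly the structural map $X_i \to X_{\iota \circ i}$ obtained by applying the functorial structure of $X$ to the clock-introduction inclusion $\iota$.

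Finally I would close by invoking the invariance condition built into the very definition of $\Usem{\Delta}$: an element $X$ of $\Usem{\Delta}_{(\FSA, \delta)}(\gamma)$ is, by stipulation, a small dependent type whose action along any inclusion of the form $(\FSB, \delta') \to ((\FSB, \mu), \delta'[\mu \mapsto m])$ is a bijection. Applying this to the inclusion $\iota$ above yields the required bijectivity. The main bookkeeping hurdle is keeping straight that the restriction $\iota \cdot X$ in $\Usem{\Delta}$ is defined by precomposition with a corestriction, and verifying that this corestriction collapses to the identity in the present case because the freshly introduced clock $\lambda$ lies outside $\gamma[\Delta]$; everything else reduces to the very invariance property used to cut $\Usem{\Delta}$ out of the naive Hofmann--Streicher universe.
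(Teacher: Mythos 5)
Your proposal is correct and follows essentially the same route as the paper's proof: you unwind the definition of $\Elsem{\Delta}$ and its restriction maps to identify $\iota\cdot(-)$ with the structural map $X_i \to X_{\iota\circ i}$ of $X$, and then close by invoking the invariance-under-clock-introduction condition that was built into the definition of $\Usem{\Delta}$. The paper's proof is slightly terser (it leans on the computation $(\sigma\cdot X)_j = X_{\sigma\circ i}$ already performed just above the lemma), while you recompute the collapse of the corestriction to the identity explicitly, but the content is the same.
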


\begin{proof}
 Let $\iota : (\FSA,\delta, f) \to ((\FSA,\lambda),\delta[\lambda\mapsto n], \iota f)$ be the inclusion and
 let $X$ be an element in $\Usem{\Delta}_{(\FSA,\delta, f)}$. We must show that 
 \[\iota\cdot(-) :  \Elsem{\Delta}_{(\FSA,\delta, f)}(X) \to 
 \Elsem{\Delta}_{((\FSA,\lambda),\delta[\lambda\mapsto n], \iota f)}(\iota\cdot X)
 \]
 is an isomorphism. By definition of $\Elsem{\Delta}$ this map is $\iota\cdot(-) : X_i \to X_{\iota\circ i}$,
 which is part of the structure of $X$. Since $X$ is an element in the universe $\Usem{\Delta}$ it must be 
 invariant under clock introduction, which means exactly that all maps of the form $\iota\cdot(-)$ 
 are isomorphisms. 
\end{proof}

We now describe an abstract construction that leads to the universe $\Usem\Delta$. This construction will
not be used in the remainder of the paper, and so is not of technical importance, but perhaps of 
conceptual interest to some readers. Consider the universe $\Usem{}$ of small families invariant
under clock introduction in $\BaseCat$. There is a functor $F$ from $\BaseCat$ to $\gr\Delta$ 
mapping an object $\Gamma$ to
the presheaf whose value at $(\FSA, \delta, f)$ is $\Gamma(f[\Delta],\restrict\delta{f[\Delta]})$. This 
extends to families by mapping $A$ over $\Gamma$ to the family whose value at 
$\gamma \in \Gamma(f[\Delta],\restrict\delta{f[\Delta]})$ is $A(i\cdot \gamma)$. Note that this is not a mapping
of CwFs, since it does not preserve comprehension:
\begin{align*}
 F(\compr\Gamma A)(\FSA, \delta, f) 
 & = \{\pair \gamma a \mid \gamma \in \Gamma(f[\Delta],\restrict\delta{f[\Delta]}), a\in A(\gamma) \} \\
 \compr{F(\Gamma)}{F(A)}(\FSA, \delta, f) 
 & = \{\pair \gamma a \mid \gamma \in \Gamma(f[\Delta],\restrict\delta{f[\Delta]}), a\in A(i \cdot\gamma) \}
\end{align*}
(Although it does up to isomorphism if attention is restricted to 
families invariant under clock introduction). The universe $\Usem\Delta$
is this mapping applied to $\Usem{}$ and $\Elsem\Delta$ is the same mapping applied to the family of elements 
over $\Usem{}$. 

The next key lemma gives a partial answer to the question of what the universes $\Usem\Delta$
classify. The answer is partial, since it only applies in contexts invariant under clock introduction. 
As we shall see below, this result is sufficient for constructing codes for type operations on the universes.

\begin{lemma} \label{lem:Ufam:class}
 Let $\Gamma$ be an object in $\gr\Delta$ invariant under clock introduction and let $A$ be a small family over 
 $\Gamma$, also
 invariant under clock introduction. There is a unique code $\code A : \Gamma \to \Usem\Delta$ in $\gr\Delta$ such that
 $A = \Elsem\Delta[\code A]$.
\end{lemma}

\begin{proof}
 The assumption of invariance under clock introduction implies that for any object $(\FSA, \delta, f)$ the map $i \cdot(-)$
 induced by $i : (f[\Delta], \restrict\delta{f[\Delta]}, f) \to (\FSA, \delta, f)$ is an isomorphism on $\Gamma$. We will
 write $\inv i\cdot (-)$ for the inverse map. The code $\code A$ is defined as 
 \[
 (\code A_{(\FSA, \delta, f)}(\gamma))_{\tau : (f[\Delta], \restrict\delta{f[\Delta]}, f) \to (\FSB, \delta', \tau f)}
 = A_{(\FSB, \delta', \tau f)}(\tau \cdot \inv i\cdot \gamma)
 \]
 We first show that this defines a map of presheaves: If $\sigma : (\FSA, \delta, f) \to (\FSA', \delta', \sigma f)$ 
 and $\tau : (\sigma f[\Delta], \restrict{\delta'}{\sigma f[\Delta]}, \sigma f) \to (\FSC, \delta'', \tau \sigma f)$ then
\begin{align*}
 (\code A_{(\FSB, \delta', \sigma f)}(\sigma \cdot \gamma))_{\tau}
 & = A_{(\FSC, \delta'', \tau\sigma f)}(\tau \cdot \inv j\cdot \sigma \cdot \gamma)
\end{align*}
where $j : (\sigma f[\Delta], \restrict{\delta'}{\sigma f[\Delta]}, \sigma f)  \to (\FSB, \delta', \sigma f)$ is the inclusion.
Since $\sigma i = j \cores\sigma{}$ also $\inv j\cdot \sigma \cdot \gamma = \cores\sigma{} \cdot \inv i \cdot \gamma$
and so 
\begin{align*}
 (\code A_{(\FSB, \delta', \sigma f)}(\sigma \cdot \gamma))_{\tau}
 & = A_{(\FSC, \delta'', \tau\sigma f)}(\tau \cdot \cores\sigma{} \cdot \inv i\cdot \gamma) \\
  & = (\code{A}_{(\FSA, \delta, f)}(\gamma))_{\tau\cores\sigma{}} \\
  & = (\sigma\cdot(\code{A}_{(\FSA, \delta, f)}(\gamma)))_{\tau}
\end{align*}
so $(\code{A}(\sigma\cdot\gamma)) = \sigma\cdot(\code{A}(\gamma))$ meaning that $\code A$ is a map of presheaves.

This definition defines a code for $A$ since
\begin{align*}
 (\Elsem\Delta[\code A])(\gamma) & = \Elsem\Delta(\code A(\gamma)) 
  = (\code A(\gamma))_i 
  = A(i\cdot \inv i\cdot \gamma) 
  = A(\gamma)
\end{align*}
For uniqueness, suppose $\rho : \Gamma \to \Usem\Delta$ satisfies $\Elsem\Delta[\rho] = A$. We must show that 
$\rho(\gamma) = \code A(\gamma)$ for all $\gamma$, but consider first the case where $\gamma\in \Gamma_{(\FSA,\delta, f)}$
for $f$ surjective. In that case $(\code A(\gamma))_\tau = A(\tau\cdot \gamma)$ and
\begin{align*}
 (\rho(\gamma))_\tau & = (\rho(\gamma))_{j\overline\tau} 
  = (\tau\cdot \rho(\gamma))_j
  = (\rho(\tau \cdot \gamma))_j 
  = \Elsem\Delta(\rho(\tau\cdot\gamma)) 
  = A(\tau \cdot \gamma)
\end{align*}
where $j : \tau f[\Delta] \to \FSB$ is the inclusion. In general (when $f$ is not surjective) the above implies
\[
\rho(\gamma) = \rho(i\cdot \inv i\cdot \gamma) = i\cdot \rho(\inv i\cdot \gamma) = i\cdot \code A(\inv i\cdot \gamma) 
= \code A(i\cdot \inv i\cdot \gamma) = \code A(\gamma)
\]
proving uniqueness.
\end{proof}

\subsection{Reindexing universes}

The idea for interpreting the formation rule for the universes $\univ\Delta$ in a semantic
context $\Gamma$, is to interpret each $\kappa \in \Delta$ as an element of 
$\clockfam[\uniquemap_{\Gamma}]$, then use this to define a map from $\Gamma$
to $\Cl^\Delta$ in $\BaseCat$, and reindex the universe $\Usem\Delta$ along this map. 
The last of these steps uses the fact that an object of $\gr\Delta$ can be considered
a family of $\BaseCat$ over $\Cl^\Delta$. In fact, these two notions are equivalent. 

In order to prove the substitution lemma, we will generalise the above idea slightly 
as follows. Suppose $\clocksetA$ is a finite set of morphisms from $\Gamma$ to
$\Cl$, and suppose we are given a surjective map from some set $\Delta$ to $\clocksetA$
inducing a map $\clocksetAmap : \Gamma \to \Cl^\Delta$. Define 
\begin{align*}
\Usem{\clocksetA} & \eqdef \Usem\Delta[\clocksetAmap] & 
\Elsem{\clocksetA} & \eqdef \Elsem\Delta[\cpair{\clocksetAmap\circ \p}{\q}] 
\end{align*}

\begin{proposition} \label{prop:Usem:welldef}
 The objects $\Usem{\clocksetA}$ and $\Elsem{\clocksetA}$ are welldefined in the sense that
 they do not depend on the choice of $\Delta$ or surjection inducing $\clocksetAmap$. Moreover,
 if $\rho : \Gamma' \to \Gamma$ then 
\begin{align*}
\Usem{\clocksetA\circ \rho} & = \Usem{\clocksetA}[\rho] & 
\Elsem{\clocksetA\circ \rho} & = \Elsem{\clocksetA}[\cpair{\rho\circ \p}{\q}]
\end{align*}
where $\{\kappa_1, \dots, \kappa_n\} [\rho] = \{\kappa_1[\rho], \dots, \kappa_n[\rho]\}$.
\end{proposition}

\begin{proof}
 If $\gamma \in \Gamma_{(\FSA, \delta)}$ the element $\clocksetAmap(\gamma)$ is a map $\Delta \to \FSA$. 
 By definition, $\Usem{\clocksetA} (\gamma) = \Usem\Delta(\clocksetAmap(\gamma))$
 is the set of small families $(X_\tau)_{\tau : (\clocksetAmap(\gamma)[\Delta], \restrict\delta{\clocksetAmap(\gamma)[\Delta]}) 
 \to (\FSB, \delta')}$. Since the map $\Delta \to \clocksetA$ is assumed surjective, 
 $\clocksetAmap(\gamma)[\Delta] = \{\kappa(\gamma) \mid \kappa\in \clocksetA\}$ and thus independent
 of the choice of $\Delta$ and surjection. Since $\Elsem{\clocksetA}(\gamma)(X) = X_i$ for $i$ the inclusion, 
 also $\Elsem{\clocksetA}$ is welldefined. For the last statement, note that $\clocksetAmap \circ \rho$ 
 is the map corresponding to the composition $\Delta \to \clocksetA \to \clocksetA[\rho]$, where the last
 of these maps $\kappa$ to $\kappa[\rho]$, and this map is surjective. Therefore, 
 $\Usem{\clocksetA\circ \rho}$ can be defined as $\Usem\Delta[\clocksetAmap\circ\rho] = \Usem{\clocksetA}[\rho]$.  
 The equality $\Elsem{\clocksetA\circ \rho} = \Elsem{\clocksetA}[\cpair{\rho\circ \p}{\q}]$ follows similarly. 
\end{proof}

The codes on universes will be defined below by constructing objects $A_\Delta$ in $\gr\Delta$
and families $B_\Delta$ over $A_\Delta$ indexed over $\Delta$ in such a way that whenever
$\clocksetA$ is as above the families $A_\clocksetA = A_\Delta[\clocksetAmap]$ and 
$B_\clocksetA = B_\Delta[\cpair{\clocksetAmap\circ \p}{\q}]$ are well defined, i.e., independent of choice of
$\Delta$ and surjection $\Delta \to \clocksetA$. In this case, if each $A_\Delta$ and $B_\Delta$ are 
invariant under clock introduction, by Lemma~\ref{lem:Ufam:class} there is a unique code 
$\code{B_\Delta} : A_\Delta \to \Usem\Delta$ in $\gr\Delta$ 
such that $\Elsem\Delta[\code{B_\Delta}] = B_\Delta$. In this situation we would like to define 
\begin{align*}
 \code{B_{\clocksetA}} & = \code{B_\Delta}[\clocksetAmap] : A_{\clocksetA} \to \Usem{\clocksetA}
\end{align*}
as a map in the category of presheaves over the elements of $\Gamma$.

\begin{lemma}\label{lem:codes:welldef}
 In the situation described above, the map $\code{B_{\clocksetA}}$ is well defined, i.e., independent
 of the choice of $\Delta$ and surjection $\Delta \to \clocksetA$. Moreover, 
 $\Elsem{\clocksetA}[\code{B_{\clocksetA}}] = B_{\clocksetA}$, and if $\rho : \Gamma' \to \Gamma$
 then $\code{B_{\clocksetA\circ \rho}} = \code{B_{\clocksetA}}[\rho]$
\end{lemma}

\begin{proof}
 Suppose we are given two different surjections $f : \Delta \to \clocksetA$ and $f' : \Delta' \to \clocksetA$ 
 inducing $\clocksetAmap : \Gamma \to \Cl^\Delta$ and $\clocksetAmap' : \Gamma \to \Cl^{\Delta'}$. We will
 assume there is an surjection $g : \Delta \to \Delta'$ such that $f'g=f$, otherwise apply the argument to each of the two maps
 in the span of projections $\Delta \leftarrow \Delta \times \Delta' \to \Delta'$. Note that the projections are always surjective,
 since $\Delta$ is empty iff $\clocksetA$ is empty iff $\Delta'$ is empty. Since $f'g = f$ also 
 $\Cl^g\circ \clocksetAmap' = \clocksetAmap$. 

 We first prove that $A_{\Delta}[\Cl^g] = A_{\Delta'}$. For this, observe that there is a family 
 \[
 \Phi = \{ \ev_\kappa : \Cl^{\Delta'} \to \Cl \mid \kappa \in \Delta'\}
 \]
 and a surjection $\Delta' \to \Phi$ mapping $\kappa$ to $\ev_\kappa$. The induced map $\Cl^{\Delta'} \to \Cl^{\Delta'}$
 is the identity. There is also a map $\Delta \to \Phi$ mapping $\kappa$ to $\ev_{g(\kappa)}$. Since $g$ is surjective, also
 this is surjective, and induces $\Cl^g : \Cl^{\Delta'} \to \Cl^\Delta$. Thus by assumption 
\begin{align*}
 A_{\Delta}[\Cl^g] & = A_{\Delta'} & B_{\Delta}[\cpair{\Cl^g\circ \p}{\q}] = B_{\Delta'}
\end{align*}
In particular, these arguments apply to $\Usem{\Delta}$ and $\Elsem{\Delta}$ proving 
\begin{align*}
 \Usem{\Delta}[\Cl^g] & = \Usem{\Delta'} & \Elsem{\Delta}[\cpair{\Cl^g\circ \p}{\q}] = \Elsem{\Delta'}
\end{align*}
In the second of these equations $\Elsem{\Delta'}$  is considered a family of $\BaseCat$ over 
$\compr{\Cl^{\Delta'}}{\,\Usem{\Delta'}}$. Equivalently, $\Elsem{\Delta'}$  can be considered a
family of $\gr{\Delta'}$ over $\Usem{\Delta'}$, and $(-)[\Cl^g]$ a morphism of CwFs from $\gr{\Delta}$ 
to $\gr{\Delta'}$. From this latter point of view the second equation above is $\Elsem{\Delta}[\Cl^g] = \Elsem{\Delta'}$, and so 
\begin{align*}
\Elsem{\Delta'}[\code{B_\Delta}[\Cl^g]]
 = \Elsem{\Delta}[\Cl^g][\code{B_\Delta}[\Cl^g]]
 = (\Elsem{\Delta}[\code{B_\Delta}])[\Cl^g]
 = B_{\Delta}[\Cl^g] = B_{\Delta'}
\end{align*}
Thus, by the uniqueness statement of Lemma~\ref{lem:Ufam:class} $\code{B_\Delta}[\Cl^g] = \code{B_{\Delta'}}$. So, finally
\begin{align*}
 \code{B_\Delta}[\clocksetAmap] = \code{B_\Delta}[\Cl^g\circ \clocksetAmap'] = \code{B_{\Delta'}}[\clocksetAmap']
\end{align*}
proving welldefinedness of $\code{B_\clocksetA}$. The equality $\Elsem{\clocksetA}[\code{B_{\clocksetA}}] = B_{\clocksetA}$
follows from the fact that $(-)[\clocksetAmap]$ induces a morphism of CwFs:
\begin{align*}
 \Elsem{\clocksetA}[\code{B_{\clocksetA}}] & = \Elsem{\Delta}[\code{B_{\Delta}}][\clocksetAmap] = B_\Delta[\clocksetAmap] = B_{\clocksetA}
\end{align*}
The last statement follows as in the proof of Proposition~\ref{prop:Usem:welldef}. 
\end{proof}

\subsection{Inclusions of universes}

We now show how to model inclusions of universes and 
the codes for type operations on universes described 
in Figure~\ref{fig:universes:dep:prod}. 

\begin{proposition} \label{prop:univ:incl}
 Suppose $\clocksetA \subseteq \clocksetB$ are sets of elements 
 of $\clockfam[\uniquemap_\Gamma]$, and 
 ${t \in \ElCwF{\Gamma}{\Usem{\clocksetA}}}$. There is an element 
 $\inel{\clocksetA}{\clocksetB} (t) \in \ElCwF{\Gamma}{\Usem{\clocksetB}}$
 such that  $\Elsem{\clocksetB}[\cpair{\id\Gamma}{\inel{\clocksetA}{\clocksetB} (t)}]
 = \Elsem{\clocksetA}(t)$. 
 If further $\clocksetB \subseteq  \clocksetC$ then 
  $\inel{\clocksetB}{\clocksetC} (\inel{\clocksetA}{\clocksetB} (t))
  = \inel{\clocksetA}{\clocksetC} (t)$. 
 Moreover, this construction commutes with reindexing in the sense that if $\rho : \Gamma'\to \Gamma$ then 
  $(\inel{\clocksetA}{\clocksetB} (t))[\rho] 
  = \inel{\clocksetA[\rho]}{\clocksetB[\rho]} (t[\rho])$.
\end{proposition}

\begin{proof}
Let $\clocksetBmap : \Gamma \to \Cl^{\Delta'}$ be induced by a given 
surjection $\Delta' \to \clocksetB$. Let $\Delta\subseteq \Delta'$ be the subset mapped to 
$\clocksetA$, and let $\clocksetAmap : \Gamma \to \Cl^{\Delta}$ be the 
map corresponding to the projection. There is a projection 
$\Clproj\Delta{\Delta'} : \Cl^{\Delta'} \to \Cl^{\Delta}$ and so 
$\Usem{\Delta}[\Clproj\Delta{\Delta'}]$ is an object of $\gr{\Delta'}$. Moreover
\begin{equation} \label{eq:pi:Delta:Delta'}
\Usem{\Delta}[\Clproj\Delta{\Delta'}][\clocksetBmap] = \Usem{\clocksetA}
\end{equation}
simply because $\Clproj\Delta{\Delta'} \circ \clocksetBmap = 
\clocksetAmap$. 

 Since $\Usem{\Delta}$ and $\Elsem{\Delta}$ are invariant under clock introduction, so are 
 $\Usem{\Delta}[\Clproj\Delta{\Delta'}]$ and the family
 $\Elsem{\Delta}[\cpair{\Clproj\Delta{\Delta'}\circ \p}{\q}]$. The latter is a family over 
 $\compr{\Cl^{\Delta'}}{\,\Usem{\Delta}[\Clproj\Delta{\Delta'}]}$ in $\BaseCat$, but can 
 be likewise considered a family over $\Usem{\Delta}[\Clproj\Delta{\Delta'}]$ in $\gr{\Delta'}$. 
 By Lemma~\ref{lem:Ufam:class} there is a unique map $\Clin{\Delta}{\Delta'}$ in $\gr{\Delta'}$ such that 
 $\Elsem{\Delta'}[\Clin{\Delta}{\Delta'}] = \Elsem{\Delta}[\cpair{\Clproj\Delta{\Delta'}\circ \p}{\q}]$.
 By (\ref{eq:pi:Delta:Delta'}) then 
 \[
 \Clin{\Delta}{\Delta'}[\clocksetBmap] : \Usem{\clocksetA} \to 
 \Usem{\clocksetB} 
 \]
 is a map between presheaves over the category of elements of $\Gamma$, and
 the above implies 
 \[
 \Elsem{\clocksetB}[\Clin{\Delta}{\Delta'}[\clocksetBmap]] = 
 \Elsem{\clocksetA}
 \]
 We now define 
 \[
  \inel{\clocksetA}{\clocksetB} (t) \eqdef 
  \Clin{\Delta}{\Delta'}[\clocksetBmap](t)
 \]
 Then
 \begin{align*}
  \Elsem{\clocksetB}[\cpair{\id\Gamma}{\inel{\clocksetA}{\clocksetB} (t)}]
  & = \Elsem{\clocksetB}[\cpair{\id\Gamma}{\Clin{\Delta}{\Delta'}[\clocksetBmap](t)}] \\
  & = \Elsem{\clocksetA}[\cpair{\id{\Gamma}}t]
 \end{align*}
 
 The statement on composition of these inclusions follow from the uniqueness statement of Lemma~\ref{lem:Ufam:class}.
 The element $\inel{\clocksetA}{\clocksetB} (t)$ as defined above can be proved independent of the choice
 of $\Delta'$ using a slight generalisation of Lemma~\ref{lem:codes:welldef}, but we omit the argument here. Similar arguments
 can also show that it commutes with reindexing. 
\end{proof}

\subsection{Codes for basic type constructors}

The codes for $\Pi$ and $\Sigma$-types are modelled as morphisms with domain 
\[
\Ufam{\clocksetA} = 
\SigmaSem{\Usem{\clocksetA}}{\Elsem{\clocksetA} \to \Usem{\clocksetA}[\p]}
\]
The family $\Ufam{\clocksetA}$
classifies $\Usem{\clocksetA}$-small families over $\Usem{\clocksetA}$-small
objects in a sense that we now explain.

First note that there is a family 
\[
\ElfamOne{\clocksetA} \eqdef
\Elsem{\clocksetA}[\cpair{\p}{\pi_1(\q)}] \in \Fam{\compr{\Gamma}{\,\Ufam{\clocksetA}}}\] 
and an element 
\[\ev(\pi_2(\q)[\p], \q) \in\ElCwF{\compr{\Gamma}{\,\compr{\Ufam{\clocksetA}}{\ElfamOne{\clocksetA}}}}{\Usem{\clocksetA}[\p\p]},\] 
where $\pi_1, \pi_2$ are the projections out of the $\Sigma$-type. So
\[
\ElfamTwo{\clocksetA} \eqdef
\Elsem{\clocksetA}[\cpair{\p\p}{\ev(\pi_2(\q)[\p], \q)}] \in \Fam{\compr{\Gamma}{\,\compr{\Ufam{\clocksetA}}{\ElfamOne{\clocksetA}}}}
\]

%

Suppose now
$A \in \Fam\Gamma$  and $B \in \Fam{\compr\Gamma A}$ are $\Usem{\clocksetA}$-
small in the sense that there are $\code A$ and $\code B$ satisfying
\begin{align*}
 \code A & \in \ElCwF{\Gamma}{\Usem{\clocksetA}}
 & \Elsem{\clocksetA}[\cpair{\id\Gamma}{\code A}] & = A \\
 \code B & \in \ElCwF{\compr\Gamma A}{\Usem{\clocksetA}[\p]} 
 & \Elsem{\clocksetA}[\cpair{\p}{\code B}] & = B
\end{align*}
Now, 
$\lambdael{\code B} \in \ElCwF{\Gamma}{A \to \Usem{\clocksetA}}$ 
and so
\[
\pair{\code A}{\lambdael{\code B}} \in \ElCwF{\Gamma}{\Ufam{\clocksetA}}
\]
Then 
\begin{align*}
\ElfamOne{\clocksetA}[\cpair{\id\Gamma}{\pair{\code A}{\lambdael{\code B}}}] & = 
\Elsem{\clocksetA}[\cpair{\p}{\pi_1(\q)}][\cpair{\id\Gamma}{\pair{\code A}{\lambdael{\code B}}}] \\
& = \Elsem{\clocksetA}[\cpair{\id\Gamma}{\code A}] \\ & = A 
\end{align*}
and 
\begin{align*}
\ElfamTwo{\clocksetA}[\cpair{\cpair{\p}{\pair{\code A}{\lambdael{\code B}}\p}}{\q}] 
& =
\Elsem{\clocksetA}[\cpair{\p\p}{\ev(\pi_2(\q)[\p], \q)}][\cpair{\cpair{\p}{\pair{\code A}{\lambdael{\code B}}\p}}{\q}] \\
& = \Elsem{\clocksetA}[\cpair{\p}{\ev(\lambdael{\code B}[\p], \q)}]  \\
& = \Elsem{\clocksetA}[\cpair{\p}{\code B}] \\
& = B 
\end{align*}

\begin{proposition} \label{prop:codes:Pi:Sigma}
 Suppose $A, \code A$, $B$ and $\code B$ are as above. There are elements 
\begin{align*}
   \Piel{\clocksetA}{\code A}{\code B}  & \in \ElCwF{\Gamma}{\Usem{\clocksetA}} &
   \Sigmael{\clocksetA}{\code A}{\code B}  & \in \ElCwF{\Gamma}{\Usem{\clocksetA}} 
\end{align*}
 such that 
\begin{align*}
 \Elsem{\clocksetA}\left[\cpair{\id\Gamma}{\Piel{\clocksetA}{\code A}{\code B}}\right] 
 & = \PiSem AB & 
 \Elsem{\clocksetA}\left[\cpair{\id\Gamma}{\Sigmael{\clocksetA}{\code A}{\code B}}\right] 
 & = \SigmaSem AB
\end{align*}
Moreover, if $\rho : \Gamma' \to \Gamma$ then 
\begin{align*}
 (\Piel{\clocksetA}{\code A}{\code B})[\rho] 
 & = \Piel{\clocksetA[\rho]}{\code A[\rho]}{\code B[\rho]} \\
 (\Sigmael{\clocksetA}{\code A}{\code B})[\rho] 
 & = \Sigmael{\clocksetA[\rho]}{\code A[\rho]}{\code B[\rho]}  \\
 \inel{\clocksetA}{\clocksetB} (\Piel{\clocksetA}{\code A}{\code B}) 
 & = \Piel{\clocksetB}{\inel{\clocksetA}{\clocksetB}(\code A)}{\inel{\clocksetA}{\clocksetB} (\code B)} \\
 \inel{\clocksetA}{\clocksetB} (\Sigmael{\clocksetA}{\code A}{\code B}) 
 & = \Sigmael{\clocksetB}{\inel{\clocksetA}{\clocksetB}(\code A)}{\inel{\clocksetA}{\clocksetB} (\code B)}
\end{align*}
\end{proposition}

\begin{proof}
 First note 
 that since $\Usem\Delta$ and $\Elsem\Delta$ are invariant under clock introduction,
 by the closure of these under $\Pi$, $\Sigma$ and 
 reindexing (Corollary~\ref{cor:orthogonality:closed}), 
 so is $\Ufam\Delta$. By a similar argument, 
 also the families $\ElfamOne{\Delta}$, $\ElfamTwo{\Delta}$
 and $\PiSem{\ElfamOne{\Delta}}{\ElfamTwo{\Delta}}$ are 
 invariant under clock introduction. By Lemma~\ref{lem:Ufam:class} 
 there is a unique morphism $\code{\Pi_\Delta} : \Ufam\Delta \to \Usem\Delta$ in $\gr\Delta$ such that 
 \[ \PiSem{\ElfamOne{\Delta}}{\ElfamTwo{\Delta}} = \Elsem\Delta[\code{\Pi_\Delta}]
 \]
 The element
 \[
  \Piel{\clocksetA}{\code A}{\code B} \eqdef 
  \code{\Pi_\Delta}[\clocksetAmap](\pair{\code A}{\lambdael{\code B}})
 \]
 is then welldefined by Lemma~\ref{lem:codes:welldef} and satisfies
\begin{align*}
 \Elsem{\clocksetA}\left[\cpair{\id\Gamma}{\Piel{\clocksetA}{\code A}{\code B}}\right] 
 & = \PiSem{\ElfamOne{\clocksetA}}{\ElfamTwo{\clocksetA}}
 [\cpair{\id\Gamma}{\pair{\code A}{\lambdael{\code B}}}] \\
  & = \PiSem AB
\end{align*}

For the last statement, note that the map $\Clin{\Delta}{\Delta'} : \Usem\Delta[\Clproj\Delta{\Delta'}] \to \Usem{\Delta'}$ 
in $\gr{\Delta'}$ from the proof of Proposition~\ref{prop:univ:incl} induces a map
$\Ufam\Delta[\Clproj\Delta{\Delta'}] \to \Ufam{\Delta'}$ mapping $\pair{A}{B}$ to $\pair{\Clin{\Delta}{\Delta'}(A)}{\Clin{\Delta}{\Delta'} \circ B}$.
This makes the following diagram commute
\[
\begin{diagram}
 \Ufam\Delta[\Clproj\Delta{\Delta'}] \ar{r} \ar[swap]{d}{\code{\Pi_\Delta}[\Clproj\Delta{\Delta'}]} & \Ufam{\Delta'} \ar{d}{\code{\Pi_{\Delta'}}} \\
 \Usem\Delta[\Clproj\Delta{\Delta'}] \ar{r}{\Clin{\Delta}{\Delta'}} & \Usem{\Delta'} 
\end{diagram}
\]
by the uniqueness statement of Lemma~\ref{lem:Ufam:class} because reindexing $\Elsem{\Delta'}$ along either direction gives
$\PiSem{\ElfamOne{\Delta}[\Clproj\Delta{\Delta'}]}{\ElfamTwo{\Delta}[\cpair{\Clproj\Delta{\Delta'}\circ\p}{\q}]}$. From this the final
statement follows. 
\end{proof}

\subsection{Universal quantification over clocks}

We now describe the codes for universal quantification over clocks. Even though clock quantification is modelled
using $\Pi$-types, this is not a special case of Proposition~\ref{prop:codes:Pi:Sigma}, since on the level of codes, 
clock quantification involves a change of universe. 
\begin{proposition}
Suppose $A \in \Fam{\compr\Gamma{\clockfam[\uniquemap]}}$, and that
$\clocksetA$ is a set of elements of $\clockfam[\uniquemap_\Gamma]$. Write $\clocksetA[\p], \q$ for
the union of the set $\clocksetA[\p]$ and $\q$, and suppose 
\[
 \code A \in \ElCwF{\compr{\Gamma}{\clockfam[\uniquemap]}}{\Usem{\clocksetA[\p], \q}}
\]
is such that $A = \Elsem{\clocksetA[\p], \q}[\cpair{\id{\compr\Gamma{\clockfam[\uniquemap]}}}{\code A}]$. 
There is an element 
$\forallel{\clocksetA}{\code A} \in \ElCwF{\Gamma}{\Usem{\clocksetA}}$
such that
\[
 \Elsem{\clocksetA}\left[\cpair{\id\Gamma}{\forallel{\clocksetA}{\code A}}\right] 
 = \PiSem{\clockfam[\uniquemap]}{A}
\] 
Moreover, if $\rho : \Gamma' \to \Gamma$ then 
\[
  (\forallel{\clocksetA}{\code A})[\rho] = \forallel{\clocksetA[\rho]}{\code A[\cpair{\rho \circ \p}{\q}]}
\]
and if $\clocksetA\subseteq\clocksetB$ then
\[
\inel{\clocksetA}{\clocksetB}(\forallel{\clocksetA}{\code A}) 
= \forallel{\clocksetB}{\inel{(\clocksetA[\p], \q)}{(\clocksetB[\p], \q)}(\code A)}
\]
\end{proposition}

\begin{proof}
The universe $\Usem{\Delta, \kappa}$ is an object in $\gr{\Delta, \kappa}$, which means 
that it is a family over $\Cl^{\Delta, \kappa}$. Abusing notation slightly, write 
$\Usem{\Delta, \q}$ for the family over $\compr{\Cl^{\Delta}}{\clockfam[\uniquemap]}$
obtained by reindexing $\Usem{\Delta, \kappa}$ along the isomorphism 
$\Cl^{\Delta, \kappa} \iso \compr{\Cl^{\Delta}}{\clockfam[\uniquemap]}$
and write $\Elsem{\Delta, \q}$ for the result of reindexing the family 
$\Elsem{\Delta, \kappa}$ along the same map. Note that 
$\Usem{\Delta, \q}[\cpair{\clocksetAmap \circ \p}{\q}] = 
\Usem{\clocksetA[\p], \q}$ and 
$\Elsem{\Delta, \q}[\cpair{\cpair{\clocksetAmap \circ \p\p}{\q[\p]}}{\q}] = 
\Elsem{\clocksetA[\p], \q}$.

We will now construct the generic clock quantified family over  
$\compr{\Cl^\Delta}{\PiSem{\clockfam[\uniquemap]}{\Usem{\Delta,\q}}}$ and construct the 
$\forallel{\clocksetA}{\code A}$ using Lemma~\ref{lem:Ufam:class}. First observe that
\[
 \ev(\q[\p], \q) \in 
 \ElCwF{\compr{\compr{\Cl^\Delta}{\PiSem{\clockfam[\uniquemap]}{\Usem{\Delta,\q}}}}{\clockfam[\uniquemap]}}
 {\Usem{\Delta,\q}[\cpair{\p\p}{\q}]}
\]
and so
\[
 \Elsem{\Delta, \q}[\cpair{\cpair{\p\p}{\q}}{\ev(\q[\p], \q)}] \in \Fam{\compr{\compr{\Cl^\Delta}{\PiSem{\clockfam[\uniquemap]}{\Usem{\Delta,\q}}}}{\clockfam[\uniquemap]}}
\]
and 
\[
 \PiSem{\clockfam[\uniquemap]}{\Elsem{\Delta, \q}[\cpair{\cpair{\p\p}{\q}}{\ev(\q[\p], \q)}]} \in \Fam{\compr{\Cl^\Delta}{\PiSem{\clockfam[\uniquemap]}{\Usem{\Delta,\q}}}}
\]
Since $\Usem{\Delta,\kappa}$ and $\Elsem{\Delta,\kappa}$ are invariant under clock introduction and this notion is closed
under reindexing and $\Pi$-types, also $\PiSem{\clockfam[\uniquemap]}{\Usem{\Delta,\q}}$ and
$\PiSem{\clockfam[\uniquemap]}{\Elsem{\Delta, \q}[\cpair{\cpair{\p\p}{\q}}{\ev(\q[\p], \q)}]}$ are invariant under clock introduction,
and so by Lemma~\ref{lem:Ufam:class} there is a map 
\[
\code{\forall^{\Delta}} : \PiSem{\clockfam[\uniquemap]}{\Usem{\Delta,\q}} \to \Usem{\Delta}
\]
in $\gr\Delta$ such that 
\[
\PiSem{\clockfam[\uniquemap]}{\Elsem{\Delta, \q}[\cpair{\cpair{\p\p}{\q}}{\ev(\q[\p], \q)}]} = \Elsem{\Delta}[\code{\forall^{\Delta}}]
\]
By Lemma~\ref{lem:codes:welldef} the element 
\[
\forallel{\clocksetA}{\code A} \eqdef \code{\forall^{\Delta}}[\clocksetAmap](\lambdael{\code A})
\] 
is welldefined and satifies
\begin{align*}
 \Elsem{\clocksetA}\left[\cpair{\id\Gamma}{\forallel{\clocksetA}{\code A}}\right] 
 & = \PiSem{\clockfam[\uniquemap]}{\Elsem{\clocksetA[\p], \q}[\cpair{\cpair{\p\p}{\q}}{\ev(\q[\p], \q)}]} [\cpair{\id{\Gamma}}{\lambdael{\code A}}] \\
 & = \PiSem{\clockfam[\uniquemap]}{\Elsem{\clocksetA[\p], \q}[\cpair{\cpair{\p\p}{\q}}{\ev(\q[\p], \q)}][\cpair{\cpair{\id{\Gamma}}{\lambdael{\code A}}\circ\p}{\q}]} \\
 & = \PiSem{\clockfam[\uniquemap]}{\Elsem{\clocksetA[\p], \q}[\cpair{\cpair{\p}{\q}}{\ev(\lambdael{\code A}[\p], \q)}]} \\
 & = \PiSem{\clockfam[\uniquemap]}{\Elsem{\clocksetA[\p], \q}[\cpair{\id{}}{\code A}]} \\
 & = \PiSem{\clockfam[\uniquemap]}{A} 
\end{align*}
This construction is clearly closed under reindexing, and the last statement can be proved similarly to the last statement of 
Proposition~\ref{prop:codes:Pi:Sigma}.
\end{proof}

\subsection{Codes for the later modalities}
\label{sec:codes-for-later-modalities}

\begin{proposition}
 There is a mapping associating $\kappa \in \clocksetA$ and  $t \in \ElCwF{\Gamma}{\laterfam{\kappa}\Usem{\clocksetA}}$ to  
 \[
 \laterel{\kappa}(t) \in  \ElCwF{\Gamma}{\Usem{\clocksetA}}
 \]
 such that if $\Gamma'$ is a telescope of length $m$ over $\Gamma$ and 
 $\code A \in \ElCwF{\compr\Gamma{\Gamma'}}{\Usem{\clocksetA[\p^m]}}$ and 
 $A = \Elsem{\clocksetA[\p^m]}[\cpair{\id{\compr{\Gamma}{\Gamma'}}}{\code A}]$  and 
 $\xi  \in \delayedEl{\Gamma}{\Gamma'}{\kappa}$
 then 
 \[
   \Elsem{\clocksetA}\left[\cpair{\id\Gamma}{\laterel{\kappa}(\nextel{\kappa} \xi . \code A)}\right] = 
   \laterfam{\kappa}\xi . A 
 \]
 Moreover, if $\rho : \Gamma'' \to \Gamma$
 then $\laterel{\kappa}(t)[\rho] = \laterel{\kappa[\rho]}(t[\rho])$, and if 
 $\clocksetA\subseteq\clocksetB$ then
\[
\inel{\clocksetA}{\clocksetB}(\laterel{\kappa}(t)) 
= \laterel{\kappa}(\nextel{\kappa}(t).\inel{\clocksetA[\p]}{\clocksetB[\p]}(\q))
\]
\end{proposition}

A few of the typings of the proposition need to be explained. The element $\nextel{\kappa} \xi . \code A$
is a priori an element of $\laterfam{\kappa}\xi . (\Usem{\clocksetA}[\p^m])$ but the latter family
equals $\laterfam{\kappa}\Usem{\clocksetA}$ and so $\laterel{\kappa}(\nextel{\kappa} \xi . \code A)$
is well formed. In the last equation, $t$ is considered a delayed sequence of elements in 
$\delayedEl{\Gamma}{\Usem{\clocksetA}}{\kappa}$, and 
$\inel{\clocksetA[\p]}{\clocksetB[\p]}(\q) \in \ElCwF{\compr{\Gamma}{\,\Usem{\clocksetA}}}{\Usem\clocksetB[\p]}$ and so 
\[
 \nextel{\kappa}(t).\inel{\clocksetA[\p]}{\clocksetB[\p]}(\q) \in \ElCwF{\Gamma}{\laterfam{\kappa}(t).\,\Usem\clocksetB[\p]} 
 =  \ElCwF{\Gamma}{\laterfam{\kappa}\Usem\clocksetB}
\]
making the right hand side of the final equation well formed. 

\begin{proof}
 Suppose note that any $\hat\kappa \in \Delta$ defines an element
 $\hat\kappa \in \ElCwF{\Cl^\Delta}{\clockfam[\uniquemap]}$ essentially by projection. Since
 $\q \in \ElCwF{\compr{\Cl^\Delta}{\laterfam{\hat\kappa} \Usem\Delta}}{(\laterfam{\hat\kappa}\Usem\Delta)[\p]}$
 it defines a delayed sequence of elements 
 $(q) \in \delayedEl{\compr{\Cl^\Delta}{\laterfam{\hat\kappa} \Usem\Delta}}{\Usem\Delta[\p]}{\hat\kappa[\p]}$. 
 Since moreover, $\Elsem\Delta[\cpair{\p\p}{\q}]$ is a family in
 $\Fam{\compr{\compr{\Cl^\Delta}{\laterfam{\hat\kappa} \Usem\Delta}}{\,\Usem\Delta[\p]}}$ we can define
 \[
  \laterfam{\hat\kappa[\p]}(\q) . (\Elsem\Delta[\cpair{\p\p}{\q}]) \in \Fam{\compr{\Cl^\Delta}{\laterfam{\hat\kappa} \Usem\Delta}}
 \]
 By Proposition~\ref{prop:laterfam:inv:clock:intro}, both $\laterfam{\hat\kappa} \Usem\Delta$ and 
 $\laterfam{\hat\kappa[\p]}(\q) . (\Elsem\Delta[\cpair{\p\p}{\q}])$ are invariant under clock introduction and so
 by Lemma~\ref{lem:Ufam:class} there is a morphism $\code{\laterfam{\hat\kappa}} : \laterfam{\hat\kappa} \Usem\Delta \to \Usem\Delta$
 in $\gr\Delta$ such that $\Elsem\Delta[\code{\laterfam{\hat\kappa}}] = \laterfam{\hat\kappa[\p]}(\q) . (\Elsem\Delta[\cpair{\p\p}{\q}])$.
 Using this we define 
 \[
 \laterel{\kappa}(t) \eqdef \code{\laterfam{\hat\kappa}}[\clocksetAmap] (t)
 \]
 where $\hat\kappa$ is an element in $\Delta$ mapped to $\kappa$. This can be proved independent of choice of $\Delta$
 and surjection $\Delta \to \clocksetAmap$ and $\hat\kappa$ using arguments as in the proof of 
 Lemma~\ref{lem:codes:welldef}. Then
\begin{align*}
 \Elsem{\clocksetA}\left[\cpair{\id\Gamma}{\laterel{\kappa}(\nextel{\kappa} \xi . \code A)}\right] 
 & = (\laterfam{\kappa[\p]}(\q) . \Elsem{\clocksetA}[\cpair{\p\p}{\q}])
 \left[\cpair{\id\Gamma}{\nextel{\kappa} \xi . \code A}\right] \\
 & = \laterfam{\kappa}(\nextel{\kappa} \xi . \code A) . (\Elsem{\clocksetA}[\cpair{\p\p}{\q}]) \\
 & = \laterfam{\kappa}\xi[\nextel{\kappa} \xi . \code A] . (\Elsem{\clocksetA}[\cpair{\p\p}{\q}][\cpair{\p^{m+1}}{\q}])
\end{align*}
where in the last step $m$ is the length of $\xi$. Now by Theorem~\ref{thm:later:del:subst:rules}.\ref{item:beta}, the latter equals
\begin{align*}
 \laterfam{\kappa}\xi . (\Elsem{\clocksetA}[\cpair{\p\p}{\q}][\cpair{\p^{m}}{\code A}])
 & = \laterfam{\kappa}\xi . (\Elsem{\clocksetA}[\cpair{\p^{m+1}}{\code A}]) \\
 & = \laterfam{\kappa}\xi . (\Elsem{\clocksetA[\p^m]}[\cpair{\id{}}{\code A}]) \\
 & = \laterfam{\kappa}\xi .A
\end{align*}
For the final statement, can be proved using the uniqueness statement of Lemma~\ref{lem:Ufam:class}.
\end{proof}

\section{Interpreting syntax}
\label{sec:interp:syntax}

The previous sections define the semantic structure of the model corresponding to each of the 
constructions of \gdtt. One can use this to define an interpretation of the syntax into the model,
as we briefly sketch here. As is well known, defining interpretation of dependent type theories is
not a simple procedure. In particular, the proof of welldefinedness of the interpretation can not
be separated from the proof of soundness. Here we follow the approach of 
\citeasnoun{Hofmann:syntax-and-semantics}, which 
first defines an interpretation of (pre-) contexts, types and term as a partial function, then proves 
that this function is defined on all wellformed judgements. To define the partial interpretation 
function, syntax must be annotated with typing information, meaning that the syntax interpreted
is not quite the syntax usually presented for dependent type theory. For example, $\lambda$-abstractions
must be annotated with not just the type of the variable being abstracted, but also with the target
type of the function created (which is a dependent family $(x.A)$). Likewise, application is annotated
both with the domain type and with the dependent codomain type. 

Definedness of the interpretation of well formed judgements is then proved by induction on the 
structure of judgements. This must be done simultaneously with the proof of soundness of the 
interpretation and with the proof of a substitution lemma. We now sketch how each of these 
ingredients must be adapted to interpret \gdtt. 

The annotation of terms and types must be extended to the new constructions. Universal quantification
over clocks is interpreted as a $\Pi$-type, and the annotations must therefore be similar to those of $\Pi$-types.
Terms like $\next{\kappa}$ and $\fix\kappa$ must be annotated with the type at which they are applied.
Recall from Section~\ref{sec:prev} that $\prev$ is compiled away in an initial step using $\force$. 
The constant $\force$ must be annotated with the dependent type ($\kappa . A$) at which it is
applied. Type operations on the universe must be annotated with the context $\Delta$ at which they are 
applied. The type constructor $\later\kappa\xi . A$ must be annotated with the types in the telescope
and likewise for $\next\kappa\xi.t$. A notion of pre- delayed substitutions must be defined and these
must be (partially) interpreted as delayed sequences of elements.

Once the partial interpretation function has been interpreted, the welldefinedness of the interpretation
of wellformed judgements must be proved by induction on judgements simultaneously with soundness
and a substitution lemma. To this sequence of lemmas must be added the statement that the 
interpretation of any type is invariant under clock introduction.

The substitution lemma is mostly standard. In particular, the notation of substitution 
between contexts can be defined essentially as usual
\begin{mathpar}
\inferrule*{\,}{\cdot : \Gamma \to \empctx} \and
\inferrule*{\rho : \Gamma \to \Gamma' \\ \hastype{\Gamma}{t}{A\rho}}
{\rho[x \mapsto t] :\Gamma \to \Gamma', x : A} \and
\inferrule*{\rho : \Gamma \to \Gamma' \\ \hastype{\Gamma}{\kappa'}{\clocktype}}
{\rho[\kappa \mapsto \kappa'] : \Gamma \to \Gamma', \kappa : \clocktype} 
\end{mathpar}
and likewise the notion of substitution is defined in the standard way. Note in particular that
this means that $(\univ\Delta)\rho = \univ{\Delta\rho}$. Substitution on a delayed substitution $\xi$ is defined
by distributing the interpretation over the terms in $\xi$. The substitution lemma is as follows.

\begin{lemma}
  \label{lem:substitution}
  If $\rho : \Gamma' \to \Gamma$ is a substitution, then 
  \begin{itemize}
  \item if $\wftype{\Gamma}{A}$ also $\wftype{\Gamma'}{A\rho}$ and $\den{\Gamma' \vdash A\rho} 
  = \den{\Gamma \vdash A}[\den{\rho}]$. 
  \item if $\hastype{\Gamma}t{A}$ also $\hastype{\Gamma'}{t\rho}{A\rho}$ and 
  $\den{\Gamma' \vdash t\rho} = \den{\Gamma \vdash t}[\den\rho]$
  \item if $\dsubst{\kappa}{\xi}{\Gamma}{\Gamma''}$ then also $\dsubst{\kappa\rho}{\xi\rho}{\Gamma'}{\Gamma''\rho}$
  and $\den{\xi\rho} = \den\xi[\den\rho]$.  
  \end{itemize}
\end{lemma}

\section{Recovering the categories \texorpdfstring{$\grold{\Delta}$}{GR(Delta)}}
\label{sec:discussion}

In this final section we discuss the relation to the family of categories
$\grold\Delta$ defined in previous work by the authors~\cite{Bizjak-Moegelberg:clocks-model}.
As mentioned in the introduction, this gives a model of guarded recursion with multiple clocks up 
to a coherence problem. We first recall the definition of the categories $\grold\Delta$ (note that the notation
for this differs from the $\gr\Delta$ used in the paper only by the choice of font). 

For a finite set of clock variables $\Delta$ the category $\grold{\Delta}$ is the category of presheaves on the poset 
$\pindex{\Delta}$.
The elements of this poset are pairs $(E, \delta)$ where $E$ is an equivalence relation on $\Delta$ and $\delta : \Delta \to \NN$ is a function which respects the equivalence relation $E$.
The order on $\pindex{\Delta}$ is defined so that $(E, \delta) \leq (E', \delta')$ if $E$ is coarser than $E'$ 
(i.e., $E'\subseteq E$ as subsets of $\Delta \times \Delta$) and $\delta$ is pointwise less than $\delta'$.
The idea behind this poset is that $\delta$ records how much time is left on each clock, and the equivalence relation $E$ states which clocks are identified.
The order is defined so that we can pass to a state where there is less time available on each clock, but we can also identify different clocks, i.e., make the equivalence relation coarser.

The intention of the categories $\grold\Delta$ is that types and terms in clock variable context $\Delta$ should be modelled in $\grold\Delta$.
In the present paper, the corresponding fragment is modelled in $\gr\Delta$ with the restriction that families 
must be invariant under clock introduction. Thus the next theorem states that the two models are equivalent.
\begin{theorem}
  \label{thm:equivalence-of-models}
  Let $\Delta$ be a finite set of clocks.
  The full subcategory of $\gr\Delta$ on objects invariant under clock introduction is equivalent to the category $\grold{\Delta}$.
\end{theorem}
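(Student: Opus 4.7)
The plan is to pass through the equivalence between slices of presheaf categories and presheaf categories on categories of elements. Since $\BaseCat = \sets^{\TimeCat}$ is a category of covariant presheaves, there is a standard equivalence $\BaseCat/\Cl^\Delta \simeq \sets^{\elms{\TimeCat}{\Cl^\Delta}}$, where $\elms{\TimeCat}{\Cl^\Delta}$ has objects $((\FSA,\delta),\gamma)$ with $\gamma\colon\Delta\to\FSA$ and morphisms $((\FSA,\delta),\gamma)\to((\FSB,\delta'),\gamma')$ given by maps $\sigma\colon(\FSA,\delta)\to(\FSB,\delta')$ in $\TimeCat$ with $\sigma\gamma=\gamma'$. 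Under this equivalence, the condition of Definition~\ref{def:invariant-under-clock-introduction}, combined with Lemma~\ref{lem:fibrations-iff-restrictions-pullback}, translates to: the corresponding presheaf sends every morphism whose underlying $\TimeCat$-map is a clock inclusion to an isomorphism.

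Let $\int_s$ denote the full subcategory of $\elms{\TimeCat}{\Cl^\Delta}$ on objects with $\gamma$ surjective. The first step is to show $\int_s \simeq \opcat{\pindex{\Delta}}$, via the assignment $((\FSA,\delta),\gamma)\mapsto(E_\gamma,\delta\circ\gamma)$, where $E_\gamma$ identifies $\kappa,\kappa'$ precisely when $\gamma(\kappa)=\gamma(\kappa')$. For a morphism $\sigma$ in $\int_s$, surjectivity of $\gamma$ determines $\sigma$ uniquely from $\gamma,\gamma'$, and the conditions $\sigma\gamma=\gamma'$ and $\delta'\sigma\leq\delta$ amount exactly to $E_\gamma\subseteq E_{\gamma'}$ together with $\delta'\gamma'\leq\delta\gamma$; that is, $(E_{\gamma'},\delta'\gamma')\leq(E_\gamma,\delta\gamma)$ in $\pindex{\Delta}$. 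Essential surjectivity follows by representing each $(E,\delta)\in\pindex{\Delta}$ by the canonical quotient $\Delta\twoheadrightarrow\Delta/E$. Hence $\sets^{\int_s}\simeq\sets^{\opcat{\pindex{\Delta}}}=\gr{\Delta}$.

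It remains to identify the invariant presheaves on $\elms{\TimeCat}{\Cl^\Delta}$ with $\sets^{\int_s}$. I would define a reflector $r\colon\elms{\TimeCat}{\Cl^\Delta}\to\int_s$ by $((\FSA,\delta),\gamma)\mapsto((\gamma[\Delta],\restrict\delta{\gamma[\Delta]}),\gamma_s)$, where $\gamma_s$ is the corestriction of $\gamma$; functoriality holds because $\delta'\gamma'=\delta'\sigma\gamma\leq\delta\gamma$ whenever $\sigma$ is a morphism in $\elms{\TimeCat}{\Cl^\Delta}$. Writing $\iota$ for the inclusion $\int_s\hookrightarrow\elms{\TimeCat}{\Cl^\Delta}$, one has $r\iota=\mathrm{id}$, and the set inclusion $\gamma[\Delta]\hookrightarrow\FSA$ yields a counit $\iota r\Rightarrow\mathrm{id}$, exhibiting $\iota\dashv r$. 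The desired equivalence is restriction $\iota^*$ with inverse $r^*=(-)\circ r$. For any $B$, the presheaf $r^*B$ is invariant because $r$ collapses every clock-introduction morphism to an identity (the image of $\gamma$ and the restriction of $\delta$ to it are unchanged when a fresh clock is introduced), and $\iota^* r^* B = B$ strictly; conversely, for an invariant $A$, applying $A$ to the counit $\iota rX\to X$ yields a natural isomorphism $r^*\iota^*A\cong A$.

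The main delicate point is that invariance is stated in Definition~\ref{def:invariant-under-clock-introduction} only for single-clock inclusions, whereas the counit $\iota rX\to X$ may insert all clocks in $\FSA\setminus\gamma[\Delta]$ simultaneously, each with its own assigned $\delta$-value. This is resolved by factorising the inclusion as a finite composite of single-clock inclusions and iterating the pullback characterisation of Lemma~\ref{lem:fibrations-iff-restrictions-pullback} to lift single-clock invariance along the composite; the freedom to pick $n$ for each new clock in the definition is precisely what makes this iteration work.
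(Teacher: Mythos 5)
Your argument follows the same route as the paper's own proof: pass from $\BaseCat/\Cl^\Delta$ to presheaves on the category of elements of $\Cl^\Delta$, identify the full subcategory on surjective $\gamma$ with $\opcat{\pindex\Delta}$ via the kernel-pair/quotient correspondence, exhibit the inclusion $\iota$ as left adjoint to the image-corestriction reflector $r$ (the paper's $g\dashv f$), and observe that the induced adjunction $\iota^*\dashv r^*$ on presheaves restricts to an equivalence precisely on the objects where the counit inverts, which Lemma~\ref{lem:fibrations-iff-restrictions-pullback} identifies as the invariant ones. The only genuine addition in your write-up is the explicit treatment in the last paragraph of the fact that the counit's component inserts several fresh clocks at once, whereas Definition~\ref{def:invariant-under-clock-introduction} and Lemma~\ref{lem:fibrations-iff-restrictions-pullback} speak of single-clock inclusions; the paper cites the lemma without comment, while you correctly note that one factors the inclusion into single-clock steps and pastes the resulting pullback squares. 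That detail is worth having on record, but the overall strategy and all the essential constructions are identical to the paper's.
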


\begin{proofsketch}
  \newcommand{\site}[1]{\mathcal{R}\left(#1\right)}
  \newcommand{\qindex}[1]{\mathcal{S}\left(#1\right)}
  Recall that $\gr\Delta$ is defined as the category of covariant presheaves on the category of elements of $\Cl^\Delta$, for which we
  write $\site{\Delta}$ in this proof.
%
  The indexing poset $\pindex{\Delta}$
  is equivalent to the preorder $\qindex{\Delta}^{\text{op}}$ where $\qindex{\Delta}$ is the full subcategory of $\site{\Delta}$ on those objects $(\FSA,\delta, f)$ where $f$ is surjective.
  Indeed, this equivalence follows from the fact that every function $f$ on $\Delta$ determines an equivalence relation on $\Delta$, and every equivalence relation $E$ on $\Delta$ gives rise to the surjective quotient function $q : \Delta \to \Delta/E$.
  Straightforward calculations show this extends to the claimed equivalence of the poset $\pindex{\Delta}$ and the preorder $\qindex{\Delta}^{\text{op}}$.

  Thus we have that $\grold{\Delta}$ is equivalent to the category of covariant presheaves on $\qindex{\Delta}$.
  By definition there is an inclusion functor $i : \qindex{\Delta} \to \site{\Delta}$ which gives rise, by precomposition, to a functor $i^* : \gr\Delta \to \grold{\Delta}$.
  Moreover, there is a functor $g : \site{\Delta} \to \qindex{\Delta}$ which maps $(\FSA,\delta,f)$ to $\left(f[\Delta], \restrict{\delta}{f[\Delta]},f\right)$.
  This functor gives rise to a functor $g^* : \grold{\Delta} \to \gr\Delta$.
  It is easy to see $g \comp i = \id{}$ and that there is a natural transformation $\eps : i \comp g \to \id{}$ whose component at $(\FSA,\delta,f)$ is given by the inclusion $f[\Delta] \to \FSA$.
  These transformations define an adjunction $i \dashv g$.

  Thus $g^* \dashv i^*$ as well; first from $g \comp i = \id{}$ we have $i^* \comp g^* = \id{}$, and so the unit $\eta^*$ of the adjunction is the identity natural transformation, and, second, from the counit of the adjunction $i \dashv g$ we define the counit $\eps^*$ of the adjunction $g^* \dashv i^*$ pointwise, as in
  \begin{align*}
    \left(\eps^*_X\right)_{(\FSA,\delta,f)} = X\left(\eps_{(\FSA,\delta,f)}\right).
  \end{align*}
  
  It is standard that an adjunction restricts to an equivalence of full subcategories $\CC$ of $\grold{\Delta}$ and $\DD$ of $\gr\Delta$ on objects where the unit and the counit are isomorphisms, respectively.
  Because the unit $\eta^*$ of the adjunction $g^* \dashv i^*$ is an isomorphism the category $\CC$ is $\grold{\Delta}$.
  
  The category $\DD$ on the other hand is the category of those objects $X \in \gr\Delta$ where for every $(\FSA,\delta,f) \in \site{\Delta}$ the component of the counit
  \begin{align*}
    \left(\eps^*_X\right)_{(\FSA,\delta,f)} = X\left(\eps_{(\FSA,\delta,f)}\right) = X(\iota)
  \end{align*}
  where $\iota : \left(f[\Delta], \restrict{\delta}{f[\Delta]},f\right) \to (\FSA,\delta,f)$ is the inclusion, is an isomorphism.
  By Lemma~\ref{lem:fibrations-iff-restrictions-pullback} this holds precisely when the object $X$ is invariant under clock introduction.
  Hence, the adjunction $g^* \dashv i^*$ restricts to the equivalence of $\grold{\Delta}$ and the full subcategory of $\gr\Delta$ on objects invariant under clock introduction.
\end{proofsketch}

Notice, however, that the categories in Theorem~\ref{thm:equivalence-of-models} are \emph{not} isomorphic.
This is the key to achieving preservation of structure, chiefly dependent products, in the present model, up to \emph{equality}, as opposed to only up to isomorphism, as in the previous model~\cite{Bizjak-Moegelberg:clocks-model}.

\section*{Acknowledgements}

We thank Patrick Bahr, Lars Birkedal, Hans Bugge Grathwohl and Bassel Mannaa for helpful discussions.
We thank the anonymous reviewers for helpful suggestions which led to a major revision significantly improving the paper. 
Bizjak was supported by the ModuRes Sapere Aude Advanced Grant from The Danish Council for Independent Research for the Natural Sciences (FNU).
M{\o}gelberg was supported by a research grant (13156) from VILLUM FONDEN and DFF-Research Project 1 Grant no.
4002-00442, from The Danish Council for Independent Research for the Natural Sciences (FNU).

\bibliography{paper}

\end{document}